\documentclass[aos]{imsart}

\RequirePackage{amsthm,amsmath,amsfonts,amssymb}
\RequirePackage[numbers,sort&compress]{natbib}
\RequirePackage[colorlinks,citecolor=blue,urlcolor=blue]{hyperref}
\RequirePackage{graphicx}
\startlocaldefs
\usepackage{booktabs}
\theoremstyle{plain}

\newtheorem{theorem}{Theorem}[section]
\newtheorem{lemma}[theorem]{Lemma}
\newtheorem{corollary}[theorem]{Corollary}
\newtheorem{proposition}{Proposition}[section]
\theoremstyle{definition}
\newtheorem{definition}{Definition}
\newtheorem{example}{Example}
\newtheorem{assumption}{Assumption}
\newtheorem{remark}{Remark}

\newcommand{\continuedexamplename}{}
\newtheorem*{continuedexampleinner}{\continuedexamplename}

\endlocaldefs

\begin{document}

\begin{frontmatter}
\title{Small area learning}
\runtitle{Small area learning}

\begin{aug}
%%%%%%%%%%%%%%%%%%%%%%%%%%%%%%%%%%%%%%%%%%%%%%%
%% Only one address is permitted per author. %%
%% Only division, organization and e-mail is %%
%% included in the address.                  %%
%% Additional information such as            %%
%% identifying the corresponding author must %%
%% be included in in the Acknowledgments     %%
%% section if necessary.                     %%
%% ORCID can be inserted by command:         %%
%% \orcid{0000-0000-0000-0000}               %%
%%%%%%%%%%%%%%%%%%%%%%%%%%%%%%%%%%%%%%%%%%%%%%%
\author[A]{\fnms{Katarzyna}~\snm{Reluga} \ead[label=e1]{katarzyna.reluga@hu-berlin.de}\orcid{0000-0001-9651-5030}}
\author[B]{\fnms{Nicola}~\snm{Salvati}\ead[label=e2]{nicola.salvati@unipi.it}\orcid{0000-0002-4160-9387}}
\author[C]{\fnms{Setareh}~\snm{Ranjbar}\ead[label=e3]{setareh.ranjbar@chuv.ch}\orcid{0000-0002-0238-2976}}
\author[D]{\fnms{Nikos}~\snm{Tzavidis}\ead[label=e4]{n.tzavidis@soton.ac.uk}\orcid{0000-0002-8413-8095}}
\author[E]{\fnms{Mark}~\snm{van der Laan}\ead[label=e5]{laan@berkeley.edu}}
%%%%%%%%%%%%%%%%%%%%%%%%%%%%%%%%%%%%%%%%%%%%%%
%% Addresses                                %%
%%%%%%%%%%%%%%%%%%%%%%%%%%%%%%%%%%%%%%%%%%%%%%
\address[A]{Chair of AI in Business and Economics, Humboldt University of Berlin, Germany\printead[presep={ ,\ }]{e1}}

\address[B]{Department of Economics and Management, 
University of Pisa, Italy\printead[presep={,\ }]{e2}}

\address[C]{Department of Psychiatry \& Lausanne University Hospital,
University of Lausanne, Switzerland, \printead[presep={,\ }]{e3}}

\address[D]{Department of Social Statistics and Demography \& Southampton Statistical Sciences Research Institute,
University of Southampton, UK\printead[presep={,\ }]{e4}}

\address[E]{School of Public Health, University of California, Berkeley, USA\printead[presep={,\ }]{e5}}
\end{aug}

\begin{abstract}
Small area estimation (SAE) traditionally borrows strength across domains through a prespecified working model. We introduce \emph{small area learning} (SAL), a learner-agnostic framework that formulates SAE as a multi-target semi-supervised learning problem and treats outcome prediction as a nuisance task. Areas may differ through unrestricted covariate shift and structured conditional shift, with information shared through common and area-specific components. We develop standard and cross-fitted estimators for low-complexity and flexible learners, unifying synthetic, model-based, and fully residual-corrected estimation. For area means, we establish consistency and asymptotic normality and provide variance estimators and confidence intervals. We extend the theory to settings with only aggregated population covariates and characterize the resulting aggregation bias. SAL reveals a bias--variance trade-off: partial residual correction can reduce mean squared error, whereas full correction yields estimators robust to working-model misspecification, potentially at higher variance. Simulations yield RMSE reductions of 34\%--49\% relative to direct estimation, with the best configuration varying across settings. An empirical application favours a cross-fitted estimator using a flexible learner under the chosen validation criterion. The numerical studies also illustrate how shrinkage can produce overly optimistic confidence intervals at the expense of strong modeling assumptions. 

%The abstract should summarize the contents of the paper.It should be clear, descriptive, self-explanatory and not longerthan 200 words. It should also be suitable for publication in abstracting services. Formulas should be used as sparingly as possible within the abstract. The abstract should not make reference to results, bibliography or formulas in the body of the paper---it should be self-contained. This is a sample input file.  Comparing it with the output it generates can show you how to produce a simple document of your own.
\end{abstract}

\begin{keyword}[class=MSC]
\kwdgroup[type=primary]{\kwd{62G05}}
%\kwd{62D05}}
\kwdgroup[type=secondary]{\kwd{62D05}}
\end{keyword}

\begin{keyword}
\kwd{machine learning}
\kwd{semi-supervised learning}
\kwd{small area estimation}
\kwd{transfer learning}
\end{keyword}

\end{frontmatter}
%%%%%%%%%%%%%%%%%%%%%%%%%%%%%%%%%%%%%%%%%%%%%%
%% Please use \tableofcontents for articles %%
%% with 50 pages and more                   %%
%%%%%%%%%%%%%%%%%%%%%%%%%%%%%%%%%%%%%%%%%%%%%%
%\tableofcontents

%SAE
%Predictive theory
%Model-based
%Machine Learning
%Why this paper?
%How do we do it?

\section{Introduction}
\subsection{Motivation}\label{sec:motivation}

Reliable statistics at disaggregated geographical and socio-demographic levels
are essential for policy design and resource allocation, yet such domains often have small sample sizes. Direct survey estimators based only on
domain-specific observations, such as those of \cite{horvitz} and
\cite{hajek1958theory} (see Section~\ref{sec:stat_setup}), may therefore have high variability. Small area estimation (SAE) addresses this problem by borrowing information across domains to improve the precision of domain-specific estimates.

The main goal of SAE is to estimate target parameters such as means, medians, totals, and poverty indicators for individual areas or subpopulations. To overcome limited within-area sample sizes, survey statisticians have traditionally modelled the relationship between the variable of interest and auxiliary covariates, drawing on prediction theory for finite populations
\citep{cassel1977foundations,valliant2000finite,bolfarine2012prediction,
morales2021course}. The observed population values are viewed as realizations from a superpopulation, or working model, which is then used to predict nonsampled outcomes and estimate the finite-population target.

Since the true data-generating process may involve complex relationships between covariates and stochastic errors, simple working models can provide only an approximation. Early developments included generalized regression estimators \citep{deville1992calibration}, while much of subsequent SAE methodology has relied on linear mixed models, generalized linear mixed models, and their Bayesian counterparts. Seminal examples are the area-level model of \cite{fay1979estimates} and the unit-level model of \cite{battese1988error}, which remain central to model-based SAE.

Under model-based approaches, prediction and inference for the target parameter, including mean squared error estimation and confidence interval construction, typically depend on the adequacy of the working model. If important features of the conditional mean or error structure are misspecified, theoretical guarantees and conclusions from numerical studies may no longer hold
\citep[see, for example, the discussion in][]{jiang2020robust}. To reduce sensitivity to model misspecification, the SAE literature has considered
relaxing model assumptions \citep{kubo2017,diallo2018small,graf2019generalized}, introducing data transformations \citep{sugasawa2017transforming,RojasEtal2020}, benchmarking
\citep{pfeffermann2013new}, and robust
\citep{chambers2006m,sinha2009robust} or semiparametric methods
\citep{CSDALombardiaSperlich,opsomer2008splines}.

In many SAE applications, the working model is primarily a device for predicting the nonsampled part of the finite population rather than an object of substantive interest in itself. Our central premise is therefore that the emphasis should shift from selecting a particular parametric model to selecting an estimator that performs well for the target parameter. This perspective is familiar in semiparametric statistics and semi-supervised inference (see Sections~\ref{sec:lit_ssl}--\ref{sec:lit_review} for a literature review), where an outcome regression used to construct the target estimator is termed a nuisance function. Here, ``nuisance'' describes its supporting role; its estimation remains crucial to the accuracy and validity of inference for the target. Once the target and the assessment criterion are specified, nuisance learning can be carried out by any procedure, from linear regression to flexible machine learning, that satisfies the conditions required for valid inference. The relevant question is then not which working model is ``correct'', but which learner, possibly selected data-adaptively, yields the best performance of the resulting target estimator.

This perspective naturally opens SAE to flexible learning methods, which have demonstrated strong predictive performance across a wide range of applications \citep{athey2018impact}. It also motivates our notion of \emph{small area learning}: rather than attaching SAE to a particular regression, mixed model or a machine learning algorithm, we formulate the estimation problem first and allow the predictive
component to be supplied by a broad class of learners. The inferential problem then becomes one of identifying conditions under which such learners can be used without compromising valid area-level inference. As we show below, these conditions can be formulated either through low-complexity restrictions or through cross-fitting.

\subsection{Literature review}\label{sec:literature_review}

We outline four strands of the literature relevant to our work.

\subsubsection{Machine learning for SAE}
\label{sec:lit_ML_SAE}

A growing SAE literature replaces classical mixed models by, or combines them
with, flexible learners, including tree-based methods
\citep{bilton2017classification,trees_Peru,merf_aos,
zadlo2025integration,frink2026small,viljanen2022machine} and gradient
boosting \citep{messer2026gradient}. A related strand uses covariate-powered
empirical Bayes methods to combine noisy direct estimates with flexible
covariate-based predictions, generalizing the shrinkage logic of the
Fay--Herriot model \citep{ignatiadis2019covariate,ignatiadis2025empirical}.
These contributions typically develop particular machine-learning or
shrinkage procedures for small-area prediction, often together with methods for uncertainty assessment. Our perspective is reversed: rather than starting with a particular learning algorithm and adapting it to SAE, we formulate SAE itself as a multi-target semi-supervised learning problem. The prediction rule is then a nuisance component that may be chosen data-adaptively provided that the resulting learner satisfies the conditions required for inference, either under low-complexity restrictions or with cross-fitting.

%Our perspective is reversed: rather than starting with a particular learning algorithm and adapting it to SAE, we formulate SAE itself as a multi-area learning problem. The prediction rule is then a nuisance component that may be chosen data-adaptively and is required only to satisfy the low-complexity or cross-fitting conditions used for inference.

\subsubsection{Machine learning in design-based inference}\label{sec:ML_design-based} 

A substantial design-based literature extends the generalized-difference/GREG
principle to flexible learners, including neural networks
\citep{montanari2005nonparametric} and tree-based methods
\citep{toth2011building,Mehdi,McConville}. More recently,
\cite{an2026agnostic,dagdoug2026machine} developed learner-agnostic,
cross-fitted model-assisted methods under complex sampling designs. Their
framework is substantially more general than ours with respect to the sampling
design, whereas SAL addresses a different problem: superpopulation inference
for multiple area-specific targets with explicit cross-area heterogeneity.

%A substantial design-based literature extends the generalized-difference principle by replacing the linear working model with increasingly flexible learners, including neural networks \citep{montanari2005nonparametric} and tree-based methods \citep{toth2011building,Mehdi,McConville}. More recently, \cite{an2026agnostic,dagdoug2026machine} developed learner-agnostic model-assisted methods based on cross-fitting, thereby allowing general prediction algorithms while accommodating complex sampling designs. The use of cross-fitted residual correction is closely related to the debiasing and orthogonalization ideas that re-emerged in the econometric literature \citep{chernozhukov2014gaussian}. Our approach shares the generalized-difference logic of this literature but differs in scope. While \cite{an2026agnostic} develops a design-based, learner-agnostic theory for a single finite-population target under complex sampling, SAL is formulated in a superpopulation framework for multiple area-specific targets and explicitly models cross-area heterogeneity. Their model-based and model-assisted estimators arise as special cases of the SAL family, which also includes intermediate and synthetic constructions and is developed in both standard, cross-fitted, and aggregated forms. Conversely, \cite{an2026agnostic} provides a substantially more general treatment of complex sampling designs and the conditions under which cross-fitting yields valid design-based inference.

\subsubsection{Semi-supervised, transfer, and multi-task learning}
\label{sec:lit_ssl}

Our formulation also connects to several learning frameworks. In particular,
\cite{zhang_high-dimensional_2022} studies semi-supervised estimation of a
population mean when outcomes are observed for a small labelled sample but
covariates are available much more broadly, and develops cross-fitted
inference under flexible working models. This setting motivates part of our
construction, but SAL instead considers multiple area-specific targets, explicitly formalizes borrowing strength across them, accommodates vanishing within-area sampling fractions, and allows a class of admissible learners rather than a single semi-supervised mean problem. More broadly, the sharing of information across related areas links
SAE to transfer learning, covariate shift, and multi-task learning
\citep{shimodaira2000covariateshift,Yang_Zhang_Dai_Pan_2020,
caruana1997multitask,baxter2000model}. Our emphasis, however, is not on a
particular transfer algorithm but on formulating the SAE problem and its
inferential properties while leaving the learner largely unrestricted.

\subsubsection{Generalized-difference and debiased estimation}\label{sec:lit_review}

For mean estimation, SAL also connects to the classical model-assisted
generalized-difference estimator \citep[GREG, see for example][]{cassel1976some}.
Under the corresponding sampling design, the fully corrected SAL estimator
adds a correction based on sample residuals to the prediction-based mean estimate, following the
generalized-difference principle. Other members of the SAL family vary
the weight assigned to this residual correction.
Related residual-correction and orthogonalization ideas appear in
augmented inverse-probability weighting \citep{robins1994estimation} and
debiased or double machine learning \citep{chernozhukov2018}. These
connections concern the form of the estimating equation; the targets and
inferential frameworks considered here are different.

\subsection{Our contribution}\label{sec:contribution}

We introduce \textit{small area learning} (SAL), a learner-agnostic framework
for small area estimation that separates the structure used to borrow
information across areas from the procedure used to estimate the outcome
mechanism. Areas may differ through unrestricted covariate shift and structured
conditional shift, while borrowing strength is achieved through a shared component learned from the pooled sample together with constrained
area-specific deviations. The predictive component may be estimated
parametrically, semiparametrically, or by flexible learning methods and is treated as a nuisance quantity for inference.

Our contributions are as follows. First, we formulate unit-level SAE as a multi-target semi-supervised learning problem and formalize borrowing strength through shared and area-specific components. Second, we introduce standard and cross-fitted SAL families that contain, as particular cases, synthetic, model-based SAE, and fully residual-corrected estimators, thereby making their bias--variance trade-off explicit. Third, for area means, under low-complexity or cross-fitting conditions, we establish area-specific consistency, asymptotic normality, variance estimation, and confidence intervals without tying the theory to a particular prediction algorithm. Fourth, we extend SAL to settings in which only aggregated population covariates are available and characterize the additional bias induced by replacing unit-level population information by area-level summaries. Fifth, we show how classical SAE constructions, including nested-error and area-level shrinkage models, fit within the structured conditional-shift framework and clarify their connections with direct, generalized-difference, and shrinkage estimators.

Our simulations assess when borrowing strength translates into improved estimation. Across all simulation settings considered, the best-performing SAL configurations substantially
reduce RMSE relative to direct estimation, with the preferred learner depending on the setting and the degree of residual correction. The magnitude of the gain depends strongly on the predictive power of the covariates and on the underlying data-generating setting. The empirical application further illustrates data-driven selection among candidate learners, the practical flexibility of SAL and illustrates the role of shrinkage.

The remainder of the paper is organised as follows.
Section~\ref{sec:stat_setup} introduces the superpopulation and
finite-population targets and the sampling-design assumptions.
Section~\ref{sec:foundations} presents the structured conditional-shift
formulation and illustrates how classical and machine-learning SAE methods
can be embedded within it. Section~\ref{sec:salearners} develops the proposed
small area learners and their inferential theory. Section~\ref{sec:sal_agg} considers settings in which only survey data and aggregated auxiliary information is available. Section \ref{sec:discussion} presents further discussion of the proposed estimators, including SAL with area-level data
in Section~\ref{sec:sal_area_level}.
Section~\ref{sec:simulations} investigates finite-sample performance, and the
subsequent empirical analysis illustrates SAL using Italian income data. The final section discusses limitations and directions for future research. Proofs of all theoretical results are provided in Appendix~\ref{appn}, whereas details on the learner selection and the setup of simulations and data analysis in Appendices~\ref{sec:choice_learner}-~\ref{app:design}.

\section{Statistical setup}
\label{sec:stat_setup}

\subsection{Superpopulation framework}
\label{sec:super_pop}

Let $D$ denote the number of areas and
$\mathcal D:=\{1,\ldots,D\}$, with $d\in\mathcal D$ indexing areas and $j$
indexing units within area $d$. We use ``area'' to denote any group in a partition of the population, such as a subpopulation, cluster, or domain. Following the superpopulation framework
commonly used in survey statistics
\citep{cassel1977foundations,valliant2000finite}, let
$(Y_{dj},\mathbf X_{dj})\in\mathcal Y\times\mathcal X$ be composed of an
outcome (target) variable $Y_{dj}$ and covariate (feature) vector
$\mathbf X_{dj}$. We decompose the covariates as
\begin{equation}
\label{eq:X_decomp}
\mathbf X_{dj}=(\mathbf Z_{dj}^{\top},G_{dj})^{\top},
\qquad
\mathbf Z_{dj}=(\mathbf W_{dj}^{\top},\mathbf U_{dj}^{\top})^{\top},
\end{equation}
where $\mathbf W_{dj}\in\mathcal W\subseteq\mathbb R^p$ contains unit-level
covariates, $\mathbf U_{dj}\in\mathcal U\subseteq\mathbb R^q$ contains
area-level covariates, often referred to as contextual covariates
\citep{Lyu:2022}, and $G_{dj}\in\mathcal D$ is the area indicator, so that
$G_{dj}=d$. Contextual covariates are constant across units within an area:
$\mathbf U_{dj}=\mathbf u_d$ for every $j$, so that, conditionally on $G=d$,
the $\mathbf U$-marginal of $P_d^{\mathbf Z}$ is degenerate at
$\mathbf u_d$. We write $\mathcal Z=\mathcal W\times\mathcal U$ and
$\mathcal X=\mathcal Z\times\mathcal D$. Throughout the manuscript, uppercase
letters, i.e., $Y_{dj},\mathbf X_{dj}$, denote random quantities, whereas
their lowercase equivalents, i.e., $y_{dj},\mathbf x_{dj}$, refer to their
realizations. For every $d\in\mathcal D$, we define the area-specific law
\begin{equation}
\label{eq:area_law}
P_d
:=
\mathcal L\{(Y,\mathbf Z)\mid G=d\}
=
P_d^{Y\mid\mathbf Z}\,P_d^{\mathbf Z},
\;
P_d^{\mathbf Z}:=\mathcal L(\mathbf Z\mid G=d),
\;
P_d^{Y\mid\mathbf Z}:=\mathcal L(Y\mid\mathbf Z,G=d),
\end{equation}
with $P_d^{Y\mid\mathbf Z}\in\mathcal P_d^{Y\mid\mathbf Z}$ and
$P_d^{\mathbf Z}\in\mathcal P_d^{\mathbf Z}$ for conditional and marginal
modelling classes $\mathcal P_d^{Y\mid\mathbf Z}$ and
$\mathcal P_d^{\mathbf Z}$, the joint modelling class being
\begin{equation*}
\mathcal P_d
=
\left\{
P^{Y\mid\mathbf Z}P^{\mathbf Z}:
P^{Y\mid\mathbf Z}\in\mathcal P_d^{Y\mid\mathbf Z},\;
P^{\mathbf Z}\in\mathcal P_d^{\mathbf Z}
\right\}
\subseteq
\mathcal M_1(\mathcal Y\times\mathcal Z),
\qquad
\mathcal P=\bigcup_{d\in\mathcal D}\mathcal P_d,
\end{equation*}
where $\mathcal M_1(\mathcal Y\times\mathcal Z)$ denotes the set of
probability measures on $\mathcal Y\times\mathcal Z$. Intuitively, the factorization in \eqref{eq:area_law} separates two possible
sources of between-area heterogeneity: differences in the marginal
distribution of the covariates $P_d^{\mathbf Z}$ and differences in the
conditional distribution of the outcome given those covariates
$P_d^{Y\mid\mathbf Z}$. This distinction will be central to the
covariate-shift and structured conditional-shift formulation developed in
Section~\ref{sec:foundations}. Observe that in classical unit-level SAE models, covariates are typically treated as fixed and $P_d$ is understood as the conditional law given the realized random effects. Such models are accommodated by our general framework; see Section~\ref{sec:nerm} for a worked example.

\subsection{Superpopulation target parameters }
\label{sec:superpopulation_targets}

The superpopulation target parameter associated with area $d$ is defined
through a Euclidean-valued functional
\begin{equation}
\label{eq:functional_map}
\theta_d
=
\Theta(P_d)
=
\Theta(P_d^{Y\mid\mathbf Z}\,P_d^{\mathbf Z}),
\qquad
\Theta:\mathcal P\rightarrow\mathcal T\subseteq\mathbb R.
\end{equation}
Depending on the choice of $\Theta$, the superpopulation target may be a mean,
quantile, or another finite-dimensional functional, i.e., SAL framework is not limited to one particular estimand. For the sake of clarity, in
Section~\ref{sec:salearners}  we illustrate the learning pipeline by developing estimation strategies and
establishing asymptotic theory for the mean functional
$\Theta_{\mathrm{mean}}$ in \eqref{eq:mean_functional}. To this end, let
$\mathbf z\in\mathcal Z$ and define the area regression function as
\begin{equation}
\label{eq:area_regression}
m_d(\mathbf z)
:=
E(Y_{dj}\mid\mathbf Z_{dj}=\mathbf z,G_{dj}=d)
=
\int_{\mathcal Y}y\,dP_d^{Y\mid\mathbf Z=\mathbf z}(y).
\end{equation}
Assume that $E_{P_d}|Y_{dj}|<\infty$. Then the area-level mean functional
$\theta_d=E_{P_d}(Y)$ is 
\begin{equation}
\label{eq:mean_functional}
\theta_d
=
\Theta_{\mathrm{mean}}(P_d)
=
\int_{\mathcal Y\times\mathcal Z}y\,dP_d(y,\mathbf z)
=
\int_{\mathcal Z}m_d(\mathbf z)\,dP_d^{\mathbf Z}(\mathbf z)
=
E_{P_d^{\mathbf Z}}\{m_d(\mathbf Z)\}
=:
\mu_d,
\end{equation}
where the second equality follows by the definition
of the mean functional, the third by the tower
property and the definition of $m_d(\mathbf{z})$ in
\eqref{eq:area_regression}, and the fourth by the definition
of expected value.

\subsection{Finite-population framework and target parameters}
\label{sec:finite_population}

Within the finite-population setup, area $d$ contains $N_d$ units,
indexed by $\mathcal Q_d:=\{1,\ldots,N_d\}$. Unit indices are understood to carry their area labels;
thus, unions of index sets across areas are disjoint unions. We set $\mathcal Q=\bigcup_{d\in\mathcal D}\mathcal Q_d$ and
$N=\sum_{d\in\mathcal D}N_d$.
%\begin{equation*}
%\mathcal Q=\bigcup_{d\in\mathcal D}\mathcal Q_d,
%\qquad
%N=\sum_{d\in\mathcal D}N_d.
%\end{equation*}
The law $P_d$ in \eqref{eq:area_law} describes how the values of these units
are generated; the sampling design determines which of them are observed.
Let $I_{dj}\in\{0,1\}$ denote the sample membership indicator and $p_{dj} = \Pr(I_{dj}=1\mid Y_{dj},\mathbf Z_{dj},G_{dj}=d)$
the first-order inclusion probability.

\begin{assumption}[Sampling design]
\label{assump:design}
For every $d\in\mathcal D$ and $j\in\mathcal Q_d$: (i) the design is
noninformative given the covariates,
$I_{dj}\perp\!\!\!\perp Y_{dj}\mid(\mathbf Z_{dj},G_{dj})$; (ii) the design is
self-weighting within areas, $p_{dj}=p_d$ for $P_d$-almost every
$(Y_{dj},\mathbf Z_{dj})$, with $p_d>0$; and (iii) conditionally on the
realized area sample size $n_d$, the sampled pairs
$(Y_{dj},\mathbf Z_{dj})_{j\in\mathcal Q_d^s}$ are independent and
identically distributed draws from $P_d$.
\end{assumption}

Under Assumption~\ref{assump:design}, $\mathcal L\{(Y,\mathbf Z)\mid G=d,
I=1\}=P_d$: the sampled units of area $d$ are draws from the area law itself,
the property exploited by the learners of Section~\ref{sec:salearners}. Parts
(i) and (ii) leave inclusion probabilities free to vary \emph{across} areas,
permitting disproportionate allocation, whereas dependence of $p_{dj}$ on
$\mathbf Z_{dj}$ \emph{within} an area is excluded. Self-weighting within
areas is a standard simplifying condition in unit-level SAE theory and permits
the residual correction to be formed as an unweighted area sample mean; the
implications of a noninformative but non-self-weighting design, for which a
design-weighted residual correction is needed, are discussed in
Section~\ref{sec:weighting}. Part (iii) holds, for example, under simple
random sampling within an area when the finite-population values are generated
independently from $P_d$ and the selection of units is independent of those
values.

Writing
$\mathcal Q_d^s=\{j\in\mathcal Q_d:I_{dj}=1\}$ for the sampled units and
$\mathcal Q_d^r=\mathcal Q_d\setminus\mathcal Q_d^s$ for the nonsampled ones,
with $|\mathcal Q_d^s|=n_d$, we set $\mathcal Q^s=\bigcup_{d\in\mathcal D}\mathcal Q_d^s$,
$\mathcal Q^r=\bigcup_{d\in\mathcal D}\mathcal Q_d^r$,
$n=\sum_{d\in\mathcal D}n_d$.
%\begin{equation}
%\label{eq:sample_partition}
%\mathcal Q^s=\bigcup_{d\in\mathcal D}\mathcal Q_d^s,
%\qquad
%\mathcal Q^r=\bigcup_{d\in\mathcal D}\mathcal Q_d^r,
%\qquad
%n=\sum_{d\in\mathcal D}n_d.
%\end{equation}
The finite-population mean of area $d$ is
\begin{equation}
\label{eq:target_decomposition}
\bar y_d
=
\frac{1}{N_d}
\left(
\sum_{j\in\mathcal Q_d^s}y_{dj}
+
\sum_{j\in\mathcal Q_d^r}y_{dj}
\right),
\end{equation}
where the first sum is composed of the observed (sampled) units and the second from unknown out-of-sample units. The decomposition in \eqref{eq:target_decomposition}  motivates the
residual-corrected small area learners of
Section~\ref{sec:salearners}.

Assumption~\ref{assump:iid} provides the
superpopulation framework linking the finite-population
target $\bar y_d$ in \eqref{eq:target_decomposition}
to the area mean $\mu_d$ in
\eqref{eq:mean_functional}.

\begin{assumption}[Superpopulation structure]
\label{assump:iid}
Let $\mathcal F_{\mathcal P}$ denote the
$\sigma$-field generated by the area-specific
components governing the area laws
$(P_d)_{d\in\mathcal D}$.
Conditionally on $\mathcal F_{\mathcal P}$, the
population pairs
$(Y_{dj},\mathbf Z_{dj})_{j\in\mathcal Q_d}$
are independent and identically distributed
according to $P_d$ within each area, and the
collections corresponding to distinct areas are
mutually independent. Equivalently, conditional
on $\mathcal F_{\mathcal P}$, the joint population
law factorises as
$\bigotimes_{d\in\mathcal D}P_d^{\otimes N_d}$.
\end{assumption}

The conditioning accommodates classical SAE models
with random area-specific components. For example,
in the nested-error model of
Section~\ref{sec:nerm}, Assumption~\ref{assump:iid}
is interpreted conditional on the realized area
effect; after integrating out that effect, units
within an area are exchangeable but dependent
because they share the same random component.
Accordingly, probabilistic statements involving the
area laws are understood conditionally on
$\mathcal F_{\mathcal P}$, whose area-specific
components are introduced formally in
Section~\ref{sec:foundations}.

Let $\bar Y_d:=N_d^{-1}\sum_{j\in\mathcal Q_d}Y_{dj}$
denote the superpopulation-random counterpart of
the realized finite-population mean $\bar y_d$.
Under Assumption~\ref{assump:iid} and the integrability condition $E_{P_d}|Y|<\infty$ stated above, $\bar Y_d\to\mu_d$ almost surely as $N_d\to\infty$. The quantities nevertheless play different roles: $\mu_d$ is a functional of the area
law $P_d$ and provides the population parameter used
to formulate the learning problem and derive the
asymptotic theory, whereas $\bar y_d$ is the realized
finite-population mean and the primary SAE target.
Corollary~\ref{cor:finite_pop} shows that, under the
stated asymptotic regime, replacing $\mu_d$ by
$\bar Y_d$ does not affect the first-order limiting
distribution. The asymptotic results can therefore
be derived for $\mu_d$ and transferred to the
finite-population target.

\subsection{Direct estimation and the need for borrowing strength}
\label{sec:direct_estimation}

Let $p_{dj}$ denote the first-order inclusion probability of
Assumption~\ref{assump:design} and $w_{dj}=p_{dj}^{-1}$ the corresponding
sampling weight. Two standard design-based estimators of $\bar y_d$ are the
H\'ajek and Horvitz--Thompson estimators
\citep{hajek1958theory,horvitz}:
\begin{equation}
\label{eq:Hajek_HT}
\widehat{\bar y}_{d}^{\,h}
=
\frac{1}{\widehat N_d^{\,h}}
\sum_{j\in\mathcal Q_d^s}w_{dj}y_{dj},
\qquad
\widehat{\bar y}_{d}^{\,ht}
=
\frac{1}{N_d}
\sum_{j\in\mathcal Q_d^s}w_{dj}y_{dj},
\qquad
\widehat N_d^{\,h}
=
\sum_{j\in\mathcal Q_d^s}w_{dj}.
\end{equation}

Under Assumption~\ref{assump:design}(ii), the weights are
constant within each area, so that
$\widehat{\bar y}_{d}^{\,h}$ reduces to the unweighted
sample mean. These estimators provide natural design-based
benchmarks for finite-population inference
\citep{sarndal1992model}, but their precision depends on
the area sample size $n_d=|\mathcal Q_d^s|$. In small area
applications, these sample sizes are often too small to
attain the desired precision, motivating methods that
borrow strength across areas while retaining area-specific
heterogeneity. Direct estimators play two roles in what follows:
they provide design-based benchmarks and, when aggregated
to coarser domains with sufficiently large effective sample
sizes, validation targets for learner selection; see
Section~\ref{sec:learner-selection} and Appendix~\ref{sec:choice_learner}.

\section{Foundations of small area learning}
\label{sec:foundations}

\subsection{SAE as structured conditional shift learning} This section introduces small area learning (SAL) as a general inferential
framework for estimating area-specific functionals. Recall from
Section~\ref{sec:super_pop} that the target
$\theta_d=\Theta(P_d)$ is a functional of the area law
$P_d=P_d^{Y\mid\mathbf Z}P_d^{\mathbf Z}$. The factorisation identifies exactly two channels through which areas, and hence the targets, may differ: the conditional component $P_d^{Y\mid\mathbf Z}$ and the marginal component $P_d^{\mathbf Z}$. In the transfer-learning literature, heterogeneity in the
former is termed \emph{conditional shift}
\citep{Yang_Zhang_Dai_Pan_2020}, whereas heterogeneity confined to the latter
is termed \emph{covariate shift}
\citep{shimodaira2000covariateshift}. Neither notion, taken alone, describes
small area estimation. Requiring
$P_d^{Y\mid\mathbf Z}=P^{Y\mid\mathbf Z}$ for all $d$ is too strong: it rules out any area-specific effect in the conditional law, in particular, the
random area effects that SAE exists to capture, and indeed excludes the classical unit-level model in Section~\ref{sec:nerm}. Allowing the $P_d^{Y\mid\mathbf Z}$ to differ arbitrarily is too weak. SAE emerged as a response to data scarcity at the area level: with $n_d$ small, the
area-specific conditional law $P_d^{Y\mid\mathbf Z}$ cannot be learnt from
area $d$'s data alone. The cornerstone of SAE is therefore to ``borrow
strength'' across areas while still accounting for differences between the
area-level laws.

To reconcile these two needs, we draw on the literature on multi-task
learning and domain adaptation with shared representations
\citep{caruana1997multitask,baxter2000model}. There, $D$ related prediction
tasks are learnt jointly by a model with a common body, trained on all tasks,
and task-specific heads kept deliberately small, low-dimensional,
penalised, or both, so that each task benefits from every other task's data
without being forced to share its characteristics. This description closely
resembles classical SAE methods, which we embed in this learning framework.

Before stating the main assumption, we define its ingredients. Let
$\mathcal H$ and $\Delta$ be parameter sets for a common (shared) parameter
$\boldsymbol\eta\in\mathcal H$ and area-specific parameters
$\delta_d\in\Delta$, $d\in\mathcal D$; the sets may be finite-dimensional,
as in Section~\ref{sec:nerm}, or infinite-dimensional, e.g.\ classes of
functions. Furthermore, we define a Markov kernel from $\mathcal Z$ to
$\mathcal Y$ as a map
$K:\mathcal Z\times\mathcal B(\mathcal Y)\to[0,1]$
such that $K(\mathbf z,\cdot)$ is a probability measure on $\mathcal Y$ for
every $\mathbf z\in\mathcal Z$ and $K(\cdot,B)$ is measurable for every
$B\in\mathcal B(\mathcal Y)$, where $\mathcal B(\mathcal Y)$ denotes the
Borel $\sigma$-algebra on $\mathcal Y$. Observe that a conditional distribution
$P_d^{Y\mid\mathbf Z}$ is a Markov kernel, with $K(\mathbf z,B)
=
\Pr(Y\in B\mid\mathbf Z=\mathbf z,G=d)$.
\begin{assumption}[Structured conditional shift]
\label{assump:structured-shift}
Let $P_d=P_d^{Y\mid\mathbf Z}P_d^{\mathbf Z}$ be the area-specific law
defined in Section~\ref{sec:super_pop}, and let the marginal laws
$P_d^{\mathbf Z}$ be left unrestricted. Let
$\{
\mathcal K_{\boldsymbol\eta,\delta}:
\boldsymbol\eta\in\mathcal H,\ \delta\in\Delta
\}$ 
be a family of sets of Markov kernels from $\mathcal Z$ to $\mathcal Y$,
indexed by the pair $(\boldsymbol\eta,\delta)$. There exist a common
parameter $\boldsymbol\eta\in\mathcal H$ and area-specific parameters
$\delta_d\in\Delta$, $d\in\mathcal D$, such that
$P_d^{Y\mid\mathbf Z}\in
\mathcal K_{\boldsymbol\eta,\delta_d}$,
$d\in\mathcal D$. The vector of deviations $(\delta_1,\ldots,\delta_D)$ is jointly constrained
in a sense made precise in Assumption~\ref{assump:constraint}.
\end{assumption}

Intuitively, Assumption~\ref{assump:structured-shift} splits every area's
conditional law into a shared and an area-specific part. The shared part,
indexed by $\boldsymbol\eta$, is common to all areas and can therefore be
estimated from the pooled sample of all $n$ observations, which is SAE's
``borrowing of strength''. The area-specific part, indexed by $\delta_d$,
captures the deviation of area $d$ from the shared mechanism. When the deviations are jointly regularised, information from the other areas
contributes to their stabilisation, at the cost of shrinking each deviation
towards the common mechanism and thereby understating the dissimilarities
between areas, as in the shrinkage of the random effects in
Section~\ref{sec:nerm}; otherwise each $\delta_d$ must be learnt from the
$n_d$ observations of its own area alone. This distinction between learning
the shared component and stabilising the area-specific components is central
to the SAL framework. It should not be confused with the rate at which the target $\mu_d$ can be estimated: for the fully residual-corrected estimators, the first-order error is governed by within-area residual variation and has the $n_d^{-1/2}$ scale under the assumptions of Section~\ref{sec:salearners}; see also Section~\ref{rem:rate}.

%It should not be confused with the rate at which the target $\mu_d$ can be estimated: since $\mu_d = E_{P_d^{\mathbf Z}}\{m_d(\mathbf Z)\}$ depends on the covariate law of area $d$, that rate is governed by $n_d$ no matter how the deviations are learnt, as Section~\ref{sec:salearners} makes precise.

Covariate shift \citep{shimodaira2000covariateshift} and conditional shift
\citep{Yang_Zhang_Dai_Pan_2020}, introduced at the beginning of this section,
are the boundary cases of Assumption~\ref{assump:structured-shift}. Pure
covariate shift, $P_d^{Y\mid\mathbf Z}=P^{Y\mid\mathbf Z}$ for every $d$,
requires not only that $\Delta$ be a singleton but also that the specification
determine the same kernel, or the same target-relevant features of it, in every area. Unrestricted conditional shift removes any borrowing of strength for the estimation of the deviations, although borrowing continues through the shared component $\boldsymbol\eta$ whenever the latter is identified separately from them. Classical SAE models occupy an intermediate case in
which the deviation is a low-dimensional random effect, often a scalar location shift, constrained through a common distribution. SAL also permits functional deviations, with their structure imposed through a distributional
assumption, regularisation, or a complexity restriction.

\begin{assumption}[Constrained deviations]
\label{assump:constraint}
There exist deviation functionals
$\rho_d:\Delta\to[0,\infty]$, $d\in\mathcal D$,
and finite constants $\{C_D\}_{D\ge1}$ with
$\sup_{D\ge1}C_D/D<\infty$ such that
\begin{equation}
\label{eq:constraint}
\sum_{d\in\mathcal D}\rho_d^2(\delta_d)\leq C_D \qquad \text{or} \qquad  E\left\{\sum_{d\in\mathcal D}\rho_d^2(\delta_d)\right\}\leq C_D
\end{equation}
for deterministic or random deviations, respectively.  
\end{assumption}

The functional $\rho_d$ measures the magnitude of the area-specific deviation under the covariate law of area $d$ 
and may be chosen uniformly across areas. A natural choice for functional deviations is
$\rho_d(\delta)=\|\delta\|_{L_2(P_d^{\mathbf Z})}$. Even though $\delta_d$ is a function of the covariates, square-integrable
under $P_d^{\mathbf Z}$, it acts on the conditional law $P_d^{Y\mid\mathbf Z}$:
for additive deviations it shifts the regression function,
$m_d=m_0+\delta_d$. The decomposition into a shared component and
area-specific deviations then requires a centring convention. We use
$\sum_{d\in\mathcal D}\omega_d\delta_d=0$ for deterministic deviations and
$E(\delta_d)=0$ for random deviations, where $\omega_d=\Pr(G=d)$, and $N_d/N\to\omega_d$ if the finite populations are generated from a mixture with area probabilities $(\omega_d)_{d\in\mathcal D}$. By contrast, when
$\Delta$ is a bounded subset of $\mathbb R$ and
$\delta_d$ indexes the family
$\{K_{\boldsymbol\eta,\delta}:\delta\in\Delta\}$,
its value identifies the area's position within that
family and no centering convention is required.

The scaling $\sup_D C_D/D<\infty$ bounds the average complexity of the
deviations and permits classical random-effects models with a nonvanishing
between-area variance $\tau^2$. In particular, if
$E\{\rho_d^2(\delta_d)\}=\tau^2$ for every $d$, then \eqref{eq:constraint}
holds with $C_D=D\tau^2$. A bound fixed in $D$ would force $\tau^2\to0$. The
additive case of Assumption~\ref{assump:constraint} covers the models of
Sections~\ref{sec:nerm}--\ref{sec:merf}, where $\delta_d$ is a random
intercept and $\rho_d(\delta)=|\delta|$ is common across areas; the
bounded-index case covers the M-quantile model of Section~\ref{sec:mq}.

Finally, when the specification in Assumption~\ref{assump:structured-shift} determines the kernel completely,
$P_d^{Y\mid\mathbf Z}=K_{\boldsymbol\eta,\delta_d}$, e.g.\
when $K_{\boldsymbol\eta,\delta_d}(\mathbf z,\cdot)$ is Gaussian, the sets
$\mathcal K_{\boldsymbol\eta,\delta_d}$ are singletons; see
Sections~\ref{sec:nerm}--\ref{sec:merf} for examples in SAE. Otherwise, the
specification may be partial, for instance through estimating equations, as
in Section~\ref{sec:mq}.

\begin{remark}\label{rem:constraint_role}
Assumption~\ref{assump:constraint} restricts the data-generating mechanism
rather than any learning or estimation method: it places the framework
between the two boundary cases of conditional invariance and unrestricted
conditional shift, and each worked example in
Section~\ref{sec:running_examples} instantiates it with concrete $\rho_d$ and
$C_D$. The amount of strength that is actually borrowed depends on the
learner, penalty or hierarchical model used, and the complexity of the class
in which the learner is fitted is controlled separately by
Assumption~\ref{assump:entropy_growth}. Accordingly,
Assumption~\ref{assump:constraint} is not invoked in the proofs of
Section~\ref{sec:salearners}, where the consistency of the learner is assumed
directly rather than derived from the structure of the deviations.
\end{remark}
%to sum it up, points \item \textbf{$C \to C_D$ with $\sup_D C_D/D < \infty$.} \item \textbf{Deterministic versus random normalisation.} item \textbf{Role of the joint constraint.} \item \textbf{Separation of the two complexity conditions.}
Under Assumption~\ref{assump:structured-shift}, the target
\eqref{eq:functional_map} takes the form
\begin{equation}
\label{eq:target_structured}
\theta_d
=
\Theta\!\left(
P_d^{Y\mid\mathbf Z}P_d^{\mathbf Z}
\right),
\qquad
P_d^{Y\mid\mathbf Z}
\in
\mathcal K_{\boldsymbol\eta,\delta_d},
\end{equation}
which exhibits the two sources of between-area heterogeneity admitted by the
framework: the area deviation $\delta_d$ and the covariate law
$P_d^{\mathbf Z}$. The shared parameter $\boldsymbol\eta$ and the functional
$\Theta$ are common to all areas. If $P_d^{Y\mid\mathbf Z}=K_{\boldsymbol\eta,\delta_d}$, as in the Gaussian examples below, \eqref{eq:target_structured} reduces to $\theta_d
=
\Theta\!\left(
K_{\boldsymbol\eta,\delta_d}P_d^{\mathbf Z}
\right)$, so that $(\boldsymbol\eta,\delta_d,P_d^{\mathbf Z})$ determines the target.
Under a partial specification, $\theta_d$ is determined by these quantities
only when $\Theta$ depends on the conditional law through features fixed by
$\mathcal K_{\boldsymbol\eta,\delta_d}$.

For the mean functional $\mu_d = \Theta_{\mathrm{mean}}(P_d) =
E_{P_d^{\mathbf Z}}\{m_d(\mathbf Z)\}$ in \eqref{eq:mean_functional}, the
conditional mean is specified through a subclass
$\mathcal K^{\mathrm{mean}}_{\boldsymbol\eta,\delta} \subseteq
\mathcal K_{\boldsymbol\eta,\delta}$, constructed as follows.

\begin{definition}[Generic regression function]
\label{def:constraint_generic}
Let $\mathcal E_\lambda$ be a class of mean-zero
sub-Gaussian distributions with parameter $\lambda$.
For $\boldsymbol\eta=(m_0,\lambda)$ and
$\delta\in\Delta\subseteq
\bigcap_{d\in\mathcal D}L_2(P_d^{\mathbf Z})$,
where $\Delta$ is a class of measurable functions, define
\[
\mathcal K^{\mathrm{mean}}_{\boldsymbol\eta,\delta}
=
\left\{
K:
\mathcal L_{K(\mathbf z,\cdot)}
\{Y-m_0(\mathbf z)-\delta(\mathbf z)\}
\in\mathcal E_\lambda
\ \text{for $P_d^{\mathbf Z}$-a.e.\ $\mathbf z$
and every $d$}
\right\}.
\]

For area-specific deviations $\delta_d\in\Delta$, the
corresponding regression function is
\begin{equation}
\label{eq:generic_m}
m_d(\mathbf z)
=
m_0(\mathbf z)+\delta_d(\mathbf z)
=
m(\mathbf z,d)
=
m(\mathbf x),
\qquad
\mathbf x=(\mathbf z^\top,d)^\top .
\end{equation}
\end{definition}
Since the distributions in $\mathcal E_\lambda$ are
mean zero, every
$K\in\mathcal K^{\mathrm{mean}}_{\boldsymbol\eta,\delta}$
has conditional mean
$m_0(\mathbf z)+\delta(\mathbf z)$. 
For functional deviations, a natural choice in
Assumption~\ref{assump:constraint} is
$\rho_d(\delta_d)
=\|\delta_d\|_{L_2(P_d^{\mathbf Z})}$,
which measures the deviation under the covariate law of
the corresponding area. Non-additive deviations in the
conditional mean can be represented as
$m(\mathbf z,d)
=f\{m_0(\mathbf z),\delta_d(\mathbf z)\}$
for a known link $f$, with
\eqref{eq:generic_m} as the leading additive case.

\subsection{Worked examples}
\label{sec:running_examples}
In this section we show that some of the most common models used in SAE are special cases of SAL.
\subsubsection{Nested-error regression model (NERM)}
\label{sec:nerm}

Let $\mathbf Z_{dj}=\mathbf W_{dj}$. The unit-level model of
\cite{battese1988error} posits $Y_{dj} = \mathbf W_{dj}^{\top}\boldsymbol\zeta
+
\gamma_d
+
\varepsilon_{dj}$,
where
$\gamma_d\sim N(0,\sigma_\gamma^2)$ is an area-specific random effect and
$\varepsilon_{dj}\sim N(0,\sigma_\varepsilon^2)$ is a unit-level error. The area effects are independent across areas, the unit
errors are independent across units, and the two collections are mutually
independent. NERM satisfies Assumption~\ref{assump:structured-shift} with shared component
$\boldsymbol\eta=(\boldsymbol\zeta,\sigma_\varepsilon^2)$, area-specific
component $\delta_d=\gamma_d$, $\Delta=\mathbb R$, and conditional law
specified fully. Conditionally on the realized area effect $\gamma_d$,
$P_d^{Y\mid\mathbf W=\mathbf w}
=
K_{\boldsymbol\eta,\delta_d}(\cdot\mid\mathbf w)
=
N\!\left(
\mathbf w^\top\boldsymbol\zeta+\gamma_d,
\sigma_\varepsilon^2
\right)$, i.e., the sets
$\mathcal K_{\boldsymbol\eta,\delta_d}
=
\{K_{\boldsymbol\eta,\delta_d}\}$
are singletons, while $P_d^{\mathbf Z}$ is left unspecified. In the SAE
literature, the covariates are often treated as fixed rather than random. The area-specific components are constrained through
$\gamma_d\sim N(0,\sigma_\gamma^2)$. Let
$\rho_d(\delta)=|\delta|$. The distributional assumption implies the
expectation version of \eqref{eq:constraint}, since
$\sum_{d=1}^D
E\{\rho_d^2(\gamma_d)\}
=
D\sigma_\gamma^2$. Thus, Assumption~\ref{assump:constraint} holds with
$C_D=D\sigma_\gamma^2$, and the average mean-square complexity of the
deviations is $\sigma_\gamma^2$. Under Gaussianity, maximising the joint
density of the outcomes and random effects with respect to the effects is
equivalent to penalised least squares, with an $\ell_2$ penalty on the area
effects proportional to
$\sigma_\varepsilon^2/\sigma_\gamma^2$. A deterministically constrained
formulation gives the corresponding penalised problem through its Lagrange
multiplier. Finally, $\mu_d^{ner}
=
\int_{\mathcal W}
\mathbf w^\top\boldsymbol\zeta\,
dP_d^{\mathbf W}(\mathbf w)
+
\gamma_d
=
\bar{\mathbf w}_d^\top\boldsymbol\zeta+\gamma_d$,
where
$\bar{\mathbf w}_d
:=
\int_{\mathcal W}
\mathbf w\,dP_d^{\mathbf W}(\mathbf w)$. The quantity $\mu_d^{ner}$ is the conditional area target, and classical BLUP and EBLUP are predictors of $\mu_d^{ner}$ when the variance components are known and unknown, respectively. From a learning perspective, NERM is a multi-task learner with a linear body
and scalar-intercept heads shrunk by an $\ell_2$ penalty or, equivalently,
assumed to follow a Gaussian distribution. 

When the model parameters are known, the best
linear predictor of the random intercept
$\gamma_d$, based on the area mean residual
$P_{n_d}(Y-\mathbf W^\top\boldsymbol\zeta)$,
has shrinkage factor $\Gamma_d^{ner}
=
\sigma_\gamma^2/
(\sigma_\gamma^2+\sigma_\varepsilon^2/n_d)$. The latter illustrates the same shrinkage principle underlying
the hierarchical correction weights discussed in Section~\ref{sec:hierarchical_w} and proves to be crucial in understanding the empirical results in Section~\ref{sec:sim_results} and Section~\ref{sec:app_results}.

\subsubsection{Fay--Herriot model}
\label{sec:fh}
Let $j\in\{1\}$, i.e., we consider area-level outcomes and covariates only,
such that $\mathbf Z_{dj}=\mathbf U_d$ and $Y_{dj}=Y_d$. Assume that
$Y_d=\widehat{\bar y}_d^{(\cdot)}$, where
$\widehat{\bar y}_d^{(\cdot)}$ refers to either the H\'ajek or the
Horvitz--Thompson estimator in \eqref{eq:Hajek_HT}. The area-level model of
\cite{fay1979estimates} posits $\widehat{\bar y}_d^{(\cdot)}
=
\mathbf u_d^\top\boldsymbol\zeta+\gamma_d+e_d$, where $\gamma_d\sim N(0,A)$ is an area effect and
$e_d\sim N(0,V_d)$ is a sampling error with $V_d$ assumed known. The area effects and sampling errors are assumed independent
across areas and mutually independent. Conditionally on the realized area
effect $\gamma_d$, Fay--Herriot satisfies
Assumption~\ref{assump:structured-shift} with shared component
$\boldsymbol\eta=\boldsymbol\zeta$, area-specific component
$\delta_d=\gamma_d$, $\Delta=\mathbb R$, and conditional law
$P_d^{Y\mid\mathbf U=\mathbf u}
=
K_{\boldsymbol\eta,\delta_d}(\cdot\mid\mathbf u)
=
N\!\left(
\mathbf u^\top\boldsymbol\zeta+\gamma_d,V_d
\right)$,
where the sampling variances $(V_1,\ldots,V_D)$ are known constants entering
the kernel family as fixed quantities rather than parameters. Since
$\mathbf Z_{dj}=\mathbf U_d$, the covariate law is degenerate,
$P_d^{\mathbf Z} = \delta_{\mathbf u_d}$,
the Dirac measure at $\mathbf u_d$. The area-specific components $\gamma_d$ are constrained as in Section~\ref{sec:nerm}, with $C_D=DA$, and have the same structure from the learning
perspective. The mean functional is
$\mu_d^{fh}
=
\int_{\mathcal U}
\mathbf u^\top\boldsymbol\zeta\,
dP_d^{\mathbf Z}(\mathbf u)
+
\gamma_d
=
\mathbf u_d^\top\boldsymbol\zeta+\gamma_d$,
the reduction being exact by degeneracy of $P_d^{\mathbf Z}$. As in NERM, $\mu_d^{fh}$ is the latent area target, whereas
the best and empirical best predictors are estimators of that target based
on the noisy direct estimator $Y_d$. Likewise, with known model parameters,
the best predictor of $\mu_d^{fh}$ is
$\mathbf u_d^\top\boldsymbol\zeta
+\Gamma_d^{fh}
(Y_d-\mathbf u_d^\top\boldsymbol\zeta)$,
where
$\Gamma_d^{fh}=A/(A+V_d)$.

\subsubsection{Mixed-effects random forest}
\label{sec:merf}

Let
$\mathbf Z_{dj}
=
(\mathbf W_{dj}^\top,\mathbf U_{dj}^\top)^\top$.
\cite{merf_aos} propose a machine-learning extension of the nested-error
model in Section~\ref{sec:nerm} and posit $Y_{dj}
=m_0(\mathbf Z_{dj})
+
\gamma_d
+
\varepsilon_{dj}$,
where $m_0(\cdot)$ is an unknown, potentially nonlinear and
high-dimensional function belonging to a complexity-controlled function
class, $\gamma_d\sim N(0,\sigma_\gamma^2)$, and
$\varepsilon_{dj}\sim N(0,\sigma_\varepsilon^2)$. The area effects and unit errors have the same independence
structure as in NERM.
Estimation of $m_0$, $\sigma_\gamma^2$, and
$\sigma_\varepsilon^2$ proceeds through an iterative procedure fitting a
mixed-effects random forest
\citep[MERF, see][for more details]{Hajjem03062014} to responses and adjusting
for the current area-effect predictions. Following the notation of Assumption~\ref{assump:structured-shift},
conditionally on the realized area effect $\gamma_d$, we have
$P_d^{Y\mid\mathbf Z=\mathbf z}
=
K_{\boldsymbol\eta,\delta_d}(\cdot\mid\mathbf z)
=
N\!\left(m_0(\mathbf z)+\gamma_d,
\sigma_\varepsilon^2
\right)$, the sets
$\mathcal K_{\boldsymbol\eta,\delta_d}
=
\{K_{\boldsymbol\eta,\delta_d}\}$
again being singletons, with shared component
$\boldsymbol\eta=(m_0,\sigma_\varepsilon^2)$,
area-specific component $\delta_d=\gamma_d$, $\Delta=\mathbb R$, constrained
through the Gaussian distribution as in
Sections~\ref{sec:nerm}--\ref{sec:fh}, and $P_d^{\mathbf Z}$ left
unspecified. The mean functional is
$\mu_d^{merf}
=
\int_{\mathcal Z}m_0(\mathbf z)\,dP_d^{\mathbf Z}(\mathbf z)
+
\gamma_d$.

When the covariates are treated as fixed and $P_d^{\mathbf Z}$ is represented
by their finite-population empirical distribution, the integral becomes
$N_d^{-1}\sum_{j\in\mathcal Q_d}m_0(\mathbf z_{dj})$. No closed-form
reduction is generally available for nonlinear $m_0$. From a
learning perspective, the mixed-effects random forest is a multi-task learner
with a nonparametric random-forest body and the same scalar-intercept,
ridge-shrunk heads as NERM. Assumption~\ref{assump:structured-shift}
accommodates it without modification, illustrating that SAL treats classical
mixed models and modern machine-learning predictors as different learners
for $\boldsymbol\eta$ within a single inferential structure.

\subsubsection{M-quantile regression model}
\label{sec:mq}

\cite{chambers2006m} model between-area variation through area-specific
quantile-like coefficients. For a monotone, bounded $\psi$ and a small
positive $\epsilon$, the M-quantile of order $q\in[\epsilon,1-\epsilon]$ of
$Y$ given $\mathbf Z=\mathbf z$ is the solution $Q_\psi(\mathbf z;q)$ of
$E\{\psi_q(Y-Q)\mid\mathbf Z=\mathbf z\}=0$, where
$\psi_q(u)=2\psi(u)\{qI(u>0)+(1-q)I(u\le0)\}$ \citep{breckling1988m}. The
model assumes $Q_\psi(\mathbf z;q)=\mathbf z^\top\boldsymbol\zeta_\psi(q)$ for
every $q\in[\epsilon,1-\epsilon]$, with $q\mapsto\boldsymbol\zeta_\psi(q)$
continuous, and characterises area $d$ by
$\delta_d^{\mathrm{MQ}}\in[\epsilon,1-\epsilon]$, the mean of its units'
M-quantile coefficients.\footnote{The M-quantile coefficient of unit $(d,j)$
solves $Q_\psi(\mathbf Z_{dj};q_{dj})=Y_{dj}$.}
In contrast to Sections~\ref{sec:nerm}--\ref{sec:merf}, the conditional law is specified only partially, through an estimating equation,
$P_d^{Y\mid\mathbf Z=\mathbf z}\in\mathcal K_{\boldsymbol\eta,\delta_d^{\mathrm{MQ}}} =\{P:E_P[\psi_{\delta_d^{\mathrm{MQ}}}\{Y-\mathbf z^\top\boldsymbol\zeta_\psi(\delta_d^{\mathrm{MQ}})\}]=0\}$, while $P_d^{\mathbf Z}$ is left unspecified. The covariate law $P_d^{\mathbf Z}$ is left unspecified.
In the notation of
Assumption~\ref{assump:structured-shift}, $\boldsymbol\eta
=\{\boldsymbol\zeta_\psi(q):q\in[\epsilon,1-\epsilon]\}$, with $\mathcal H$
infinite-dimensional, estimated from the pooled sample by iteratively
reweighted least squares on a grid of $q$, and
$\delta_d=\delta_d^{\mathrm{MQ}}$, a scalar selecting each area's position
within the shared family. Assumption~\ref{assump:constraint} holds with
$\rho_d(\delta_d^{\mathrm{MQ}})=|\delta_d^{\mathrm{MQ}}-q_0|$ and $q_0=1/2$,
since $\rho_d\le 1/2-\epsilon$ gives $C_D=D(1/2-\epsilon)^2$.

The M-quantile restriction defines the area-specific working prediction
function $g_d^{\mathrm{MQ}}(\mathbf z):=\mathbf
z^\top\boldsymbol\zeta_\psi(\delta_d^{\mathrm{MQ}})$. This function coincides
with the conditional mean $m_d(\mathbf z)$, and hence determines the mean
functional, when the additional condition $E(Y\mid\mathbf Z=\mathbf
z,G=d)=g_d^{\mathrm{MQ}}(\mathbf z)$ is satisfied; the mean functional is then
$\mu_d^{mq}=\bar{\mathbf
z}_d^\top\boldsymbol\zeta_\psi(\delta_d^{\mathrm{MQ}})$, where $\bar{\mathbf
z}_d:=\int_{\mathcal Z}\mathbf z\,dP_d^{\mathbf Z}(\mathbf z)$.

\section{Small area learners}\label{sec:salearners}

To show SAL in action, we present small area learners of the area mean
functional $\mu_d$ in \eqref{eq:mean_functional}.
Once the asymptotic results
for $\mu_d$ as a target are established, we make the connection to $\bar y_d$ in
\eqref{eq:target_decomposition}. Before delving into the derivations, we
introduce further notation. For a measure $P$ and a function $h$, we write
$Ph = \int h\,dP$, and define the area empirical measures $P_{n_d}h
=
n_d^{-1}\sum_{j\in\mathcal Q_d^{s}}
h(Y_{dj},\mathbf Z_{dj})$ and $P_{N_d}h
=
N_d^{-1}\sum_{j\in\mathcal Q_d}
h(Y_{dj},\mathbf Z_{dj})$ over the sampled and the population units of area $d$, which are i.i.d.\
draws from $P_d$ by Assumption~\ref{assump:design}(iii) and
Assumption~\ref{assump:iid}, respectively.
Since $\mathbf X_{dj}=(\mathbf Z_{dj}^{\top},d)^{\top}$ for every
$j\in\mathcal Q_d$, a function
$g\colon\mathcal X\to\mathbb R$ learned on the full covariate vector is
evaluated, within area $d$, at $g(\mathbf Z_{dj},d)$, and we use the
abbreviation
$g_d(\cdot):=g(\cdot,d)\colon\mathcal Z\to\mathbb R$.
Recall that, by \eqref{eq:mean_functional},
$\mu_d=E_{P_d}(Y)=E_{P_d^{\mathbf Z}}\{m_d(\mathbf Z)\}$,
with $m_d$ as in Definition~\ref{def:constraint_generic}. For any
$g\colon\mathcal X\to\mathbb R$ satisfying
$E_{P_d^{\mathbf Z}}|g_d(\mathbf Z)|<\infty$, one can write
\begin{equation}
\label{eq:expected_decompose}
\mu_d
=
E_{P_d^{\mathbf Z}}\{g_d(\mathbf Z)\}
+
E_{P_d}\{Y-g_d(\mathbf Z)\},
\end{equation}
which holds identically in $g$ and, therefore, does not require $g_d$ to coincide with $m_d$ as in \eqref{eq:area_regression}. This identity motivates the estimators in
Sections~\ref{sec:sample_no_split},\ref{sec:sal_cf}, \ref{sec:sal_agg}. 
We first consider the setup in which, apart from the sampled data
$\mathcal{S}=\bigcup_d\mathcal{S}_d$ of all $n$ observations, where
$\mathcal{S}_d=\{(Y_{dj},\mathbf{X}_{dj}):j\in\mathcal{Q}^s_d\}$,
auxiliary information is available for \emph{all} population units, i.e., for
non-sampled units we observe
$\mathcal{R}_d=\{\mathbf{X}_{dj}:j\in\mathcal{Q}_d^r\}$. 

Finally, we use the terms ``learners'' and ``estimators'' interchangeably in what follows, reflecting the close connection between our methods and estimators in semiparametric statistics (see Section~\ref{sec:lit_review}). Finally, we use the terms ``learners'' and ``estimators'' interchangeably in what follows, reflecting the close connection between our methods and estimators in semiparametric statistics (see Section~\ref{sec:lit_review}). The term ``small area learners'' emphasizes that their validity is not tied to a particular target functional or working model.  

%=====================================================================
%  Section 4.1 (learners without sample splitting), corrected, in order
%=====================================================================

\subsection{Learners without cross-fitting}\label{sec:sample_no_split}

Let $\hat g_n\colon\mathcal X\to\mathbb R$ be estimated (or learned) from
the pooled sample $\mathcal S$ of all $n$ observations. We first present
learners which do not use sample splitting; they draw on results from
empirical process theory for classes of functions whose complexity is
controlled through a bound on their covering numbers (see, for example, Chapter 19 in \citep{vandervaart1998} or Chapter 2 in \citep{vandervaartwellner1996}).

\begin{definition}[Low-complexity class]
\label{def:low_complexity}
Let $(\mathcal G_{c,n})_{n\ge1}$ be a sequence of pointwise measurable
classes of real-valued functions on $\mathcal X$, uniformly bounded
by a constant $B<\infty$, such that for a constant $A\ge e$ and indices $v_n\ge1$,
the bound
$\sup_{\tilde P}
\log N\bigl(
\varepsilon B,\mathcal G_{c,n},L_2(\tilde P)
\bigr)
\le v_n\log(A/\varepsilon)$ holds for every $0<\varepsilon<1$,
%\begin{equation*}
%\sup_{\tilde P}
%\log N\bigl(
%\varepsilon B,\mathcal G_{c,n},L_2(\tilde P)
%\bigr)
%\le
%v_n\log(A/\varepsilon),
%\qquad 
%0<\varepsilon<1,
%\end{equation*}
the supremum being over all finitely discrete probability measures $\tilde P$ on $\mathcal X$, where $N(\cdot)$ denotes the covering number.
\end{definition}

The bound in Definition~\ref{def:low_complexity}
limits the richness of the class $\mathcal G_{c,n}$.
Section~\ref{sec:low_complex} shows that it nevertheless
accommodates many commonly used flexible learners, including
methods that perform well in our simulations and empirical
application; see Sections~\ref{sec:sim_results} and
\ref{sec:app_results}, and Appendix~\ref{app:design_estimators}. Classes satisfying the condition in Definition~\ref{def:low_complexity} are called VC-type
\citep{chernozhukov2014gaussian}; the constants $A$ and $B$ do not depend on
$n$, whereas the index $v_n$ may, and a class not depending on $n$ corresponds
to $v_n\equiv v$. In semiparametric theory, the error incurred by replacing
the population limit of the learner by its pooled estimate $\hat g_n$ is known
as the first-stage or nuisance estimation error. Together with the convergence conditions in Assumption~\ref{assump:learner}, the bound renders the associated empirical-process terms asymptotically negligible. For a class not depending on $n$ this follows from
classical empirical process theory, such a class being $P$-Donsker for every
probability measure $P$ on $\mathcal X$
\citep[Theorem~2.5.2]{vandervaartwellner1996}. When the class is allowed to grow with $n$, the same conclusion follows
from a maximal inequality \citep[Corollary~5.1]{chernozhukov2014gaussian},
provided the complexity index $v_n$ does not grow too fast, as required in
Assumption~\ref{assump:entropy_growth}.

\begin{assumption}[Entropy growth]
\label{assump:entropy_growth} \noindent The index $v_n$ of Definition~\ref{def:low_complexity} satisfies
$n_d^{-1/2}v_n \log n_d \to 0$, which holds automatically when the class does not depend on $n$.
\end{assumption}

The condition in Assumption~\ref{assump:entropy_growth} restricts how fast the complexity of the learner may grow. It holds for the fixed-complexity learners of
Section~\ref{sec:low_complex}, for which $v_n = O(1)$. Assumption~\ref{assump:regime} states the conditions on the interplay between $n_d$, $N_d$, and $D_n$. 
Assumptions~\ref{assump:entropy_growth}--\ref{assump:regime} form part of the conditions used to establish the convergence results in Proposition~\ref{prop:normality_family}.

\begin{assumption}[Asymptotic regime]
\label{assump:regime}
Consider a sequence of populations indexed by $n$. The $n$th population consists of $D_n$ areas, indexed by $\mathcal D_n$, the area $d\in\mathcal D_n$ having population size $N_d$ and sample size $n_d$, with $n=\sum_{d\in\mathcal D_n}n_d$; the dependence of $D_n$, $N_d$ and $n_d$ on
$n$ is suppressed in the notation. For the area $d$ under consideration: (i) $n_d\to\infty$; (ii) $n_d/N_d\to0$; (iii) the number of areas satisfies either $\sup_nD_n<\infty$ or $D_n\to\infty$.
\end{assumption}

Assumption~\ref{assump:regime} does not impose directly any rate on $D_n$; when $D_n\to\infty$, restrictions on its growth arise indirectly through the complexity index $v_n$ in Assumption~\ref{assump:entropy_growth} and the area-specific convergence
condition in Assumption~\ref{assump:learner}. The bound on the deviations in Assumption~\ref{assump:constraint} plays no part in these conditions, for the reasons given in
Remark~\ref{rem:constraint_role}. Applying \eqref{eq:constraint} to the $n$th population, whose area index set is $\mathcal D_n$, provides the bound $C_{D_n}$; the results below are in particular compatible with $C_{D_n}=O(D_n)$, and hence with nonvanishing average heterogeneity across areas. Definition~\ref{def:SAL_family} presents the general small area learner,
obtained from the empirical counterpart of \eqref{eq:expected_decompose} by
weighting its sample residual correction.

\begin{definition}[Small area learner]
\label{def:SAL_family}
Let $\hat g_{n,d}=\hat g_n(\cdot,d)$ be estimated (or learned) on the pooled sample
$\mathcal S$ and $a\in[0,1]$ be a correction weight. The \emph{small area learner} is
%given as follows
\begin{equation}
\label{eq:sal_family}
\widehat\mu_d(a)
=
P_{N_d}\hat g_{n,d}
+
a\,P_{n_d}(Y-\hat g_{n,d}).
\end{equation}
\end{definition}
Different values of $a$ identify well-known estimators as
shown in Examples~\ref{ex:SSE}--\ref{ex:syn}.
\begin{example}\label{ex:SSE}
For $a=1$, SAL reduces to a semi-supervised estimator
of the mean \citep{zhang_high-dimensional_2022},
with a residual correction estimated from the $n_d$
sampled units and applied to the entire population: $\widehat{\mu}^{sal}_{d}
=
P_{N_d}\hat g_{n,d}
+
P_{n_d}(Y-\hat g_{n,d})
=
\frac{1}{N_d}\sum_{j\in\mathcal Q_d}
\hat g_{n,d}(\mathbf Z_{dj})
+
\frac{1}{n_d}\sum_{j\in\mathcal Q_d^s}
\{Y_{dj}-\hat g_{n,d}(\mathbf Z_{dj})\}$.
%\begin{equation*}
%\widehat{\mu}^{sal}_{d}
%=
%P_{N_d}\hat g_{n,d}
%+
%P_{n_d}(Y-\hat g_{n,d})
%=
%\frac{1}%{N_d}\sum_{j\in\mathcal Q_d}
%\hat g_{n,d}(\mathbf Z_{dj})
%+
%\frac{1}{n_d}\sum_{j\in\mathcal Q_d^s}
%\{Y_{dj}-\hat g_{n,d}(\mathbf %Z_{dj})\}.
%\end{equation*}
Under a fixed-size, self-weighting design satisfying Assumption~\ref{assump:design}, $p_d=n_d/N_d$, and $\widehat{\mu}^{sal}_{d}$ coincides with the model-assisted generalized difference estimator
\citep{cassel1976some}.
\end{example}
\begin{example}\label{ex:SAE}
For $a = f_d := n_d/N_d$, SAL collapses to a model-based estimator, often used in small area estimation 
\citep{morales2021course,rao2015} which retains the observed outcomes on
$\mathcal Q_d^s$ and imputes only the nonsampled units $\mathcal Q_d^r$ $\widehat{\mu}^{sae}_{d}
=
P_{N_d}\hat g_{n,d}
+\frac{n_d}{N_d}
P_{n_d}(Y-\hat g_{n,d})
=
\frac{1}{N_d}
\left\{
\sum_{j\in\mathcal Q_d^s}Y_{dj}
+
\sum_{j\in\mathcal Q_d^r}
\hat g_{n,d}(\mathbf Z_{dj})
\right\}$.
%\begin{equation*}
%\widehat{\mu}^{sae}_{d}
%=
%P_{N_d}\hat g_{n,d}
%+\frac{n_d}{N_d}
%P_{n_d}(Y-\hat g_{n,d})
%=
%\frac{1}{N_d}
%\left\{
%\sum_{j\in\mathcal Q_d^s}Y_{dj}
%+
%\sum_{j\in\mathcal Q_d^r}
%\hat g_{n,d}(\mathbf Z_{dj})
%\right\}.
%\end{equation*}
\end{example}

\begin{example}\label{ex:syn}
For $a = 0$, SAL collapses to a synthetic estimator \citep{rao2015},
which predicts (imputes) every population unit by an estimator and uses the
observed outcomes only through the fit $\widehat{\mu}^{syn}_{d} = P_{N_d}\hat g_{n,d}
=
\frac{1}{N_d}
\sum_{j\in\mathcal Q_d}
\hat g_{n,d}(\mathbf Z_{dj})$.
%\begin{equation*}
%\widehat{\mu}^{syn}_{d}
%=
%P_{N_d}\hat g_{n,d}
%=
%\frac{1}{N_d}
%\sum_{j\in\mathcal Q_d}
%\hat g_{n,d}(\mathbf Z_{dj}).
%\end{equation*}
\end{example}
The estimators in Examples~\ref{ex:SSE}--\ref{ex:syn}
differ only in the weight assigned to the area mean
residual. Under the assumptions of
Proposition~\ref{prop:normality_family},
$\widehat\mu^{sae}_d-\widehat\mu^{syn}_d
=f_dP_{n_d}(Y-\hat g_{n,d})=o_p(1)$,
whereas $\widehat\mu^{sal}_d$
is the fully corrected member with $a=1$.
%Under Assumption~\ref{assump:regime}(ii),
%$f_d\to 0$, so the model-based predictor
%$\widehat\mu^{sae}_d$ and the synthetic estimator
%$\widehat\mu^{syn}_d$ are asymptotically equivalent, whereas $\widehat\mu^{sal}_d$
%is the fully corrected member with $a=1$.
Proposition~\ref{prop:normality_family}
characterizes the asymptotic consequences of these
different correction weights. $\widehat{\mu}_d^{sal}$ in Example~\ref{ex:SSE} also connects to recent developments in model-assisted estimation with flexible learners, including the learner-agnostic framework of
\cite{an2026agnostic}. Moreover, its residual correction is closely
related to the orthogonal estimating equations used in
debiased machine learning \citep{chernozhukov2018},
although the inferential frameworks differ, cf. Sections~\ref{sec:ML_design-based}-\ref{sec:lit_ssl} for a full literature review.

To describe the limit of \eqref{eq:sal_family} we need the population target
of the pooled learner. Let $\varpi_{d,n}=n_d/n$ be the sample proportion of area $d$  to the pooled sample. Recall from \eqref{eq:area_law} that
$P_d^{\mathbf Z}=\mathcal L(\mathbf Z\mid G=d)$ is the conditional law of the
covariates given membership of area $d$. Let $\bar P_n$ be the law of
$\mathbf X=(\mathbf Z^\top,G)^\top$ on $\mathcal X$ defined by $\bar P_n(\mathbf Z\in A,\,G=d)
=
\varpi_{d,n}\,P_d^{\mathbf Z}(A)$, $A\in\mathcal B(\mathcal Z)$, $d\in\mathcal D_n$ that is, the mixture of the area covariate laws with weights given by the composition of the pooled sample. Assumption~\ref{assump:learner} states two properties of the pooled learner
$\hat g_n$ of Definition~\ref{def:SAL_family}, necessary for proving the
asymptotic normality of $\widehat\mu_d(a)$ in \eqref{eq:sal_family} in
Proposition~\ref{prop:normality_family}. First, $\hat g_n$ must eventually
lie in the low-complexity class of Definition~\ref{def:low_complexity}.
Second, its restriction $\hat g_{n,d}$ converges in mean square to a fixed
function in the area under consideration, as required by the
empirical-process arguments in the proof.

\begin{assumption}[Nuisance learner]
\label{assump:learner}
The pooled learner $\hat g_n$ satisfies
$\Pr(\hat g_n\in\mathcal G_{c,n})\to1$, and there exists a function $g^*$, its
population limit, with $g^*_d\in L_2(P_d^{\mathbf Z})$, such that
$r_{n,d}:=\|\hat g_{n,d}-g^*_d\|_{P_d^{\mathbf Z},2}=o_p(1)$. If
$\sup_n v_n=\infty$, where $v_n$ is the uniform entropy exponent of
$\mathcal G_{c,n}$ in Definition~\ref{def:low_complexity}, we require in
addition $r_{n,d}\sqrt{v_n\log n_d}=o_p(1)$.
\end{assumption}

Assumption~\ref{assump:learner} requires only convergence to a fixed limit,
and the results below depend on $g^*$ only through the expected value
$\beta_d$ and the variance $\sigma_d^2$ of the population residual, both
defined in \eqref{eq:sigma_d}. If the learner minimises the empirical counterpart of the pooled squared-error risk
$
\sum_{d\in\mathcal D_n}\varpi_{d,n}E_{P_d}[{Y-g_d(\mathbf Z)}^2]
$
over a nonempty, closed, and convex class, as for example, the least squares does, then its stage-$n$ target is the $L_2(\bar P_n)$ projection of $m$ in \eqref{eq:generic_m} onto $\mathcal G_{c,n}$. This projection is unique by the projection theorem in Hilbert spaces \citep[e.g.][Theorem~5.2]{brezis2011functional}. When the sequence of projections converges, $g^*$ is its limit.
Under correct specification, the projection is $m$ itself in \eqref{eq:generic_m}. Consequently,  $g^*_d$ is $m_d$ in \eqref{eq:area_regression},  $\beta_d=0$, and $\sigma^2_d$ attains its minimum. For learners that are not empirical risk minimisers, including tree-based learners with early stopping, $g^*$ is simply the limit of the algorithm. Thus, any procedure satisfying the stated complexity and convergence conditions may be used as a learner or estimator for $g^*$, without restriction to empirical risk minimisers.
%Thus, any procedure known to be consistent may be used as a learner or estimator for $g^*$, without restriction to empirical risk minimisers.

\begin{remark}[On pooled versus area-wise consistency]
\label{rem:pooled_areawise}
Assumption~\ref{assump:learner} requires mean-square
consistency in the area under consideration. Let
$\|h\|_{\bar P_n}^2
:=\sum_{d\in\mathcal D_n}\varpi_{d,n}
E_{P_d^{\mathbf Z}}\{h_d(\mathbf Z)^2\}$.
Since $\|\hat g_{n,d}-g_d^*\|_{P_d^{\mathbf Z},2}^2
\leq\varpi_{d,n}^{-1}\|\hat g_n-g^*\|_{\bar P_n}^2$,
pooled consistency implies area-wise consistency
whenever $\varpi_{d,n}$ is bounded away from zero.
When $\varpi_{d,n}\to0$, the same conclusion
follows if
$\|\hat g_n-g^*\|_{\bar P_n}^2
=o_p(\varpi_{d,n})$.
If $v_n$ is unbounded, the stronger rate
$o_p\{\varpi_{d,n}/(v_n\log n_d)\}$
also ensures the additional convergence requirement
in Assumption~\ref{assump:learner}. %These conditions accommodate both bounded and growing numbers of areas, with the required pooled convergence rate determined by the sample share of the area under consideration and, when applicable, the complexity of the learner.
\end{remark}

Assumption~\ref{assump:learner} ensures mean-square
convergence of the learner to $g^*$ but does not control
the rate of its area mean error. Assumption~\ref{assump:learner_mean} provides this additional control.

\begin{assumption}[Learner mean error]
\label{assump:learner_mean}
For the area $d$ under consideration,
$\sqrt{n_d}\,P_d(\hat g_{n,d}-g_d^*)=o_p(1)$.
\end{assumption}

Assumption~\ref{assump:learner_mean} adds a rate condition on the area mean error to the mean-square convergence required by
Assumption~\ref{assump:learner}. For a regular parametric
learner converging at rate $n^{-1/2}$, it holds when
$n_d/n\to0$, but is not generally guaranteed when
$n_d/n$ remains bounded away from zero. For
nonparametric learners, it requires separate
verification. As shown below, the condition is
needed for the root-$n_d$ limit of partially corrected
members of the family in Definition~\ref{def:SAL_family}, but not for the fully corrected estimator.

For each area, let
$\varepsilon_d^*:=Y-g_d^*(\mathbf Z)$ denote
the limiting population residual, such that %with mean and variance
\begin{equation}
\label{eq:sigma_d}
\beta_d:=E_{P_d}(\varepsilon_d^*)
=\mu_d-E_{P_d^{\mathbf Z}}\{g_d^*(\mathbf Z)\},
\qquad
\sigma_d^2:=\operatorname{Var}_{P_d}(\varepsilon_d^*)=E_{P_d}
\bigl[
\{\varepsilon_d^*-E_{P_d}(\varepsilon_d^*)\}^2
\bigr].
\end{equation}
Both quantities are finite when
$E_{P_d}(Y^2)<\infty$ and
$g_d^*\in L_2(P_d^{\mathbf Z})$.
The mean residual $\beta_d$ measures the
area-level approximation bias and vanishes whenever $g_d^*$ in Assumption~\ref{assump:learner} reproduces the area mean of
$m_d$ in \eqref{eq:area_regression}, in particular when $g_d^*=m_d$
$P_d^{\mathbf Z}$-almost surely.

\begin{proposition}[Asymptotic normality]
\label{prop:normality_family}
Let $\hat\mu_d(a_n)$ be the estimator in
\eqref{eq:sal_family}, evaluated at a
deterministic sequence $a_n\in[0,1]$ with
$a_n\to a\in[0,1]$. Suppose that
$\sigma_d^2>0$, $E_{P_d}(Y^2)<\infty$, and
Assumptions~\ref{assump:design},
\ref{assump:iid}, \ref{assump:entropy_growth},
\ref{assump:regime} and
\ref{assump:learner} hold. Then:
\begin{enumerate}
\item[(i)]
$\hat\mu_d(a_n)-\mu_d
\xrightarrow{p}(a-1)\beta_d$;
\item[(ii)]
if $a_n\equiv1$, then $\sqrt{n_d}
(\hat\mu_d^{sal}-\mu_d)
\xrightarrow{d}N(0,\sigma_d^2)$;
%\begin{equation*}
%\label{eq:clt_sal}
%\sqrt{n_d}
%(\hat\mu_d^{sal}-\mu_d)
%\xrightarrow{d}N(0,\sigma_d^2);
%\end{equation*}
\item[(iii)]
if $a\in(0,1)$, then under
Assumption~\ref{assump:learner_mean},
\begin{equation*}
\label{eq:clt_family}
\sqrt{n_d}
\{\hat\mu_d(a_n)-\mu_d-(a_n-1)\beta_d\}
\xrightarrow{d}N(0,a^2\sigma_d^2);
\end{equation*}

\item[(iv)]
if $a_n\to0$ and, in place of
Assumption~\ref{assump:learner_mean},
$P_d(\hat g_{n,d}-g_d^*)
=O_p(n^{-\gamma})$
for some $\gamma\in(0,1/2]$, then
\begin{equation*}
\label{eq:rate_synthetic}
\hat\mu_d(a_n)-\mu_d-(a_n-1)\beta_d
=
O_p\left(
a_n n_d^{-1/2}
+N_d^{-1/2}
+n^{-\gamma}
\right).
\end{equation*}
\end{enumerate}
\end{proposition}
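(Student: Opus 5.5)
The plan is to decompose the estimation error into an oracle term, an empirical-process remainder and a population-average term. Using the identity \eqref{eq:expected_decompose} with $g=g^*$ and writing $\hat g=\hat g_{n,d}$, $g^*=g^*_d$, we have
\begin{equation*}
\widehat\mu_d(a_n)-\mu_d-(a_n-1)\beta_d
=a_n(P_{n_d}-P_d)\varepsilon_d^*
+(P_{N_d}-P_d)g^*
+R_{N}
-a_nR_{n}
+(1-a_n)P_d(\hat g-g^*),
\end{equation*}
where $R_{n}:=(P_{n_d}-P_d)(\hat g-g^*)$ and $R_N:=(P_{N_d}-P_d)(\hat g-g^*)$. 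This is an algebraic rearrangement: $P_{N_d}\hat g=P_dg^*+(P_{N_d}-P_d)g^*+R_N+P_d(\hat g-g^*)$, and $P_{n_d}(Y-\hat g)=\beta_d+(P_{n_d}-P_d)\varepsilon_d^*-R_n-P_d(\hat g-g^*)$, using $P_d\varepsilon_d^*=\beta_d$ and $P_dg^*+\beta_d=\mu_d$. The key feature is that the mean error $P_d(\hat g-g^*)$ enters with coefficient $(1-a_n)$, so it cancels exactly when $a_n\equiv1$. This cancellation is why (ii) needs no Assumption~\ref{assump:learner_mean}.

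\textbf{Step 1: the oracle terms.} By Assumption~\ref{assump:design}(iii) and the classical CLT with $E_{P_d}Y^2<\infty$, we get $\sqrt{n_d}(P_{n_d}-P_d)\varepsilon_d^*\to N(0,\sigma_d^2)$. Likewise $(P_{N_d}-P_d)g^*=O_p(N_d^{-1/2})$ by Assumption~\ref{assump:iid}, which is $o_p(n_d^{-1/2})$ under Assumption~\ref{assump:regime}(ii). Slutsky's lemma with $a_n\to a$ then gives the $N(0,a^2\sigma_d^2)$ limit for the leading term.

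\textbf{Step 2: the empirical-process remainders.} This is the main obstacle, since $\hat g$ is fitted on the pooled sample, which contains the area-$d$ sample itself. I would handle it uniformly over the local class $\mathcal F_n=\{g_d-g_d^*:g\in\mathcal G_{c,n},\ \|g_d-g_d^*\|_{P_d^{\mathbf Z},2}\le\delta_n\}$. Here $\delta_n\downarrow0$ is chosen slowly enough that $r_{n,d}\le\delta_n$ with probability tending to one, which Assumption~\ref{assump:learner} permits. On the event $\hat g_n\in\mathcal G_{c,n}$ we then have $|R_n|\le\sup_{f\in\mathcal F_n}|(P_{n_d}-P_d)f|$.

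Two cases follow. If the class is fixed ($v_n\equiv v$), $\mathcal G_{c,n}$ is uniformly Donsker by \citep[Theorem~2.5.2]{vandervaartwellner1996}, and asymptotic equicontinuity gives $\sqrt{n_d}R_n=o_p(1)$. If $v_n$ grows, I would invoke the maximal inequality \citep[Corollary~5.1]{chernozhukov2014gaussian}, with envelope of order $B$ (after adding $g^*$, assumed bounded or truncated). This yields
\[
E\sup_{\mathcal F_n}\sqrt{n_d}|(P_{n_d}-P_d)f|\lesssim\delta_n\sqrt{v_n\log n_d}+\frac{Bv_n\log n_d}{\sqrt{n_d}},
\]
and both terms vanish by Assumption~\ref{assump:learner} (the condition $r_{n,d}\sqrt{v_n\log n_d}=o_p(1)$) and Assumption~\ref{assump:entropy_growth}. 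The same bound with $N_d$ in place of $n_d$ gives $R_N=o_p(N_d^{-1/2})$. This relies on the non-sampled population units being i.i.d.\ from $P_d$ and on the fit's dependence on them being negligible, since $n_d/N_d\to0$. I would argue $R_N$ via the sample/nonsample split: the nonsample part is independent of $\hat g$, so conditionally its variance is at most $r_{n,d}^2/N_d$, and the sample part carries weight $n_d/N_d$ times $R_n$.

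\textbf{Step 3: assembling the four claims.} For (ii), $a_n\equiv1$ kills the mean-error term, and Steps 1–2 give the CLT. For (iii), Assumption~\ref{assump:learner_mean} makes $\sqrt{n_d}(1-a_n)P_d(\hat g-g^*)=o_p(1)$, and the limit is $N(0,a^2\sigma_d^2)$. For (iv), the terms are bounded directly: the first is $O_p(a_nn_d^{-1/2})$, the second $O_p(N_d^{-1/2})$, the remainders are of smaller order, and the mean error is $O_p(n^{-\gamma})$. For (i), consistency follows because every term is $o_p(1)$, using $|P_d(\hat g-g^*)|\le r_{n,d}$ by Cauchy–Schwarz. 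The centring $(a_n-1)\beta_d$ tends to $(a-1)\beta_d$.
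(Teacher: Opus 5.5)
Your decomposition is exactly the paper's identity \eqref{eq:family_decomposition}, and your treatment of the oracle terms, the mean-error term and the four cases follows the paper's proof step by step. Handling a fixed class through uniform Donsker and asymptotic equicontinuity, instead of the paper's single maximal-inequality argument, is a harmless variation.

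The one step that does not go through as you finally propose it is the sample/non-sample split for $R_N$. That argument needs the non-sampled units of area $d$ to be i.i.d.\ $P_d$ and independent of the pooled sample, hence of $\hat g_{n,d}$. Assumption~\ref{assump:design} does not give this: it restricts only the unit-wise inclusion probabilities and the law of the sampled pairs. A value-dependent selection rule can satisfy (i)--(iii) while leaving the non-sampled values dependent on the sampled ones. The paper imposes independence of the sampling indicators from the population only in Assumption~\ref{assump:cf_learner}. That is why the split is used in Lemma~\ref{lemma:cf_population_averaging} but not here.

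The argument you mention first is the correct one. Contrary to your remark, it relies on no independence and on no negligibility of $n_d/N_d$. On $\{\hat g_n\in\mathcal G_{c,n}\}\cap\{r_{n,d}\le\delta_n\}$ one has, deterministically, $\sqrt{N_d}\,|R_N|\le\sup_{f\in\mathcal F_n}|\mathbb G_{N_d}f|$, where $\mathcal F_n$ is a nonrandom class. The $N_d$ population units are i.i.d.\ $P_d$ by Assumption~\ref{assump:iid}, so the maximal inequality bounds $\mathbb G_{N_d}$ exactly as it bounds $\mathbb G_{n_d}$. This is the closing remark of the proof of Lemma~\ref{lemma:empirical_process}.

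Two smaller points.
\begin{itemize}
\item You do not need to assume or truncate $g_d^*$ to get a bounded envelope. If $P_d^{\mathbf Z}(|g_d^*|>B+\epsilon)>0$, then on $\{\hat g_n\in\mathcal G_{c,n}\}$ one has $r_{n,d}^2\ge\epsilon^2P_d^{\mathbf Z}(|g_d^*|>B+\epsilon)$, contradicting $r_{n,d}=o_p(1)$. Hence $|g_d^*|\le B$ almost surely (Lemma~\ref{lemma:boundness_g_star}), and $\mathcal F_n$ has envelope $2B$ with the same entropy bound as $\mathcal G_{c,n}$.
\item In the growing-$v_n$ case the radius $\delta_n$ must be deterministic, with $\Pr(r_{n,d}>\delta_n)\to0$ and $\delta_n\sqrt{v_n\log n_d}\to0$. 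You obtain it by applying the standard fact that an $o_p(1)$ sequence is eventually dominated by a deterministic null sequence, here to $r_{n,d}\sqrt{v_n\log n_d}$. Also take $\delta_n\ge n_d^{-1}$, so that $\log(AB/\delta_n)\lesssim\log n_d$, which your displayed bound uses implicitly.
\end{itemize}
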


Proposition~\ref{prop:normality_family} distinguishes the fully corrected estimator from the other family members. By part (ii), $\hat\mu_d^{sal}$ is consistent and asymptotically normal around $\mu_d$ \emph{without} requiring $g_d^*=m_d$, with $m_d$ as in \eqref{eq:area_regression}, or the area mean learner-estimation condition in Assumption~\ref{assump:learner_mean}, because the full residual correction eliminates both the limiting bias $(a-1)\beta_d$ and the corresponding area mean learner-estimation term. In contrast, the model-based and synthetic estimators in Examples~\ref{ex:SAE} and \ref{ex:syn} converge to $\mu_d-\beta_d$ and are therefore consistent for $\mu_d$ only when $\beta_d=0$. The cross-fitted family $\hat\mu_d^{cf}(a)$ in \eqref{eq:cf_family} has the same first-order bias--variance structure, as discussed in Section~\ref{sec:sal_cf}.

\subsection{Cross-fitted small area learners}
\label{sec:sal_cf}

The low-complexity requirement in Definition~\ref{def:low_complexity},
together with Assumption~\ref{assump:entropy_growth}, restricts the pooled
learner in Definition~\ref{def:SAL_family} to classes of fixed or suitably
controlled complexity; examples are given in Section~\ref{sec:low_complex}.
Cross-fitting \citep{chernozhukov2018dml} removes these empirical-process
complexity requirements by training the learner on one part of the sample and
evaluating it on another, so that under the sampling and fold-allocation conditions stated below, and conditionally on $\mathcal F_{\mathcal P}$ defined in Assumption~\ref{assump:iid}, the fitted function is independent of the observations on which it is evaluated. Widely used in semiparametric and
machine-learning-based inference, and more recently in design-based survey
statistics \citep{an2026agnostic,dagdoug2026machine}, cross-fitting therefore
extends SAL to learners such as deep forests, boosting with growing depth, and
neural networks.

\begin{definition}[Cross-fitted small area learner]
\label{def:SAL_cf_family}
Let $\mathcal I_1,\dots,\mathcal I_L$ be a partition of the pooled sampled
index set $\mathcal Q^s=\bigcup_d\mathcal Q^s_d$ into $L\ge2$ folds, with $L$
fixed. For each fold $\ell$, let
$\hat g_n^{(-\ell)}\colon\mathcal X\to\mathbb R$ be the pooled learner trained
on the sampled data outside $\mathcal I_\ell$, and write
$\hat g_{n,d}^{(-\ell)}:=\hat g_n^{(-\ell)}(\cdot,d)$ and
$\bar g_{n,d}:=L^{-1}\sum_{\ell=1}^L\hat g_{n,d}^{(-\ell)}$. Let
$\hat g^{\mathrm{oof}}_{n,d}$ denote the out-of-fold learner, defined at the
sampled units of area $d$ by
$\hat g^{\mathrm{oof}}_{n,d}(\mathbf Z_{dj}):=\hat g^{(-\ell)}_{n,d}(\mathbf Z_{dj})$
for $j\in\mathcal Q_d^s\cap\mathcal I_\ell$ and $\ell=1,\dots,L$. For a
correction weight $a\in[0,1]$, the cross-fitted small area learner is
\begin{equation}
\label{eq:cf_family}
\widehat\mu^{cf}_d(a)
=
P_{N_d}\bar g_{n,d}
+
a\,P_{n_d}(Y-\hat g^{\mathrm{oof}}_{n,d}).
\end{equation}
\end{definition}

The first term of \eqref{eq:cf_family} uses the fold-averaged fit
$\bar g_{n,d}$, while the second computes each sampled unit's residual from
the fit that did not use its own fold. The values $a=1$, $a=f_d$ and $a=0$
give the cross-fitted counterparts of the estimators in
Examples~\ref{ex:SSE}--\ref{ex:syn}, and we write
$\widehat\mu^{cf}_d:=\widehat\mu^{cf}_d(1)$ At $a=f_d$, this construction differs from the observed-plus-imputed representation in Example~\ref{ex:SAE} by $f_dP_{n_d}(\bar g_{n,d}-\hat g^{\mathrm{oof}}_{n,d})$. Assumptions~\ref{assump:cf_learner} and \ref{assump:cf_learner_mean} replace Definition~\ref{def:low_complexity} and
Assumptions~\ref{assump:entropy_growth}, \ref{assump:learner} and
\ref{assump:learner_mean} at the expense of sample splitting. The population limit $g^*$ plays the same role as in Assumption~\ref{assump:learner}.

\begin{assumption}[Cross-fitted nuisance learner]
\label{assump:cf_learner}
Conditionally on $\mathcal F_{\mathcal P}$ defined in Assumption~\ref{assump:iid}, the sampling indicators are independent of the population pairs. The fold allocation is independent of the observed values, conditional on the area labels and sampling indicators. For the area $d$ under consideration, the folds are stratified so that
$|n_{d\ell}-n_d/L|\le1$ for $\ell=1,\dots,L$, where
$n_{d\ell}=|\mathcal Q_d^s\cap\mathcal I_\ell|$, and the sampling fraction
satisfies $n_d^{3/2}/N_d\to0$. There exist
$g^*\colon\mathcal X\to\mathbb R$ and a constant $B<\infty$ with
$|g^*_d|\le B$ and $|\hat g^{(-\ell)}_{n,d}|\le B$ for $\ell=1,\dots,L$, such
that $\|\hat g_{n,d}^{(-\ell)}-g_d^*\|_{P_d^{\mathbf Z},2}=o_p(1)$,
$\ell=1,\dots,L$.
\end{assumption}

\begin{lemma}[Population averaging of the fold-averaged learner]
\label{lemma:cf_population_averaging}
Under Assumptions~\ref{assump:design}, \ref{assump:iid} and
\ref{assump:cf_learner}, the fold-averaged learner satisfies
\begin{equation}
\label{eq:cf_population_averaging}
(P_{N_d}-P_d)(\bar g_{n,d}-g_d^*)
=
O_p\left(\frac{n_d}{N_d}+N_d^{-1/2}\right),
\quad
\sqrt{n_d}\,(P_{N_d}-P_d)(\bar g_{n,d}-g_d^*)
=o_p(1).
\end{equation}
\end{lemma}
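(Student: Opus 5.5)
The plan is to split the population empirical measure $P_{N_d}$ into its sampled and nonsampled parts and handle each separately. Write
\[
P_{N_d}h=\frac{n_d}{N_d}P_{n_d}h+\frac{N_d-n_d}{N_d}P^r_{N_d-n_d}h,
\]
where $P^r_{N_d-n_d}$ is the empirical measure over $\mathcal Q_d^r$. Apply this to $h=\bar g_{n,d}-g_d^*$ for each fold term $\hat g^{(-\ell)}_{n,d}-g_d^*$. The fold-averaged learner is trained only on sampled units, so it is a function of $\mathcal S$ alone. It therefore does not depend on the nonsampled pairs.

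\emph{Nonsampled part.} Conditionally on $\mathcal F_{\mathcal P}$, Assumption~\ref{assump:cf_learner} makes the sampling indicators independent of the population pairs. Together with Assumption~\ref{assump:iid}, this means that, given $\mathcal S$ and the set $\mathcal Q_d^s$, the nonsampled pairs $(Y_{dj},\mathbf Z_{dj})_{j\in\mathcal Q_d^r}$ are still i.i.d.\ $P_d$. They are also independent of $\bar g_{n,d}$. Conditioning on $\mathcal S$, the term $(P^r_{N_d-n_d}-P_d)(\bar g_{n,d}-g_d^*)$ is then a centred average of $N_d-n_d$ i.i.d.\ variables. Its conditional variance is at most
\[
(N_d-n_d)^{-1}\|\bar g_{n,d}-g_d^*\|^2_{P_d^{\mathbf Z},2}\le (N_d-n_d)^{-1}4B^2.
\]
Conditional Chebyshev then gives $O_p(N_d^{-1/2})$. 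Using the $o_p(1)$ $L_2$ rate from Assumption~\ref{assump:cf_learner}, it is even $o_p(N_d^{-1/2})$, since boundedness allows passage from conditional to unconditional bounds. Because $n_d/N_d\to0$, the factor $(N_d-n_d)/N_d$ is $1+O(n_d/N_d)$.

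\emph{Sampled part and recombination.} Rewrite the decomposition as
\[
(P_{N_d}-P_d)h=\frac{n_d}{N_d}(P_{n_d}h-P_dh)+\frac{N_d-n_d}{N_d}(P^r_{N_d-n_d}-P_d)h .
\]
On the sampled units, $\bar g_{n,d}$ is not independent of the observations, because each fold fit uses the other folds. However, $|h|\le 2B$ by Assumption~\ref{assump:cf_learner}, so the crude bound $|P_{n_d}h-P_dh|\le 4B$ suffices. The sampled part is thus $O(n_d/N_d)$ deterministically, with no empirical-process control needed. Combining both parts gives the first display in \eqref{eq:cf_population_averaging}, $O_p(n_d/N_d+N_d^{-1/2})$.

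\emph{Scaling by $\sqrt{n_d}$.} Multiplying by $\sqrt{n_d}$ gives $O_p(n_d^{3/2}/N_d+\sqrt{n_d/N_d})$. The first term vanishes by the condition $n_d^{3/2}/N_d\to0$ in Assumption~\ref{assump:cf_learner}. The second vanishes by Assumption~\ref{assump:regime}(ii), and is in any case implied by the first condition since $n_d\ge1$.

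The step needing the most care is justifying the conditional i.i.d.\ structure of the nonsampled units given the training data. This requires that the fold allocation and sampling are independent of the outcome values conditionally on $\mathcal F_{\mathcal P}$, and that $\hat g^{(-\ell)}$ depends only on $\mathcal S$ and the fold labels. I would state this as a short conditioning lemma, then pass from the conditional Chebyshev bound to an unconditional $O_p$ statement using the uniform bound $B$.
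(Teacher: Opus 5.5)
Your proposal is correct and follows essentially the same route as the paper's proof: the same split of $P_{N_d}$ into sampled and nonsampled parts, the crude $4B$ bound on the sampled part (because $\bar g_{n,d}$ is not independent of the full area sample), a conditional Chebyshev bound on the nonsampled part given the training data using $\|\bar g_{n,d}-g_d^*\|_{P_d^{\mathbf Z},2}=o_p(1)$, and the condition $n_d^{3/2}/N_d\to0$, which also gives $n_d/N_d\to0$, for the $\sqrt{n_d}$-scaled statement. Your explicit use of the independence of the sampling indicators from the population pairs in Assumption~\ref{assump:cf_learner}, to justify that the nonsampled units are i.i.d.\ and independent of the training data, is slightly more careful than the paper's appeal to Assumptions~\ref{assump:design} and~\ref{assump:iid}.
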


Condition \eqref{eq:cf_population_averaging} controls the first term of
\eqref{eq:cf_family} evaluated over the full population. Cross-fitting
provides independence between each fitted learner and the observations in its evaluation fold. This independence can be extended to the full-population average by evaluating sampled units at their own out-of-fold fits and nonsampled units at $\bar g_{n,d}$, replacing $P_{N_d}\bar g_{n,d}$ with
$N_d^{-1}\{\sum_{j\in\mathcal Q_d^s}
\hat g^{\mathrm{oof}}_{n,d}(\mathbf Z_{dj})
+\sum_{j\in\mathcal Q_d^r}
\bar g_{n,d}(\mathbf Z_{dj})\}$.
Under this construction, each evaluation is independent of the data used to produce the corresponding fit, allowing the condition $n_d^{3/2}/N_d\to0$ in
Assumption~\ref{assump:cf_learner} to be replaced by $n_d/N_d\to0$, as already required by Assumption~\ref{assump:regime}(ii). Alternatively, when retaining the original formulation $P_{N_d}\bar g_{n,d}$, this rate condition provides the additional control needed for the sampled part of the population average, whose units also contribute to training $\bar g_{n,d}$.

\begin{assumption}[Cross-fitted learner mean error]
\label{assump:cf_learner_mean}
For the area $d$ under consideration,
$\sqrt{n_d}\,P_d(\hat g^{(-\ell)}_{n,d}-g^*_d)=o_p(1)$ for
$\ell=1,\dots,L$.
\end{assumption}

Assumption~\ref{assump:cf_learner_mean} is the counterpart of Assumption~\ref{assump:learner_mean} and is used for the centred root-$n_d$ normal limit when $a\in(0,1)$; it is not required for the fully corrected estimator with $a_n\equiv1$. Cross-fitting does not dispense with it: the area mean
of the estimation error is a population quantity, unaffected by the
independence between a fitted learner and its evaluation fold. Proposition~\ref{prop:normality_cf_family} shows normality of $\widehat\mu^{cf}_d(a_n)$ in \eqref{eq:cf_family}.

\begin{proposition}[Asymptotic normality of the cross-fitted learner]
\label{prop:normality_cf_family}
Let $\hat\mu_d^{cf}(a_n)$ be the estimator in
\eqref{eq:cf_family}, evaluated at a deterministic
sequence $a_n\in[0,1]$ with $a_n\to a\in[0,1]$.
Suppose that $\sigma_d^2>0$,
$E_{P_d}(Y^2)<\infty$, and
Assumptions~\ref{assump:design},
\ref{assump:iid}, \ref{assump:regime}
and \ref{assump:cf_learner} hold. Then:
\begin{enumerate}

\item[(i)]
$\hat\mu_d^{cf}(a_n)-\mu_d
\xrightarrow{p}(a-1)\beta_d$;

\item[(ii)]
if $a_n\equiv1$, then
$\sqrt{n_d}(\hat\mu_d^{cf}-\mu_d)
\xrightarrow{d}N(0,\sigma_d^2)$;

\item[(iii)]
if $a\in(0,1)$, then under
Assumption~\ref{assump:cf_learner_mean},
\begin{equation*}
\sqrt{n_d}
\{\hat\mu_d^{cf}(a_n)-\mu_d
-(a_n-1)\beta_d\}
\xrightarrow{d}N(0,a^2\sigma_d^2);
\end{equation*}

\item[(iv)]
if $a_n\to0$ and, in place of
Assumption~\ref{assump:cf_learner_mean},
$P_d(\bar g_{n,d}-g_d^*)
=O_p(n^{-\gamma})$
for some $\gamma\in(0,1/2]$, then
\begin{equation*}
\hat\mu_d^{cf}(a_n)-\mu_d
-(a_n-1)\beta_d
=
O_p\left(
a_n n_d^{-1/2}
+N_d^{-1/2}
+\frac{n_d}{N_d}
+n^{-\gamma}
\right).
\end{equation*}
\end{enumerate}
\end{proposition}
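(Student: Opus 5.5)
We start from an exact decomposition of the cross-fitted estimator around $\mu_d$. Using $\mu_d=P_dg_d^*+\beta_d$ from \eqref{eq:sigma_d}, we write
\begin{align*}
\widehat\mu^{cf}_d(a_n)-\mu_d-(a_n-1)\beta_d
={}&(P_{N_d}-P_d)g_d^*+(P_{N_d}-P_d)(\bar g_{n,d}-g_d^*)+P_d(\bar g_{n,d}-g_d^*)\\
&+a_n(P_{n_d}-P_d)\varepsilon_d^*-a_n\,P_{n_d}(\hat g^{\mathrm{oof}}_{n,d}-g^*_d).
\end{align*}
The last term is split fold by fold. Writing $P_{n_d,\ell}$ for the empirical measure over $\mathcal Q_d^s\cap\mathcal I_\ell$, we have
\[
P_{n_d}(\hat g^{\mathrm{oof}}_{n,d}-g^*_d)=\sum_{\ell}\frac{n_{d\ell}}{n_d}\Bigl\{(P_{n_d,\ell}-P_d)(\hat g^{(-\ell)}_{n,d}-g^*_d)+P_d(\hat g^{(-\ell)}_{n,d}-g^*_d)\Bigr\}.
\]
By the stratification $|n_{d\ell}-n_d/L|\le1$, the weights satisfy $n_{d\ell}/n_d=L^{-1}+O(n_d^{-1})$. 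Since $P_d\bar g_{n,d}=L^{-1}\sum_\ell P_d\hat g^{(-\ell)}_{n,d}$, the learner mean-error terms combine into
\[
(1-a_n)P_d(\bar g_{n,d}-g_d^*)+O(n_d^{-1})\,a_n\sum_\ell\bigl|P_d(\hat g^{(-\ell)}_{n,d}-g^*_d)\bigr|.
\]
The remainder is $O_p(n_d^{-1})$ because all functions are bounded by $B$. This cancellation at $a_n\equiv1$ is the reason parts (ii) need no mean-error assumption.

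\textbf{Empirical-process terms.} Conditionally on $\mathcal F_{\mathcal P}$ and on the training data outside $\mathcal I_\ell$, the fit $\hat g^{(-\ell)}_{n,d}$ is fixed. By Assumption~\ref{assump:cf_learner} (independence of sampling indicators and fold allocation) together with Assumption~\ref{assump:design}(iii), the fold-$\ell$ observations of area $d$ are i.i.d.\ $P_d$ and independent of this fit. Chebyshev conditionally then gives
\[
E\bigl[\{(P_{n_d,\ell}-P_d)(\hat g^{(-\ell)}_{n,d}-g^*_d)\}^2\mid\cdot\bigr]\le n_{d\ell}^{-1}\|\hat g^{(-\ell)}_{n,d}-g_d^*\|^2_{P_d^{\mathbf Z},2}.
\]
Hence this term is $o_p(n_d^{-1/2})$, using the $L_2$ consistency and dominated convergence for conditional-to-unconditional passage, since the integrand is bounded. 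The remaining terms are handled as follows. The term $(P_{N_d}-P_d)g_d^*=O_p(N_d^{-1/2})$ by Assumption~\ref{assump:iid} and boundedness of $g^*_d$, so it is $o_p(n_d^{-1/2})$ by Assumption~\ref{assump:regime}(ii). The term $(P_{N_d}-P_d)(\bar g_{n,d}-g_d^*)$ is exactly Lemma~\ref{lemma:cf_population_averaging}, which we take as given. The term $\sqrt{n_d}(P_{n_d}-P_d)\varepsilon^*_d\to N(0,\sigma_d^2)$ by the Lindeberg–Lévy CLT, using $E Y^2<\infty$ and boundedness of $g^*_d$.

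\textbf{Assembling the four parts.}
\begin{itemize}
\item[(ii)] With $a_n\equiv1$ the $P_d(\bar g-g^*)$ term vanishes, and everything except the CLT term is $o_p(n_d^{-1/2})$. Slutsky's lemma then gives (ii).
\item[(iii)] Assumption~\ref{assump:cf_learner_mean} makes $\sqrt{n_d}(1-a_n)P_d(\bar g_{n,d}-g_d^*)=o_p(1)$, and $a_n\to a$ gives the variance $a^2\sigma_d^2$.
\item[(i)] We only need $o_p(1)$ bounds. The key one is $|P_d(\bar g_{n,d}-g^*_d)|\le L^{-1}\sum_\ell\|\hat g^{(-\ell)}_{n,d}-g^*_d\|_{P_d^{\mathbf Z},2}=o_p(1)$ by Cauchy–Schwarz. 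Combined with the other terms and $a_n\to a$, this yields the limit $(a-1)\beta_d$.
\item[(iv)] We collect orders: $a_nO_p(n_d^{-1/2})$ from the CLT and fold terms, $O_p(N_d^{-1/2}+n_d/N_d)$ from the lemma and population terms, and $O_p(n^{-\gamma})$ from the mean-error term by hypothesis. The $O(n_d^{-1})a_n$ stratification remainder is absorbed in $a_nn_d^{-1/2}$.
\end{itemize}

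\textbf{Main obstacle.} The delicate step is rigorously justifying the conditional independence that makes the fold empirical process a centred mean of independent bounded variables. This is also why $P_{N_d}\bar g_{n,d}$, whose sampled units entered training, needed the separate Lemma~\ref{lemma:cf_population_averaging}. We would condition on $\mathcal F_{\mathcal P}$, the sampling indicators, the area labels and the fold allocation, and then check that $\hat g^{(-\ell)}_{n,d}$ is measurable with respect to data independent of $\{(Y_{dj},\mathbf Z_{dj}):j\in\mathcal Q_d^s\cap\mathcal I_\ell\}$. After that, we would convert the conditional $o_p$ statement into an unconditional one via the bounded-convergence argument.
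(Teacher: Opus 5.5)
Your proposal is correct and follows the paper's proof essentially step for step. It uses the same exact decomposition into the bias $(a_n-1)\beta_d$, the finite-population term $(P_{N_d}-P_d)g_d^*$, the leading residual average, the population-averaging term handled by Lemma~\ref{lemma:cf_population_averaging}, the fold-wise empirical-process terms controlled by conditional Chebyshev, and the learner-mean term split into $(1-a_n)P_d(\bar g_{n,d}-g_d^*)$ plus an $O(n_d^{-1})$ fold-imbalance remainder. The only difference is cosmetic: you pass from the conditional to the unconditional statement by bounded convergence, whereas the paper uses the equivalent bound $\Pr\{\|\hat g^{(-\ell)}_{n,d}-g_d^*\|^2_{P_d^{\mathbf Z},2}>\eta\}+\eta/\epsilon^2$ and lets $\eta\downarrow0$.
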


Propositions~\ref{prop:normality_family} and
\ref{prop:normality_cf_family} yield the same
first-order limits but differ in how they control
learner-estimation error. Without sample splitting, the low-complexity condition in
Assumption~\ref{assump:learner} provides the
required empirical-process control. Under
cross-fitting, the fold-specific learner is fixed
conditional on its training sample and independent
of the observations in its evaluation fold.
The population-averaging remainder is controlled
by Lemma~\ref{lemma:empirical_process} in Proposition~\ref{prop:normality_family} and Lemma~\ref{lemma:cf_population_averaging}
in Proposition~\ref{prop:normality_cf_family} . Stratification further ensures that
$n_{d\ell}/n_d$ differs from $1/L$ by at most
$n_d^{-1}$, making the resulting fold-weight
imbalance asymptotically negligible.

The two propositions therefore establish the same
first-order bias--variance structure for the standard and cross-fitted SAL
families in Definitions~\ref{def:SAL_family} and~\ref{def:SAL_cf_family}.
Ignoring finite-population and learner-estimation terms, the common
bias--variance approximation is
$\operatorname{AMSE}_d(a\mid\beta_d)
=(1-a)^2\beta_d^2+a^2\sigma_d^2/n_d$, minimised at the oracle weight
$a_d^\star=\beta_d^2/(\beta_d^2+\sigma_d^2/n_d)$.
If $\beta_d=0$, then $a_d^\star=0$ and
$\operatorname{AMSE}_d(a\mid\beta_d)=0$ at the synthetic endpoint in
Example~\ref{ex:syn}; the actual comparison is then determined by the omitted
learner-estimation and finite-population terms. If $\beta_d\neq0$ is fixed,
$a_d^\star\to1$, so the fully corrected estimator in Example~\ref{ex:SSE} is
asymptotically optimal within this first-order criterion, whereas the SAE
weight $f_d=n_d/N_d$ in Example~\ref{ex:SAE} tends to zero under
Assumption~\ref{assump:regime}(ii). More generally, if
$\sqrt{n_d}\beta_d\to c\in(0,\infty)$, then
$a_d^\star\to c^2/(c^2+\sigma_d^2)$, while $a_d^\star\to1$ only when
$\sqrt{n_d}|\beta_d|\to\infty$. The weight $a_d^\star$ is oracle because it
depends on the unknown approximation error $\beta_d$.

For either SAL family, the limiting bias can also be removed directly.
Writing $\hat\mu_d^{(\cdot)}(a)$ for either the standard or cross-fitted
estimator in Definitions~\ref{def:SAL_family} and
\ref{def:SAL_cf_family}, respectively, define the corresponding area-level
residual mean by
\begin{equation}
\label{eq:resid_g_hat}
\bar r_{n_d}^{(\cdot)}
:=
P_{n_d}(Y-\hat g_{n,d}^{(\cdot)}),
\qquad
\hat g_{n,d}^{(\cdot)}
\in
\{\hat g_{n,d},\hat g_{n,d}^{\mathrm{oof}}\}.
\end{equation}
The quantity in \eqref{eq:resid_g_hat} is the sample residual analogue of
$\beta_d$. For either estimator, the identity
$\hat\mu_d^{(\cdot)}(a_n)
+(1-a_n)\bar r_{n_d}^{(\cdot)}
=\hat\mu_d^{(\cdot)}(1)$
holds exactly for every sample realization. Thus, for example, adding
$(1-f_d)\bar r_{n_d}^{(\cdot)}$ to the SAE estimator in
Example~\ref{ex:SAE}, where $f_d=n_d/N_d$, transforms it exactly into the
fully corrected estimator of Example~\ref{ex:SSE}. Its residual-corrected
version is therefore the $a_n\equiv1$ estimator covered by part~(ii) of
Propositions~\ref{prop:normality_family} and
\ref{prop:normality_cf_family}.

This removal of the limiting bias comes at the cost of an increased residual-variance component. For a fixed $a\in(0,1)$,under the conditions of part~(iii) of the respective proposition, the uncorrected member has limiting bias $(a-1)\beta_d$ relative to $\mu_d$ and first-order variance $a^2\sigma_d^2/n_d$, whereas the corrected estimator has zero limiting bias and first-order variance $\sigma_d^2/n_d$. For the SAE choice
$a_n=f_d$, the corresponding first-order residual variance changes from
$f_d^2\sigma_d^2/n_d$ to $\sigma_d^2/n_d$ after residual correction, an
increase by a factor $1/f_d^2$. This factor concerns only the residual component,
not the total variance, which may also depend at leading order on learner-estimation and finite-population terms. This bias--variance trade-off is
particularly visible in the simulations and data analysis in
Sections~\ref{sec:sim_results} and~\ref{sec:data_analysis}.

The endpoint $a_n\to0$ requires separate attention. For the synthetic
estimator, $a_n\equiv0$, while for the SAE estimator $a_n=f_d\to0$ under
Assumption~\ref{assump:regime}(ii). Under the additional learner-rate
conditions imposed there, parts~(iv) of
Propositions~\ref{prop:normality_family} and
\ref{prop:normality_cf_family} therefore give rate statements rather than
nondegenerate root-$n_d$ limit laws: the residual-mean fluctuation vanishes
with $a_n$, leaving the rate to depend on finite-population averaging and
learner-estimation error. A sharper limit law requires additional assumptions
on the learner and the relative rates.

%Thus, the residual mean supplies the bias correctionwithout requiring a separate model for $\beta_d$.
%%%%%%%%%%%%%%%%%%%%%%%%%%%%%%%
%%%%%%% Finite-population parameter
%%%%%%%%%%%%%%%%%%%%%%%%%%%%%%%%%%%%%
\subsection{Finite-population parameter}
\label{sec:finite_samp_mean}

The preceding results also extend to the finite-population mean $\bar Y_d$ 
under the negligible sampling fraction condition in
Assumption~\ref{assump:regime}.

\begin{corollary}[Finite-population mean]
\label{cor:finite_pop}
Let $\bar Y_d$ denote the superpopulation-random
analogue of $\bar y_d$ in
\eqref{eq:target_decomposition}, and let
$\hat\mu_d^{(\cdot)}(a_n)$ denote either $\hat\mu_d(a_n)$ or
$\hat\mu_d^{cf}(a_n)$, as defined in
Definitions~\ref{def:SAL_family} and
\ref{def:SAL_cf_family}, respectively.
Under the assumptions of
Proposition~\ref{prop:normality_family} or
\ref{prop:normality_cf_family}, respectively, their conclusions remain valid with $\bar Y_d$ in place of $\mu_d$. In particular, for $a\in(0,1]$, under the respective conditions of parts (ii) and (iii),
\begin{equation}
\label{eq:clt_finite}
\sqrt{n_d}
\{\hat\mu_d^{(\cdot)}(a_n)-\bar Y_d
-(a_n-1)\beta_d\}
\xrightarrow{d}N(0,a^2\sigma_d^2).
\end{equation}
\end{corollary}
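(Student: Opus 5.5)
The plan is to write $\bar Y_d=\mu_d+(P_{N_d}-P_d)Y$ and treat the replacement of $\mu_d$ by $\bar Y_d$ as an additive perturbation:
\[
\hat\mu_d^{(\cdot)}(a_n)-\bar Y_d=\{\hat\mu_d^{(\cdot)}(a_n)-\mu_d\}-(P_{N_d}-P_d)Y .
\]
Every conclusion of Propositions~\ref{prop:normality_family} and \ref{prop:normality_cf_family} is then transferred once we show that
\[
\sqrt{n_d}\,(P_{N_d}-P_d)Y=o_p(1).
\]
This suffices for all four parts.
\begin{itemize}
\item For (i), it gives in particular $(P_{N_d}-P_d)Y=o_p(1)$, so the probability limit $(a-1)\beta_d$ is unchanged.
\item For (ii) and (iii), Slutsky's lemma applies directly.
\item For (iv), the extra term $(P_{N_d}-P_d)Y$ is $O_p(N_d^{-1/2})$. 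This is already one of the terms in the stated rates, so the rates are unaffected.
\end{itemize}

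The key step is bounding $(P_{N_d}-P_d)Y$. By Assumption~\ref{assump:iid}, conditionally on $\mathcal F_{\mathcal P}$ the $N_d$ population outcomes in area $d$ are i.i.d.\ from $P_d$, and $E_{P_d}(Y^2)<\infty$. Chebyshev's inequality therefore gives
\[
E\bigl[\{(P_{N_d}-P_d)Y\}^2\mid\mathcal F_{\mathcal P}\bigr]=\operatorname{Var}_{P_d}(Y)/N_d ,
\]
hence $(P_{N_d}-P_d)Y=O_p(N_d^{-1/2})$. Multiplying by $\sqrt{n_d}$ yields
\[
\sqrt{n_d}\,(P_{N_d}-P_d)Y=O_p\bigl((n_d/N_d)^{1/2}\bigr)=o_p(1)
\]
by Assumption~\ref{assump:regime}(ii). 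Since all probabilistic statements are conditional on $\mathcal F_{\mathcal P}$, the same conditioning is used here as in the propositions.

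The one delicate point is that $P_{N_d}$ includes the sampled units, which also enter $\hat\mu_d^{(\cdot)}(a_n)$, so $(P_{N_d}-P_d)Y$ is not independent of the estimator. This does not obstruct the argument. I do not need a joint limit law, only that the perturbation is $o_p(n_d^{-1/2})$, and that follows from the marginal second-moment bound above regardless of dependence. I also need that Assumption~\ref{assump:design}(iii) and the conditioning on the realized sample size leave the population mean's variance bound intact. This holds because $\bar Y_d$ is a function of the whole population, whose law is given unconditionally on the sample indicators by Assumption~\ref{assump:iid}.

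Finally, \eqref{eq:clt_finite} follows by combining these pieces. For $a=1$ (so $a_n\equiv1$), combine part~(ii) with the negligible perturbation; note that $(a_n-1)\beta_d=0$ and $a^2\sigma_d^2=\sigma_d^2$. For $a\in(0,1)$, combine part~(iii), under Assumption~\ref{assump:learner_mean} or \ref{assump:cf_learner_mean}, with Slutsky's lemma.
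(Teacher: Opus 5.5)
Your proposal is correct and follows the paper's proof essentially step for step. It uses the same decomposition $\hat\mu_d^{(\cdot)}(a_n)-\bar Y_d=\{\hat\mu_d^{(\cdot)}(a_n)-\mu_d\}-(P_{N_d}-P_d)Y$, the same Chebyshev bound $O_p(N_d^{-1/2})$ made negligible at the $\sqrt{n_d}$ scale by Assumption~\ref{assump:regime}(ii), Slutsky's lemma for parts (ii)--(iii), and absorption into the existing $N_d^{-1/2}$ term for part (iv); your remark that the dependence between $\bar Y_d$ and the estimator is harmless, because only a marginal $o_p(n_d^{-1/2})$ bound is needed, is a correct point the paper leaves implicit.
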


The result follows from
$\hat\mu_d^{(\cdot)}(a_n)-\bar Y_d
=\{\hat\mu_d^{(\cdot)}(a_n)-\mu_d\}-(\bar Y_d-\mu_d)$.
Since $\bar Y_d-\mu_d=O_p(N_d^{-1/2})$, the second term
is negligible at the $n_d^{-1/2}$ scale under
Assumption~\ref{assump:regime}. TThus, the superpopulation and finite-population means are
asymptotically indistinguishable at this scale, so replacing $\mu_d$ by $\bar Y_d$ in the centred estimation error leaves its root-$n_d$ limiting distribution unchanged.

Accordingly, the conclusions established for the superpopulation mean carry over unchanged to $\bar Y_d$: the model-based and synthetic estimators in Examples~\ref{ex:SAE} and \ref{ex:syn} remain sensitive to the approximation error $\beta_d$, whereas the fully residual-corrected estimator in Example~\ref{ex:SSE} remains consistent and asymptotically normal under the respective standard or cross-fitted assumptions, providing robustness to learner misspecification in the sense that it does not require $g_d^*=m_d$ with $g_d^*$ in Assumption~\ref{assump:learner} and $m_d$ as in \eqref{eq:area_regression}.

The transfer from $\mu_d$ to $\bar Y_d$ uses the negligible sampling fraction to ensure that $\sqrt{n_d}(\bar Y_d-\mu_d)=o_p(1)$. When this condition fails, $\bar Y_d-\mu_d$, whose variance is of order $N_d^{-1}$, may contribute at first order. Variance estimation for the finite-population mean must then account for this term and its covariance with $\hat\mu_d^{(\cdot)}(a_n)-\mu_d$, requiring a separate finite-population analysis rather than a direct application of Corollary~\ref{cor:finite_pop}.

%%%%%%%%%%%%%%%%%%%%%%%%%%%%%%%%%%%%%
%%% Variance estimation
%%%%%%%%%%%%%%%%%%%%%%%%%%%%%%%%%%%%

\subsection{Variance estimation and confidence intervals}
\label{sec:variance}

Write $\hat\mu_d^{(\cdot)}(a)$ for either the standard or the cross-fitted
estimator and $\bar r_{n_d}^{(\cdot)}$ for the corresponding area residual
mean of \eqref{eq:resid_g_hat}. The variance $\sigma_d^2$ in
\eqref{eq:sigma_d} is estimated by the sample variance of the fitted area
residuals,
\begin{equation}
\label{eq:sigma_hat}
\hat\sigma_d^2
:=\frac{1}{n_d-1}\sum_{j\in\mathcal Q_d^s}
\bigl(\hat\varepsilon_{dj}-\bar r_{n_d}^{(\cdot)}\bigr)^2,
\end{equation}
where $\hat\varepsilon_{dj}=Y_{dj}-\hat g_{n,d}(\mathbf Z_{dj})$ without
sample splitting and
$\hat\varepsilon_{dj}=Y_{dj}-\hat g_{n,d}^{(-\ell)}(\mathbf Z_{dj})$ for
$j\in\mathcal Q_d^s\cap\mathcal I_\ell$ under cross-fitting. Within either
family $\hat\sigma_d^2$ does not depend on the correction weight, which
enters the estimator but not the residuals.

\begin{corollary}[Variance estimation and confidence intervals]
\label{cor:variance}
Under the assumptions of
Proposition~\ref{prop:normality_family}
or Proposition~\ref{prop:normality_cf_family}
for the fully corrected estimator, $\hat\sigma_d^2\xrightarrow{p}\sigma_d^2$. For any deterministic $a_n\in[0,1]$, let
$\tilde\mu_d^{(\cdot)}(a_n)
:=\hat\mu_d^{(\cdot)}(a_n)
+(1-a_n)\bar r_{n_d}^{(\cdot)}$.
Then, for $\alpha\in(0,1)$,
\begin{equation}
\label{eq:ci}
\Pr\left\{
\left|\tilde\mu_d^{(\cdot)}(a_n)-\mu_d\right|
\leq z_{1-\alpha/2}
\frac{\hat\sigma_d}{\sqrt{n_d}}
\right\}
\longrightarrow 1-\alpha,
\end{equation}
where $\hat\sigma_d=(\hat\sigma_d^2)^{1/2}$ and $z_{1-\alpha/2}$ denotes the
$(1-\alpha/2)$-quantile of the standard normal distribution. The conclusion also holds with $\mu_d$
replaced by $\bar Y_d$ under the assumptions
of Corollary~\ref{cor:finite_pop}.
\end{corollary}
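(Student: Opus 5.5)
The plan has two parts: consistency of $\hat\sigma_d^2$, then the exact identity $\tilde\mu_d^{(\cdot)}(a_n)=\hat\mu_d^{(\cdot)}(1)$ combined with part (ii) of the relevant proposition and Slutsky's lemma.

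\emph{Consistency of $\hat\sigma_d^2$.} Write $\hat\varepsilon_{dj}=\varepsilon^*_{dj}-\Delta_{dj}$, where $\varepsilon^*_{dj}=Y_{dj}-g_d^*(\mathbf Z_{dj})$. Here $\Delta_{dj}=\hat g_{n,d}(\mathbf Z_{dj})-g_d^*(\mathbf Z_{dj})$, or $\hat g^{(-\ell)}_{n,d}(\mathbf Z_{dj})-g_d^*(\mathbf Z_{dj})$ under cross-fitting. Up to the factor $n_d/(n_d-1)\to1$,
\[
\hat\sigma_d^2=P_{n_d}(\varepsilon^*-P_{n_d}\varepsilon^*)^2-2P_{n_d}\{(\varepsilon^*-P_{n_d}\varepsilon^*)(\Delta-P_{n_d}\Delta)\}+P_{n_d}(\Delta-P_{n_d}\Delta)^2 .
\]
The first term tends to $\sigma_d^2$ in probability by the weak law of large numbers. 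This uses Assumption~\ref{assump:design}(iii), $E_{P_d}(Y^2)<\infty$ and $g_d^*\in L_2(P_d^{\mathbf Z})$. By Cauchy--Schwarz, the cross term is bounded by twice the square root of the product of the first and last terms. It therefore suffices to show $P_{n_d}\Delta^2=o_p(1)$.

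\emph{The main step: $P_{n_d}\Delta^2=o_p(1)$.}
\begin{itemize}
\item \emph{Cross-fitting.} Conditionally on the training folds, each $\hat g^{(-\ell)}_{n,d}$ is fixed and independent of the units in $\mathcal I_\ell$, by Assumption~\ref{assump:cf_learner}. Hence
\[
E\{P_{n_{d\ell}}(\hat g^{(-\ell)}_{n,d}-g_d^*)^2\mid\text{training data}\}=\|\hat g^{(-\ell)}_{n,d}-g_d^*\|_{P_d^{\mathbf Z},2}^2=o_p(1).
\]
A conditional Markov argument then gives $o_p(1)$ for each of the finitely many folds, and the fold weights satisfy $n_{d\ell}/n_d\to1/L$.
\item \emph{No sample splitting.} Write
\[
P_{n_d}\Delta^2=P_d\Delta^2+(P_{n_d}-P_d)(\hat g_{n,d}-g_d^*)^2 .
\]
The first term equals $r_{n,d}^2=o_p(1)$ by Assumption~\ref{assump:learner}. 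For the second, use that on the event $\hat g_n\in\mathcal G_{c,n}$ the class $\{(g_d-g_d^*)^2: g\in\mathcal G_{c,n}\}$ inherits a uniform entropy bound of order $v_n\log(A'/\varepsilon)$ from Definition~\ref{def:low_complexity}. If $g^*$ is not bounded, truncate $g_d^*$ and control the tail via $g_d^*\in L_2$. The maximal inequality invoked for Lemma~\ref{lemma:empirical_process}, i.e.\ Corollary~5.1 of \cite{chernozhukov2014gaussian}, then gives
\[
\sup_{g\in\mathcal G_{c,n}}|(P_{n_d}-P_d)(g_d-g_d^*)^2|=O_p\bigl(\sqrt{v_n\log n_d/n_d}+v_n\log n_d/n_d\bigr),
\]
which is $o_p(1)$ under Assumption~\ref{assump:entropy_growth}.
\end{itemize}
This step is the only genuinely new ingredient, and it is the one I expect to be the main obstacle. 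In particular, bounding the squared class without a uniform bound on $g^*$ is the delicate point. A fallback is to reuse the bound $P_{n_d}(\hat g_{n,d}-g^*_d)^2=o_p(1)$ if it was already established inside the proof of Proposition~\ref{prop:normality_family}.

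\emph{Confidence interval.} The exact identity $\hat\mu_d^{(\cdot)}(a_n)+(1-a_n)\bar r^{(\cdot)}_{n_d}=\hat\mu_d^{(\cdot)}(1)$ holds for every realization and every deterministic $a_n$, for both families. Therefore $\tilde\mu_d^{(\cdot)}(a_n)$ equals the fully corrected estimator, and no condition on $a_n$ is needed. Part (ii) of Proposition~\ref{prop:normality_family} or Proposition~\ref{prop:normality_cf_family} gives
\[
\sqrt{n_d}\,(\tilde\mu_d^{(\cdot)}(a_n)-\mu_d)\xrightarrow{d}N(0,\sigma_d^2).
\]
Since $\sigma_d^2>0$ and $\hat\sigma_d\to\sigma_d$ in probability, Slutsky's lemma gives
\[
\sqrt{n_d}\,(\tilde\mu_d^{(\cdot)}(a_n)-\mu_d)/\hat\sigma_d\xrightarrow{d}N(0,1).
\]
The standard normal distribution function is continuous, so \eqref{eq:ci} follows. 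For $\bar Y_d$, apply Corollary~\ref{cor:finite_pop} with $a_n\equiv1$ in place of part (ii) and repeat the same Slutsky step.
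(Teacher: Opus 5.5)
Your proposal is correct and follows essentially the same route as the paper. Like the paper, you reduce consistency of $\hat\sigma_d^2$ to $P_{n_d}\Delta^2=o_p(1)$, obtain this from an entropy bound on the squared error class without splitting and from a conditional Markov argument fold by fold under cross-fitting, and then combine the exact identity $\tilde\mu_d^{(\cdot)}(a_n)=\hat\mu_d^{(\cdot)}(1)$ with part (ii) of the relevant proposition and Slutsky's lemma. The point you flag as delicate does not arise: without sample splitting, Lemma~\ref{lemma:boundness_g_star} already gives $|g_d^*|\le B$ $P_d^{\mathbf Z}$-almost surely, and under cross-fitting Assumption~\ref{assump:cf_learner} imposes this bound directly. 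Hence the squared class has envelope $4B^2$ and needs no truncation.
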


As shown in the preceding section,
$\tilde\mu_d^{(\cdot)}(a_n)=\hat\mu_d^{(\cdot)}(1)$
holds exactly for every $a_n$. Hence the interval in
\eqref{eq:ci} is the common fully corrected interval
\begin{equation}
\label{eq:ci_explicit}
\Bigl[
\hat\mu_d^{(\cdot)}(1)
\pm
z_{1-\alpha/2}\hat\sigma_d n_d^{-1/2}
\Bigr].
\end{equation}
The fully corrected estimator in Example~\ref{ex:SSE} therefore requires no
additional adjustment, whereas the SAE and synthetic estimators in Examples~\ref{ex:SAE}-\ref{ex:syn}  are first mapped to the fully corrected member by the residual correction described
above. This construction requires neither
Assumption~\ref{assump:learner_mean} nor
Assumption~\ref{assump:cf_learner_mean}, because inference is based on the
$a_n\equiv1$ member covered by part~(ii) of the corresponding proposition.
By contrast, \eqref{eq:ci_explicit} is not an interval centred at the
uncorrected estimator $\hat\mu_d^{(\cdot)}(a_n)$. Inference for that estimator
must account for its approximation bias and, when relevant, for additional
learner-estimation variability. In finite samples, the standard normal quantile is often replaced by the corresponding Student's $t$ quantile. We follow this practice in Sections~\ref{sec:sim_results},\ref{sec:app_results}, replacing $z_{1-\alpha/2}$ with $t_{1-\alpha/2}$ when constructing confidence intervals and assessing their empirical coverage.

Finally, for inference on $\bar Y_d$, multiplying
the interval radius by
$(1-n_d/N_d)^{1/2}$
is asymptotically irrelevant under
Assumption~\ref{assump:regime}(ii).
Justifying a finite-population correction
when the sampling fraction is nonnegligible
requires a separate analysis; see
Section~\ref{sec:finite_samp_mean}.

\section{Approximate small area learners}
\label{sec:sal_agg}

The small area learners in
Sections~\ref{sec:sample_no_split}--\ref{sec:sal_cf}
require unit-level auxiliary information for the entire
finite population, typically obtained from censuses or
administrative registers. Sections~\ref{sec:agg_no_cf}
and \ref{sec:agg_cf} develop alternative estimator
families without and with cross-fitting, respectively,
for settings in which population auxiliary information
is available only in aggregated form, for example because
of confidentiality or data-access restrictions. Specifically, suppose that the survey sample $\mathcal S$
is available as in Section~\ref{sec:salearners}
,
but that population auxiliary information is restricted to
the area covariate means, that is $\mathbf{Z}_{dj}=\mathbf{U}_d$ following notation from Section~\ref{sec:super_pop} and $\mathbf{U}_d=\bar{\mathbf Z}_d$ where 
$\bar{\mathcal R}_d=\{\bar{\mathbf Z}_d\}$, where
$\bar{\mathbf Z}_d
:=N_d^{-1}\sum_{j\in\mathcal Q_d}\mathbf Z_{dj}$.
The corresponding point in $\mathcal X$ is
$\bar{\mathbf X}_d=(\bar{\mathbf Z}_d^\top,d)^\top$. We use the abbreviations
$g_d^{\mathrm{agg}}:=g(\bar{\mathbf Z}_d,d)$ and
$\hat g_{n,d}^{\mathrm{agg}}
:=\hat g_n(\bar{\mathbf Z}_d,d)
=\hat g_{n,d}(\bar{\mathbf Z}_d)$
for evaluation at the area covariate mean. The prediction functions and their limits are understood to be defined on the convex hull of the area covariate support, including the covariate means at which they are evaluated.

\subsection{Learners without cross-fitting}
\label{sec:agg_no_cf}

We first present approximate small area
learners which do not use sample splitting.

\begin{definition}[Aggregated small area learner]
\label{def:SAL_agg}
For a correction weight $a\in[0,1]$, define
\begin{equation}
\label{eq:general_predictor_a}
\hat\mu_d^{\mathrm{agg}}(a)
=
\hat g^{agg}_{n,d}
+
aP_{n_d}(Y-\hat g_{n,d}).
\end{equation}
\end{definition}

The aggregated family in Definition~\ref{def:SAL_agg} replaces $P_{N_d}\hat g_{n,d}$ of Definition~\ref{def:SAL_family}
by $\hat g^{agg}_{n,d}$, while
retaining the same weighted residual correction.
In particular, $a=1$ gives the 
aggregated version of the fully corrected learner in Example~\ref{ex:SSE}. Aggregation may shift the probability limit of the fully corrected estimator away from the target $\mu_d$. Let
$\bar{\mathbf z}_d:=E_{P_d^{\mathbf Z}}(\mathbf Z)$
and define
\begin{equation}
\label{eq:mu_agg}
\mu_d^{\mathrm{agg}}
:=
g_d^*(\bar{\mathbf z}_d)
+
E_{P_d}\{Y-g_d^*(\mathbf Z)\},
\qquad
b_d
:=
\mu_d^{\mathrm{agg}}-\mu_d
=
g_d^*(\bar{\mathbf z}_d)
-
E_{P_d^{\mathbf Z}}\{g_d^*(\mathbf Z)\},
\end{equation}
the aggregated version of $\mu_d$ in \eqref{eq:expected_decompose}. The signed aggregation bias $b_d$ need not vanish
with increasing sample size. It is zero when
$g_d^*$ is affine on the convex hull of the area
covariate support, although it may also vanish
for particular covariate distributions when
$g_d^*$ is nonlinear. When the area distribution
varies with $n$, we denote the corresponding
population aggregation bias by $b_{d,n}$. For a general correction weight, the estimator
also inherits the limiting residual-weight bias
$(a-1)\beta_d$ of the original family.
Consequently, its total asymptotic bias for
$\mu_d$ is $b_d+(a-1)\beta_d$.

To establish asymptotic results, note that the
mean-square consistency required by
Assumption~\ref{assump:learner} does not control
the learner's evaluation at $\bar{\mathbf z}_d$.
We therefore introduce an additional assumption
controlling the aggregation contrast and the
effect of replacing the superpopulation covariate
mean with its finite-population counterpart.

\begin{assumption}[Aggregation stability without sample splitting]
\label{assump:pointwise}
For the area $d$ under consideration, suppose that: (i)
$\sqrt{n_d}
\{\hat g_{n,d}(\bar{\mathbf z}_d)
-g_d^*(\bar{\mathbf z}_d)
-P_d(\hat g_{n,d}-g_d^*)\}=o_p(1)$;
(ii)
there exist a neighbourhood $U_d$ of
$\bar{\mathbf z}_d$ and a constant
$\kappa_d<\infty$ such that, with probability
tending to one, $\hat g_{n,d}$ is
$\kappa_d$-Lipschitz on $U_d$;
(iii)
$E_{P_d^{\mathbf Z}}\|\mathbf Z\|^2<\infty$.
\end{assumption}

Condition (i) holds identically when the
estimation error $\hat g_{n,d}-g_d^*$ is affine.
More generally, it requires the aggregation
contrast of this error to be $o_p(n_d^{-1/2})$.
For a sufficiently regular parametric learner
with contrast error $O_p(n^{-1/2})$, this holds
when $n_d/n\to0$. Condition (ii) holds for
locally Lipschitz learners with suitably bounded
parameters, but is not automatic for nonparametric learners, such as tree-based methods, whose fitted functions may be discontinuous near the area covariate mean. Proposition~\ref{prop:normality_agg} shows that aggregation introduces the additional bias term $b_d$, distinct from the residual-weight bias $(a-1)\beta_d$. To characterise when this additional bias is small,
Lemma~\ref{lemma:aggregation_bias} relates $b_d$ to the within-area covariate dispersion
and the regularity of $g_d^*$.

\begin{proposition}[Asymptotic normality under aggregation]
\label{prop:normality_agg}
Let $\hat\mu_d^{\mathrm{agg}}(a_n)$ be the estimator in
\eqref{eq:general_predictor_a}, evaluated at a
deterministic sequence $a_n\in[0,1]$ with
$a_n\to a\in[0,1]$. Suppose that
$\sigma_d^2>0$, $E_{P_d}(Y^2)<\infty$, and
Assumptions~\ref{assump:design},
\ref{assump:iid}, \ref{assump:entropy_growth},
\ref{assump:regime}, \ref{assump:learner}, and
\ref{assump:pointwise} hold. Then:
\begin{enumerate}
\item[(i)]
$\hat\mu_d^{\mathrm{agg}}(a_n)
-\mu_d^{\mathrm{agg}}
\xrightarrow{p}(a-1)\beta_d$;

\item[(ii)]
if $a_n\equiv1$, then $\sqrt{n_d}
\{\hat\mu_d^{\mathrm{agg}}(1)
-\mu_d^{\mathrm{agg}}\}
\xrightarrow{d}N(0,\sigma_d^2)$;
\item[(iii)]
if $a\in(0,1)$, then under
Assumption~\ref{assump:learner_mean},
\begin{equation*}
\label{eq:clt_agg}
\sqrt{n_d}
\bigl\{
\hat\mu_d^{\mathrm{agg}}(a_n)
-\mu_d^{\mathrm{agg}}
-(a_n-1)\beta_d
\bigr\}
\xrightarrow{d}N(0,a^2\sigma_d^2);
\end{equation*}

\item[(iv)]
if $a_n\to0$ and, in place of
Assumption~\ref{assump:learner_mean},
$P_d(\hat g_{n,d}-g_d^*)
=O_p(n^{-\gamma})$
for some $\gamma\in(0,1/2]$, then
\begin{equation*}
\label{eq:rate_agg_synthetic}
\hat\mu_d^{\mathrm{agg}}(a_n)
-\mu_d^{\mathrm{agg}}
-(a_n-1)\beta_d
=
O_p\left(
a_n n_d^{-1/2}
+N_d^{-1/2}
+n^{-\gamma}
\right)
+o_p(n_d^{-1/2}).
\end{equation*}
\end{enumerate}

In particular,
$\hat\mu_d^{\mathrm{agg}}(a_n)-\mu_d
\xrightarrow{p}b_d+(a-1)\beta_d$.
\end{proposition}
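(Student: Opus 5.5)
The plan is to reduce the aggregated estimator to the unit-level family of Definition~\ref{def:SAL_family} plus two remainder terms, and then import the conclusions of Proposition~\ref{prop:normality_family}. Write $\hat\mu_d^{\mathrm{agg}}(a_n)-\mu_d^{\mathrm{agg}}-(a_n-1)\beta_d$ as follows. First subtract and add $g_d^*(\bar{\mathbf z}_d)$ and $P_d\hat g_{n,d}$. Then use $\mu_d^{\mathrm{agg}}=g_d^*(\bar{\mathbf z}_d)+\beta_d$ from \eqref{eq:mu_agg} and \eqref{eq:sigma_d}. This gives
\begin{align*}
\hat\mu_d^{\mathrm{agg}}(a_n)-\mu_d^{\mathrm{agg}}-(a_n-1)\beta_d
&=\{\hat g_{n,d}(\bar{\mathbf Z}_d)-\hat g_{n,d}(\bar{\mathbf z}_d)\}
+\{\hat g_{n,d}(\bar{\mathbf z}_d)-g_d^*(\bar{\mathbf z}_d)-P_d(\hat g_{n,d}-g_d^*)\}\\
&\quad+(1-a_n)P_d(\hat g_{n,d}-g_d^*)
+a_n\{P_{n_d}(Y-\hat g_{n,d})-P_d(Y-\hat g_{n,d})\}.
\end{align*}
The third term is the learner mean error term. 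The fourth is exactly the residual-correction fluctuation already analysed in the proof of Proposition~\ref{prop:normality_family}. There, Lemma~\ref{lemma:empirical_process}, together with Assumptions~\ref{assump:entropy_growth} and \ref{assump:learner}, shows that $(P_{n_d}-P_d)(\hat g_{n,d}-g_d^*)=o_p(n_d^{-1/2})$. Hence the fourth term equals $a_n(P_{n_d}-P_d)\varepsilon_d^*+o_p(a_nn_d^{-1/2})$. Note that the population-averaging term $(P_{N_d}-P_d)\hat g_{n,d}$ of the unit-level proof does not appear here; its role is taken by the first two terms.

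The new work is to control the first two terms. The second term is $o_p(n_d^{-1/2})$ directly by Assumption~\ref{assump:pointwise}(i). For the first term, the steps are as follows. Assumption~\ref{assump:iid} and Assumption~\ref{assump:pointwise}(iii) give $\bar{\mathbf Z}_d-\bar{\mathbf z}_d=O_p(N_d^{-1/2})$ by Chebyshev's inequality. Consequently $\bar{\mathbf Z}_d\in U_d$ with probability tending to one. On that event, intersected with the Lipschitz event of Assumption~\ref{assump:pointwise}(ii), the first term is bounded by $\kappa_d\|\bar{\mathbf Z}_d-\bar{\mathbf z}_d\|=O_p(N_d^{-1/2})$. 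This is $o_p(n_d^{-1/2})$ by Assumption~\ref{assump:regime}(ii). One technicality needs care: $U_d$ should contain the segment between $\bar{\mathbf Z}_d$ and $\bar{\mathbf z}_d$. I would simply take $U_d$ to contain a ball, which is what the Lipschitz condition needs, so that this holds automatically.

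With these bounds the four parts follow as in Proposition~\ref{prop:normality_family}.
\begin{itemize}
\item[(ii)] For $a_n\equiv1$ the third term vanishes, so $\sqrt{n_d}\{\hat\mu_d^{\mathrm{agg}}(1)-\mu_d^{\mathrm{agg}}\}=\sqrt{n_d}(P_{n_d}-P_d)\varepsilon_d^*+o_p(1)$. The classical CLT applies because the sampled pairs are i.i.d.\ from $P_d$ by Assumption~\ref{assump:design}(iii), with $\sigma_d^2<\infty$.
\item[(iii)] For $a\in(0,1)$, Assumption~\ref{assump:learner_mean} kills the third term. Slutsky's lemma with $a_n\to a$ then gives the $N(0,a^2\sigma_d^2)$ limit.
\item[(iv)] The third term is $O_p(n^{-\gamma})$ and the fourth is $O_p(a_nn_d^{-1/2})$. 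The first term contributes $O_p(N_d^{-1/2})$, and the second term is the extra $o_p(n_d^{-1/2})$ in the statement.
\item[(i)] Every term is $o_p(1)$. For the third term, use $|P_d(\hat g_{n,d}-g_d^*)|\le r_{n,d}=o_p(1)$ by Cauchy--Schwarz. The limit $(a-1)\beta_d$ comes from $a_n\to a$.
\end{itemize}
The closing statement follows by adding $b_d=\mu_d^{\mathrm{agg}}-\mu_d$.

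The main obstacle I expect is the first term. Two points need care there: the Lipschitz event is random and tied to $\hat g_{n,d}$, and $\bar{\mathbf Z}_d$ is computed from population units that include the sample used to fit $\hat g_{n,d}$. Because the bound is a deterministic Lipschitz bound holding on an event of probability tending to one, this dependence does no harm. I would spell this out by intersecting the two high-probability events explicitly, rather than conditioning on the fit.
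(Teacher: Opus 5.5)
Your proposal is correct and follows the paper's proof. Your four terms are exactly the paper's $T_5$, $T_6$, $T_4$ and $a_nT_2-a_nT_{3,n}$ in \eqref{eq:T_decomposition_agg}, and you control them with the same ingredients: the Lipschitz bound with $\bar{\mathbf Z}_d-\bar{\mathbf z}_d=O_p(N_d^{-1/2})$, Assumption~\ref{assump:pointwise}(i), Cauchy--Schwarz or Assumption~\ref{assump:learner_mean}, and Lemma~\ref{lemma:empirical_process} plus the CLT. The segment technicality is unnecessary, since being $\kappa_d$-Lipschitz on $U_d$ is a two-point condition, so it suffices that $\bar{\mathbf Z}_d$ and $\bar{\mathbf z}_d$ both lie in $U_d$, which holds with probability tending to one.
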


\begin{lemma}[Aggregation-bias bounds]
\label{lemma:aggregation_bias}
Let $\Sigma_{Z,d}
:=\operatorname{Var}_{P_d^{\mathbf Z}}(\mathbf Z)$.
If $g_d^*$ is $\kappa_d^*$-Lipschitz on the
convex hull of the area covariate support, then
$|b_d|\leq
\kappa_d^*\{\operatorname{tr}(\Sigma_{Z,d})\}^{1/2}$.
If it is twice continuously differentiable
on that hull, with Hessian operator norm bounded
by $H_d$, then
$|b_d|\leq(H_d/2)\operatorname{tr}(\Sigma_{Z,d})$.
Furthermore, $b_d\leq0$ when $g_d^*$ is convex
and $b_d\geq0$ when it is concave.
\end{lemma}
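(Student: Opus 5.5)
The plan is to work directly from the definition $b_d=g_d^*(\bar{\mathbf z}_d)-E_{P_d^{\mathbf Z}}\{g_d^*(\mathbf Z)\}$ in \eqref{eq:mu_agg}, rewriting it as $b_d=-E_{P_d^{\mathbf Z}}\{g_d^*(\mathbf Z)-g_d^*(\bar{\mathbf z}_d)\}$. All three claims then follow from pointwise expansions of $g_d^*$ around $\bar{\mathbf z}_d$ that are integrated against $P_d^{\mathbf Z}$. First I will check that every object is well defined. By Assumption~\ref{assump:pointwise}(iii), $E_{P_d^{\mathbf Z}}\|\mathbf Z\|^2<\infty$, so $\bar{\mathbf z}_d$ exists and $\operatorname{tr}(\Sigma_{Z,d})=E\|\mathbf Z-\bar{\mathbf z}_d\|^2<\infty$. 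The point $\bar{\mathbf z}_d$ lies in the closed convex hull of the support, where $g_d^*$ is defined by the convention of Section~\ref{sec:sal_agg}. Every segment from $\bar{\mathbf z}_d$ to a support point stays in that hull, so the regularity assumptions can be applied along these segments.

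\emph{Lipschitz bound.} The Lipschitz property gives $|g_d^*(\mathbf z)-g_d^*(\bar{\mathbf z}_d)|\le\kappa_d^*\|\mathbf z-\bar{\mathbf z}_d\|$ for $P_d^{\mathbf Z}$-a.e.\ $\mathbf z$. This also shows that $g_d^*(\mathbf Z)$ is integrable. Applying the triangle inequality and then Jensen (or Cauchy--Schwarz), we get
\[
|b_d|\le\kappa_d^*E\|\mathbf Z-\bar{\mathbf z}_d\|\le\kappa_d^*\{E\|\mathbf Z-\bar{\mathbf z}_d\|^2\}^{1/2}=\kappa_d^*\{\operatorname{tr}(\Sigma_{Z,d})\}^{1/2}.
\]

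\emph{Second-order bound.} Take a second-order Taylor expansion with integral (or Lagrange) remainder along the segment:
\[
g_d^*(\mathbf z)=g_d^*(\bar{\mathbf z}_d)+\nabla g_d^*(\bar{\mathbf z}_d)^\top(\mathbf z-\bar{\mathbf z}_d)+R(\mathbf z),\qquad |R(\mathbf z)|\le\tfrac{H_d}{2}\|\mathbf z-\bar{\mathbf z}_d\|^2 .
\]
The linear term has expectation zero because $E(\mathbf Z-\bar{\mathbf z}_d)=0$, and it is integrable because $\mathbf Z$ has finite first moment. Hence $b_d=-E\{R(\mathbf Z)\}$, and $|b_d|\le(H_d/2)E\|\mathbf Z-\bar{\mathbf z}_d\|^2=(H_d/2)\operatorname{tr}(\Sigma_{Z,d})$.

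\emph{Sign statements.} These are Jensen's inequality. If $g_d^*$ is convex, then $E\,g_d^*(\mathbf Z)\ge g_d^*(E\mathbf Z)$, so $b_d\le0$; the concave case follows by symmetry. To avoid differentiability, I would use a supporting hyperplane (subgradient) at $\bar{\mathbf z}_d$.

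The main obstacle is the boundary issue. The mean $\bar{\mathbf z}_d$ may lie on the boundary of the hull rather than in its interior, and that is where a subgradient or gradient is needed. I would handle it by invoking Jensen in its general form for convex functions on convex sets. An alternative is to note that $\bar{\mathbf z}_d$ lies in the relative interior of the convex hull of the support when the support is not degenerate, and to treat the degenerate case, a Dirac measure, trivially with $b_d=0$. I would also state that integrability of $g_d^*(\mathbf Z)$ is guaranteed by $g_d^*\in L_2(P_d^{\mathbf Z})$ from Assumption~\ref{assump:learner}.
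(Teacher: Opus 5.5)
Your proposal is correct and follows the paper's proof step for step: the Lipschitz bound combined with Cauchy--Schwarz, a second-order Taylor expansion about $\bar{\mathbf z}_d$ whose linear term integrates to zero, and Jensen's inequality for the sign statements. The boundary issue you raise is not a real obstacle, because the mean of a distribution with finite first moment always lies in the relative interior of the convex hull of its support, not merely in its closure; this also covers the Dirac case, and relative interior is the right notion here since the contextual block $\mathbf U$ is degenerate within an area, so the hull has empty interior in the ambient space.
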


For piecewise constant learners, bounds on
$|b_d|$ analogous to those in
Lemma~\ref{lemma:aggregation_bias} require
additional conditions on the partition.
Concentration in the cell containing
$\bar{\mathbf z}_d$ is sufficient for a small
aggregation bias, although cancellation across
cells may also occur.

\subsection{Learners with cross-fitting}
\label{sec:agg_cf}

Cross-fitting can also be applied when only
aggregated population covariates are available.
The population covariate mean is computed once,
while the learner is trained separately across folds.

\begin{definition}[Aggregated cross-fitted small area learner]
\label{def:SAL_agg_cf}
Let $\hat g_{n,d}^{(-\ell)}$,
$\bar g_{n,d}$ and
$\hat g_{n,d}^{\mathrm{oof}}$ be as in
Definition~\ref{def:SAL_cf_family}, and let $\bar g_{n,d}^{agg} = \bar g_{n,d}(\bar{\mathbf Z}_d)$. For $a\in[0,1]$, define
\begin{equation}
\label{eq:cf_predictor_a}
\hat\mu_d^{cf,\mathrm{agg}}(a)
=
\bar g_{n,d}^{agg}
+
aP_{n_d}(Y-\hat g_{n,d}^{\mathrm{oof}}).
\end{equation}
\end{definition}

The population target $\mu_d^{\mathrm{agg}}$
and aggregation bias $b_d$ remain those in
\eqref{eq:mu_agg}. As without cross-fitting,
additional control is needed at the area
covariate mean.

\begin{assumption}[Aggregation stability under cross-fitting]
\label{assump:cf_pointwise}
For the area $d$ under consideration: (i)
$\sqrt{n_d}
\{\bar g_{n,d}(\bar{\mathbf z}_d)
-g_d^*(\bar{\mathbf z}_d)
-P_d(\bar g_{n,d}-g_d^*)\}
=o_p(1)$; (ii)
there exist a neighbourhood $U_d$ of
$\bar{\mathbf z}_d$ and a constant
$\kappa_d<\infty$ such that, with probability
tending to one, $\bar g_{n,d}$ is
$\kappa_d$-Lipschitz on $U_d$; (iii)
$E_{P_d^{\mathbf Z}}\|\mathbf Z\|^2<\infty$.
\end{assumption}

Condition (i) controls the aggregation contrast
of the fold-averaged estimation error, while
condition (ii) controls evaluation at the observed
population covariate mean. The latter holds,
in particular, when every fold-specific learner
is locally Lipschitz with the same bound.

\begin{proposition}[Asymptotic normality under cross-fitted aggregation]
\label{prop:normality_cf_agg}
Let $\hat\mu_d^{cf,\mathrm{agg}}(a_n)$ be the estimator in
\eqref{eq:cf_predictor_a}, evaluated at a
deterministic sequence $a_n\in[0,1]$ with
$a_n\to a\in[0,1]$. Suppose that
$\sigma_d^2>0$, $E_{P_d}(Y^2)<\infty$, and
Assumptions~\ref{assump:design},
\ref{assump:iid}, \ref{assump:regime},
and \ref{assump:cf_pointwise} hold.
Suppose also that all conditions of
Assumption~\ref{assump:cf_learner} hold except for
$n_d^{3/2}/N_d\to0$, which is not required here. Then:
\begin{enumerate}
\item[(i)]
$\hat\mu_d^{cf,\mathrm{agg}}(a_n)
-\mu_d^{\mathrm{agg}}
\xrightarrow{p}(a-1)\beta_d$;
\item[(ii)]
if $a_n\equiv1$, then
$\sqrt{n_d}
\bigl\{
\hat\mu_d^{cf,\mathrm{agg}}(1)
-\mu_d^{\mathrm{agg}}
\bigr\}
\xrightarrow{d}N(0,\sigma_d^2)$;
\item[(iii)]
if $a\in(0,1)$, then under
Assumption~\ref{assump:cf_learner_mean},
\[
\sqrt{n_d}
\bigl\{
\hat\mu_d^{cf,\mathrm{agg}}(a_n)
-\mu_d^{\mathrm{agg}}
-(a_n-1)\beta_d
\bigr\}
\xrightarrow{d}N(0,a^2\sigma_d^2);
\]

\item[(iv)]
if $a_n\to0$ and, in place of
Assumption~\ref{assump:cf_learner_mean},
\[
P_d(\bar g_{n,d}-g_d^*)
=O_p(n^{-\gamma})
\]
for some $\gamma\in(0,1/2]$, then
\[
\hat\mu_d^{cf,\mathrm{agg}}(a_n)
-\mu_d^{\mathrm{agg}}
-(a_n-1)\beta_d
=
O_p\left(
a_n n_d^{-1/2}
+N_d^{-1/2}
+n^{-\gamma}
\right)
+o_p(n_d^{-1/2}).
\]
\end{enumerate}

Consequently,
$\hat\mu_d^{cf,\mathrm{agg}}(a_n)-\mu_d
\xrightarrow{p}b_d+(a-1)\beta_d$,
and the bounds in
Lemma~\ref{lemma:aggregation_bias}
apply unchanged.
\end{proposition}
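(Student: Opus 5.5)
We decompose the estimator around the aggregated target and reduce every term either to a result already established for the cross-fitted family or to Assumption~\ref{assump:cf_pointwise}. Write $\bar{\mathbf z}_d=E_{P_d^{\mathbf Z}}(\mathbf Z)$ and $\bar{\mathbf Z}_d$ for its finite-population counterpart. Then
\begin{align*}
\hat\mu_d^{cf,\mathrm{agg}}(a_n)-\mu_d^{\mathrm{agg}}
={}&\{\bar g_{n,d}(\bar{\mathbf Z}_d)-\bar g_{n,d}(\bar{\mathbf z}_d)\}
+\{\bar g_{n,d}(\bar{\mathbf z}_d)-g_d^*(\bar{\mathbf z}_d)-P_d(\bar g_{n,d}-g_d^*)\}\\
&+P_d(\bar g_{n,d}-g_d^*)
+a_nP_{n_d}(Y-\hat g^{\mathrm{oof}}_{n,d})-\beta_d .
\end{align*}
Here we used $\mu_d^{\mathrm{agg}}=g_d^*(\bar{\mathbf z}_d)+\beta_d$.

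The first term is the covariate-mean error. Assumption~\ref{assump:iid} and Assumption~\ref{assump:cf_pointwise}(iii) give, by Chebyshev's inequality, $\bar{\mathbf Z}_d-\bar{\mathbf z}_d=O_p(N_d^{-1/2})$. So $\bar{\mathbf Z}_d\in U_d$ with probability tending to one, and the local Lipschitz condition (ii) bounds this term by $\kappa_d\|\bar{\mathbf Z}_d-\bar{\mathbf z}_d\|=O_p(N_d^{-1/2})$. Assumption~\ref{assump:regime}(ii) makes this $o_p(n_d^{-1/2})$. The second term is $o_p(n_d^{-1/2})$ directly by Assumption~\ref{assump:cf_pointwise}(i).

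It is worth noting why $n_d^{3/2}/N_d\to0$ is not needed. Only the covariate mean $\bar{\mathbf Z}_d$ enters, not the population average $P_{N_d}\bar g_{n,d}$. Lemma~\ref{lemma:cf_population_averaging}, which controls the reuse of sampled units in training and is the sole place that condition is used, is therefore never invoked.

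The remaining terms are exactly those of the cross-fitted proof with $P_{N_d}\bar g_{n,d}$ replaced by $P_d\bar g_{n,d}$. We split the residual mean as
\[
P_{n_d}(Y-\hat g^{\mathrm{oof}}_{n,d})=P_{n_d}\varepsilon_d^*-P_{n_d}(\hat g^{\mathrm{oof}}_{n,d}-g_d^*).
\]
Conditionally on the training folds, $\hat g^{(-\ell)}_{n,d}$ is fixed and independent of the evaluation fold (Assumption~\ref{assump:cf_learner}). The centred fold terms $(P_{n_d\ell}-P_d)(\hat g^{(-\ell)}_{n,d}-g_d^*)$ therefore have conditional variance at most $n_{d\ell}^{-1}\|\hat g^{(-\ell)}_{n,d}-g_d^*\|^2_{P_d^{\mathbf Z},2}$. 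They are $o_p(n_d^{-1/2})$ by the conditional Chebyshev inequality and the $L_2$ consistency. Stratification $|n_{d\ell}-n_d/L|\le1$ turns the weighted sum of fold means $P_d(\hat g^{(-\ell)}_{n,d}-g_d^*)$ into $P_d(\bar g_{n,d}-g_d^*)$ up to $O_p(n_d^{-1})$, using boundedness by $B$. Hence
\[
\hat\mu_d^{cf,\mathrm{agg}}(a_n)-\mu_d^{\mathrm{agg}}-(a_n-1)\beta_d
=a_n(P_{n_d}-P_d)\varepsilon_d^*+(1-a_n)P_d(\bar g_{n,d}-g_d^*)+o_p(n_d^{-1/2})+O_p(N_d^{-1/2}).
\]

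We then read off the four parts.
\begin{itemize}
\item[(i)] The law of large numbers under Assumption~\ref{assump:design}(iii), together with $|P_d(\bar g_{n,d}-g_d^*)|\le\|\bar g_{n,d}-g_d^*\|_{P_d^{\mathbf Z},2}=o_p(1)$, gives (i).
\item[(ii)] With $a_n\equiv1$ the learner-mean term disappears. The classical CLT for $\sqrt{n_d}(P_{n_d}-P_d)\varepsilon_d^*$, which needs $E_{P_d}(Y^2)<\infty$ and $\sigma_d^2>0$, yields (ii).
\item[(iii)] Assumption~\ref{assump:cf_learner_mean}, averaged over folds, kills $\sqrt{n_d}(1-a_n)P_d(\bar g_{n,d}-g_d^*)$. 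Slutsky's lemma then gives variance $a^2\sigma_d^2$.
\item[(iv)] The rate hypothesis gives $O_p(n^{-\gamma})$, and the CLT term is $O_p(a_nn_d^{-1/2})$.
\end{itemize}

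For the final claims, $\mu_d^{\mathrm{agg}}-\mu_d=b_d$ by \eqref{eq:mu_agg}, which gives the stated probability limit. Lemma~\ref{lemma:aggregation_bias} concerns only $g_d^*$ and $P_d^{\mathbf Z}$, so it applies unchanged.

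The main obstacle is the cross-fitted residual step: conditioning on training folds while the sampling, folds and superpopulation are all random, and handling the fold-weight imbalance. I would handle this by conditioning on $\mathcal F_{\mathcal P}$, the sampling indicators and the fold allocation, exactly as permitted by Assumption~\ref{assump:cf_learner}.
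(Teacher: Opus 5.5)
Your proposal is correct and follows the paper's proof essentially term by term. Your decomposition coincides with the paper's $T_0,T_2,T_{3,n},T_4,T_5,T_6$ split: $T_5$ is handled by Chebyshev plus the local Lipschitz condition, $T_6$ by Assumption~\ref{assump:cf_pointwise}(i), and the fold terms by conditional Chebyshev. The only cosmetic difference is that you bound the fold-imbalance part of $T_4$ by $O(n_d^{-1})$ via the uniform bound $B$, whereas the paper uses Cauchy--Schwarz and $L_2$-consistency to get $o_p(n_d^{-1})$; either is negligible at the $\sqrt{n_d}$ scale.
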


Unlike the cross-fitted estimator with unit-level
population covariates, the aggregated family
does not require
\eqref{eq:cf_population_averaging}.
Its imputation component involves a single
evaluation of the fold-averaged learner rather
than an average over all population units.

\begin{corollary}[Finite-population mean under aggregation]
\label{cor:finite_pop_agg}
Let $\hat\mu_d^{(\cdot),\mathrm{agg}}(a_n)$
denote either $\hat\mu_d^{\mathrm{agg}}(a_n)$ in
Definition~\ref{def:SAL_agg} or
$\hat\mu_d^{cf,\mathrm{agg}}(a_n)$ in
Definition~\ref{def:SAL_agg_cf}.
Under the assumptions of the corresponding
proposition, its conclusions remain valid with
$\bar Y_d+b_d$ in place of
$\mu_d^{\mathrm{agg}}$. In particular,
\[
\hat\mu_d^{(\cdot),\mathrm{agg}}(a_n)-\bar Y_d
\xrightarrow{p}b_d+(a-1)\beta_d.
\]
For $a\in(0,1]$, under the respective conditions
of parts (ii) and (iii),
\begin{equation}
\label{eq:clt_finite_agg}
\sqrt{n_d}
\bigl\{
\hat\mu_d^{(\cdot),\mathrm{agg}}(a_n)
-\bar Y_d-b_d-(a_n-1)\beta_d
\bigr\}
\xrightarrow{d}N(0,a^2\sigma_d^2).
\end{equation}
The rate conclusion in part (iv) likewise
continues to hold with
$\mu_d^{\mathrm{agg}}$ replaced by
$\bar Y_d+b_d$.
\end{corollary}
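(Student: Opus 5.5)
The proof rests on the exact decomposition
\[
\hat\mu_d^{(\cdot),\mathrm{agg}}(a_n)-\bar Y_d-b_d
=\bigl\{\hat\mu_d^{(\cdot),\mathrm{agg}}(a_n)-\mu_d^{\mathrm{agg}}\bigr\}-(\bar Y_d-\mu_d),
\]
which follows from the definition $b_d=\mu_d^{\mathrm{agg}}-\mu_d$ in \eqref{eq:mu_agg}. We then argue exactly as in the remark after Corollary~\ref{cor:finite_pop}. The first bracket is handled entirely by Proposition~\ref{prop:normality_agg} (without cross-fitting) or Proposition~\ref{prop:normality_cf_agg} (with cross-fitting). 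Under the respective assumptions, these give the consistency statement (i), the root-$n_d$ normal limits (ii) and (iii), and the rate statement (iv), each centred at $\mu_d^{\mathrm{agg}}+(a_n-1)\beta_d$. So the only new ingredient is control of the second term $\bar Y_d-\mu_d$.

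For that term, Assumption~\ref{assump:iid} says that, conditionally on $\mathcal F_{\mathcal P}$, the $N_d$ population outcomes of area $d$ are i.i.d.\ from $P_d$ with $E_{P_d}(Y^2)<\infty$. Chebyshev's inequality then gives $\bar Y_d-\mu_d=O_p(N_d^{-1/2})$. Since the bound holds conditionally with a constant not depending on the conditioning, it also holds unconditionally.

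With this bound, each conclusion transfers in turn:
\begin{enumerate}
\item[(i)] For the consistency statement, $O_p(N_d^{-1/2})=o_p(1)$ because $N_d\ge n_d\to\infty$. Adding $b_d$ on both sides yields $\hat\mu_d^{(\cdot),\mathrm{agg}}(a_n)-\bar Y_d\xrightarrow{p}b_d+(a-1)\beta_d$.
\item[(ii)--(iii)] For the distributional statements, $\sqrt{n_d}(\bar Y_d-\mu_d)=O_p\{(n_d/N_d)^{1/2}\}=o_p(1)$ by Assumption~\ref{assump:regime}(ii). Slutsky's lemma then carries over the limit $N(0,a^2\sigma_d^2)$ unchanged, giving \eqref{eq:clt_finite_agg}. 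When $a=1$ with $a_n\equiv1$ this is part (ii); when $a\in(0,1)$, under Assumption~\ref{assump:learner_mean} or Assumption~\ref{assump:cf_learner_mean}, it is part (iii).
\item[(iv)] For the rate statement, the extra error $O_p(N_d^{-1/2})$ is already one of the terms in the stated bound $O_p(a_nn_d^{-1/2}+N_d^{-1/2}+n^{-\gamma})+o_p(n_d^{-1/2})$, so the bound is unchanged.
\end{enumerate}

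The main point needing care is whether the population mean $\bar Y_d$ and the estimator can be treated jointly. They are dependent, since the sampled outcomes are part of the population. However, the argument never needs their joint law: $\bar Y_d-\mu_d$ is negligible at every relevant scale, so Slutsky's lemma applies without any independence between the two pieces. One further check is that $\bar{\mathbf Z}_d$, the finite-population covariate mean entering the estimator, is already accounted for inside the propositions through Assumptions~\ref{assump:pointwise}(ii)--(iii) and \ref{assump:cf_pointwise}(ii)--(iii). Those assumptions absorb the difference between $\bar{\mathbf Z}_d$ and $\bar{\mathbf z}_d$, so nothing additional is required.
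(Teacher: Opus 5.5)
Your proposal is correct and follows the paper's proof essentially verbatim. You use the same exact decomposition via $b_d=\mu_d^{\mathrm{agg}}-\mu_d$, the Chebyshev bound $\bar Y_d-\mu_d=O_p(N_d^{-1/2})$, Assumption~\ref{assump:regime}(ii) with Slutsky's lemma for parts (ii)--(iii), and absorption of the extra $O_p(N_d^{-1/2})$ term into the rate in part (iv). One small quibble concerns your side remark about passing to unconditional statements: $\operatorname{Var}_{P_d}(Y)$ may depend on $\mathcal F_{\mathcal P}$, so your justification is not accurate as stated, but this is moot because the paper's results are stated conditionally on $\mathcal F_{\mathcal P}$.
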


Thus, consistency for $\bar Y_d$ requires the combined aggregation and residual-weight bias $b_d+(a-1)\beta_d$ to vanish.

\subsubsection{Variance estimation and confidence intervals}
\label{sec:variance_agg}

The aggregated families use the same sampled
residuals as their unit-level counterparts.
Consequently, the variance estimator
$\hat\sigma_d^2$ in \eqref{eq:sigma_hat}
applies unchanged, using fitted residuals
without sample splitting or their out-of-fold
counterparts under cross-fitting. Its construction
requires only the unit-level covariates available
for sampled observations.

As in Sections~\ref{sec:sample_no_split}--\ref{sec:sal_cf}, an exact residual
correction converts every member into its fully
corrected counterpart. For either $\hat\mu_d^{\mathrm{agg}}(a_n)$ in
Definition~\ref{def:SAL_agg} or
$\hat\mu_d^{cf,\mathrm{agg}}(a_n)$ in
Definition~\ref{def:SAL_agg_cf}, let
$\bar r_{n_d}^{(\cdot)}
:=P_{n_d}(Y-\hat g_{n,d}^{(\cdot)})$,
where the fitted values are out-of-fold under
cross-fitting. Then
$\hat\mu_d^{(\cdot),\mathrm{agg}}(a_n)
+(1-a_n)\bar r_{n_d}^{(\cdot)}
=\hat\mu_d^{(\cdot),\mathrm{agg}}(1)$.
Thus, the residual-weight bias can be removed as explained in Section~\ref{sec:sal_cf};
the aggregation bias $b_d$, however, remains.

\begin{corollary}[Confidence intervals under aggregation]
\label{cor:ci_agg}
Under the baseline assumptions of
Proposition~\ref{prop:normality_agg} or
Proposition~\ref{prop:normality_cf_agg},
respectively,
$\hat\sigma_d^2\xrightarrow{p}\sigma_d^2$.
For any deterministic $a_n\in[0,1]$, define
$\tilde\mu_d^{(\cdot),\mathrm{agg}}(a_n)
:=\hat\mu_d^{(\cdot),\mathrm{agg}}(a_n)
+(1-a_n)\bar r_{n_d}^{(\cdot)}$.
Then, for $\alpha\in(0,1)$,
\begin{equation}
\label{eq:ci_agg}
\Pr\left\{
\left|
\tilde\mu_d^{(\cdot),\mathrm{agg}}(a_n)
-\mu_d^{\mathrm{agg}}
\right|
\leq
z_{1-\alpha/2}
\frac{\hat\sigma_d}{\sqrt{n_d}}
\right\}
\longrightarrow1-\alpha.
\end{equation}
If $b_d$ is known, the interval centred at
$\tilde\mu_d^{(\cdot),\mathrm{agg}}(a_n)-b_d$
has the same asymptotic coverage for $\mu_d$
and, under Corollary~\ref{cor:finite_pop_agg},
for $\bar Y_d$.
\end{corollary}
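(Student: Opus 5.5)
The statement to prove is Corollary~\ref{cor:ci_agg}. The plan is to reduce it to part~(ii) of Proposition~\ref{prop:normality_agg} or Proposition~\ref{prop:normality_cf_agg}, using the exact algebraic identity and Slutsky's lemma. The argument has three steps.

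\textbf{Step 1: the identity.} By the identity stated just before the corollary,
\[
\tilde\mu_d^{(\cdot),\mathrm{agg}}(a_n)=\hat\mu_d^{(\cdot),\mathrm{agg}}(1)
\]
holds for every sample realization and every deterministic $a_n$. It follows because the aggregated imputation term $\hat g^{agg}_{n,d}$ (or $\bar g^{agg}_{n,d}$) does not depend on $a$, so adding $(1-a_n)\bar r^{(\cdot)}_{n_d}$ simply completes the residual weight to one. Hence the coverage statement concerns only the $a_n\equiv1$ member. Part~(ii) of the relevant proposition then gives
\[
\sqrt{n_d}\{\tilde\mu_d^{(\cdot),\mathrm{agg}}(a_n)-\mu_d^{\mathrm{agg}}\}\xrightarrow{d}N(0,\sigma_d^2).
\]
This part requires neither Assumption~\ref{assump:learner_mean} nor Assumption~\ref{assump:cf_learner_mean}, which is why the corollary refers only to the baseline assumptions.

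\textbf{Step 2: consistency of $\hat\sigma_d^2$.} The sampled residuals $\hat\varepsilon_{dj}$ and $\hat\sigma_d^2$ are literally the same objects as in the unit-level case. Aggregation affects only the imputation term, not the residuals. The baseline assumptions of Proposition~\ref{prop:normality_agg} contain those of Proposition~\ref{prop:normality_family} for $a_n\equiv1$. Likewise, those of Proposition~\ref{prop:normality_cf_agg} contain those of Proposition~\ref{prop:normality_cf_family}, except for the condition $n_d^{3/2}/N_d\to0$. Corollary~\ref{cor:variance} therefore gives $\hat\sigma_d^2\xrightarrow{p}\sigma_d^2$ directly in the standard case.

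For the cross-fitted case, the proof of $\hat\sigma_d^2\to\sigma_d^2$ only needs the following.
\begin{itemize}
\item[(a)] Decompose $\hat\varepsilon_{dj}=\varepsilon^*_{dj}-(\hat g^{(-\ell)}_{n,d}-g^*_d)(\mathbf Z_{dj})$.
\item[(b)] The sample variance of $\varepsilon^*_{dj}$ converges to $\sigma_d^2$ by the weak law of large numbers.
\item[(c)] The cross term and the squared learner error are bounded by Cauchy--Schwarz. The squared error has conditional mean $\|\hat g^{(-\ell)}_{n,d}-g^*_d\|^2_{P_d^{\mathbf Z},2}=o_p(1)$ given the training folds, by fold independence and boundedness by $B$.
\end{itemize}
None of these steps uses $n_d^{3/2}/N_d\to0$, which entered only through Lemma~\ref{lemma:cf_population_averaging}. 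The main point to check is therefore that this dropped condition was used nowhere in the variance argument. I expect this bookkeeping to be the only real obstacle. The standard case similarly uses Assumption~\ref{assump:learner} with the Donsker/maximal-inequality control to make $P_{n_d}(\hat g_{n,d}-g_d^*)^2=o_p(1)$.

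\textbf{Step 3: coverage.} Slutsky's lemma, together with $\sigma_d^2>0$, gives
\[
\sqrt{n_d}(\tilde\mu_d^{(\cdot),\mathrm{agg}}(a_n)-\mu_d^{\mathrm{agg}})/\hat\sigma_d\xrightarrow{d}N(0,1).
\]
Since the normal distribution function is continuous, the probability in \eqref{eq:ci_agg} converges to $1-\alpha$.

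For the final claim, if $b_d$ is known, then $\tilde\mu_d^{(\cdot),\mathrm{agg}}(a_n)-b_d-\mu_d=\tilde\mu_d^{(\cdot),\mathrm{agg}}(a_n)-\mu_d^{\mathrm{agg}}$ by \eqref{eq:mu_agg}, so the same interval covers $\mu_d$. For $\bar Y_d$, Corollary~\ref{cor:finite_pop_agg} (equivalently, $\sqrt{n_d}(\bar Y_d-\mu_d)=O_p(\sqrt{n_d/N_d})=o_p(1)$ under Assumption~\ref{assump:regime}(ii)) lets us replace $\mu_d$ by $\bar Y_d$ inside the pivot, and Slutsky's lemma applies again.
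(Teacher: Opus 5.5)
Your proposal is correct and follows the paper's proof essentially step for step. The exact identity $\tilde\mu_d^{(\cdot),\mathrm{agg}}(a_n)=\hat\mu_d^{(\cdot),\mathrm{agg}}(1)$ reduces everything to part~(ii) of the aggregation propositions. The variance-consistency argument of Corollary~\ref{cor:variance} carries over because aggregation leaves the sampled residuals untouched, and under cross-fitting it needs only fold-wise $L_2$-consistency and independence between training and evaluation folds, not $n_d^{3/2}/N_d\to0$. Slutsky's lemma, together with the shift by the known $b_d$ and Corollary~\ref{cor:finite_pop_agg}, then gives the remaining claims; your itemized check of the cross-fitted variance step is a more explicit version of what the paper states in one sentence.
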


For an unadjusted member with $a\in(0,1)$
satisfying part (iii)  of Propositions~\ref{prop:normality_agg} and \ref{prop:normality_cf_agg}, a consistent first-order standard
error is
$a_n\hat\sigma_d/\sqrt{n_d}$.
Inference for $\mu_d^{\mathrm{agg}}$ additionally
requires accounting for $(a_n-1)\beta_d$.
Correcting this term using the same sampled
residuals recovers the fully corrected member
exactly, with the standard error in
\eqref{eq:ci_agg}. The endpoint $a_n\to0$ is
instead governed by part (iv) of Propositions~\ref{prop:normality_agg} and \ref{prop:normality_cf_agg}.

\begin{remark}[Inference for $\mu_d$ or $\bar Y_d$ under aggregation]
\label{rem:agg_bias_inference}
When $b_d$ in \eqref{eq:mu_agg} is unknown, estimating it from the sampled covariates requires care. In particular, the natural plug-in correction collapses the fully
corrected aggregated estimator to the direct estimator: without cross-fitting,
subtracting
$\hat b_d=\hat g_{n,d}^{\mathrm{agg}}-P_{n_d}\hat g_{n,d}$
yields $P_{n_d}Y$, and the same holds under cross-fitting with
$\hat b_d=\bar g_{n,d}^{\mathrm{agg}}
-P_{n_d}\hat g_{n,d}^{\mathrm{oof}}$.
Its first-order variance is therefore generally $\operatorname{Var}_{P_d}(Y)/n_d$ rather than $\sigma_d^2/n_d$.
Other estimators of $b_d$ require joint analysis of their variability and
covariance with the aggregated estimator. Alternatively, if a known bound $B_d\geq|b_d|$ is available, enlarging the
radius in \eqref{eq:ci_agg} by $B_d$ yields asymptotic coverage at least
$1-\alpha$ for both $\mu_d$ and $\bar Y_d$.
Lemma~\ref{lemma:aggregation_bias} provides such bounds when the required
regularity constants and covariate dispersion are known or conservatively
bounded.
\end{remark}

\section{Discussion}\label{sec:discussion}
In this section, we discuss several aspects of the proposed estimators that
complement the preceding theoretical results.

\subsection{On area-level small area learners}
\label{sec:sal_area_level}

Suppose that only area-level data are available, as in
Section~\ref{sec:fh}. For each area $d$, let
$\mathbf u_d$ denote the available auxiliary variables and
let $Y_d=\widehat{\bar y}_d^{(\cdot)}$ be a direct estimator
with known sampling variance $V_d$. The pooled dataset is
$\mathcal S=\{(Y_d,\mathbf u_d):d\in\mathcal D\}$. A natural area-level analogue of \eqref{eq:sal_family} is
$\hat\mu_d(a)=aY_d+(1-a)\hat g_D(\mathbf u_d)$, where
$\hat g_D$ is a shared regression learner fitted across
areas. The endpoints $a=1$ and $a=0$ give the direct and
synthetic estimators, respectively, while intermediate
weights borrow strength across areas. Using the Fay--Herriot shrinkage factor from
Section~\ref{sec:fh}, a flexible analogue replaces the
linear regression component by $\hat g_D(\mathbf u_d)$:
$\hat\mu_d^{sal,fh}
=\hat\Gamma_dY_d+(1-\hat\Gamma_d)\hat g_D(\mathbf u_d)$.
The conventional Fay--Herriot EBLUP is recovered when
$\hat g_D$ is linear and estimated by the usual
Fay--Herriot procedure.

Although the area-level family has the same algebraic form
as its unit-level counterpart, it lies outside the
asymptotic framework developed above. Each area contributes
only one noisy direct estimate, so $n_d\to\infty$ is
unavailable; learning the shared regression component
instead relies on $D\to\infty$. With nonvanishing $V_d$,
area-specific prediction error need not vanish as the
number of areas increases. Inference therefore requires
a separate prediction mean squared error analysis that
accounts for estimation of the shared learner and shrinkage
parameters. Related combinations of flexible regression
and empirical Bayes shrinkage have been studied in
covariate-powered empirical Bayes
\citep{ignatiadis2019covariate,ignatiadis2025empirical};
a corresponding inferential theory for the present setting
is beyond the scope of this paper.

\subsection{On augmented estimation and Neyman orthogonality}
\label{sec:causal}
\label{sec:orthogonality}

The fully corrected SAL estimator resembles the augmented
regression-imputation estimators used in semiparametric
inference. With $D=2$, the areas may be viewed as two observed groups, and $\hat\mu_d^{sal}$ combines the population average of fitted outcomes with the mean residual of the group's sampled units. Unlike causal treatment-effect estimation, however, the target here is a descriptive group mean, and no treatment-assignment or overlap assumptions are required.
The connection is made precise by the population
moment corresponding to \eqref{eq:sal_family}, $U_d(\theta,g)=P_d^{\mathbf Z}g_d +aP_d(Y-g_d)-\theta$.
%\begin{equation}
%\label{eq:moment_family}
%U_d(\theta,g)
%=
%P_d^{\mathbf Z}g_d
%+aP_d(Y-g_d)-\theta.
%\end{equation}
For $a=1$, $U_d(\mu_d,g)=0$ for every integrable
$g$. Moreover, its Gateaux derivative is $D_gU_d(\theta,g)[h]=(1-a)P_d^{\mathbf Z}h_d$,
%\begin{equation}
%\label{eq:gateaux}
%D_gU_d(\theta,g)[h]
%=
%(1-a)P_d^{\mathbf Z}h_d,
%\end{equation}
which vanishes in every direction precisely when
$a=1$. Thus, the fully corrected member is Neyman
orthogonal at every $g$, not merely at $g^*$
\citep{chernozhukov2018dml}. This explains why its
asymptotic normality does not require
Assumption~\ref{assump:learner_mean}:
the learner's area mean error cancels exactly.
For fixed weights $a\neq1$, that error generally
enters at first order.
%Without sample splitting, the required control of the remaining empirical-process term follows from Definition~\ref{def:low_complexity} together with Assumptions~\ref{assump:entropy_growth} and\ref{assump:learner}; cross-fittingpermits more flexible learners under the correspondingconvergence conditions. 
The same residual-correction principle underlies model-assisted generalized difference estimation, as illustrated in
Example~\ref{ex:SSE}, and is related to recent
work on model-assisted estimation with machine
learning \citep{an2026agnostic,dagdoug2026machine}.

\subsection{On borrowing strength and the $\sqrt{n_d}$ rate}
\label{rem:rate}

The $\sqrt{n_d}$ normalisation in
Propositions~\ref{prop:normality_family} and
\ref{prop:normality_cf_family} is consistent
with classical SAE theory. In particular,
\cite{prasad1990estimation} establish the
second-order MSE decomposition of the EBLUP
under NERM,
$\operatorname{MSE}(\hat\mu_d^{eblup})
=g_{1d}+g_{2d}+g_{3d}+o(D^{-1})$.
Here $g_{1d}$ is the leading prediction-error
component, of order $O(n_d^{-1})$ when $n_d$
grows, while $g_{2d}$ and $g_{3d}$ account
for estimating the regression parameters and
variance components, respectively, and are
of order $O(D^{-1})$ under the classical
large-$D$ regularity conditions. An analogous separation of error sources
holds for the fully corrected SAL estimator.
The proof of Proposition~\ref{prop:normality_family} gives
$\hat\mu_d^{sal}-\mu_d
=(P_{n_d}-P_d)(Y-g_d^*)
+R_{n,d}^{\mathrm{learn}}+R_{N,d}^{\mathrm{pop}}$,
where the first term captures area residual variation,
$R_{n,d}^{\mathrm{learn}}=o_p(n_d^{-1/2})$
is the nuisance-learning error, and
$R_{N,d}^{\mathrm{pop}}=O_p(N_d^{-1/2})$
is the finite-population averaging error.
%\begin{equation*} 
%\label{eq:second_order}
%\hat\mu_d^{sal}-\mu_d=
%\underbrace{(P_{n_d}-P_d)(Y-g_d^*)}
%_{\text{area residual variation}}
%+
%\underbrace{R_{n,d}^{\mathrm{learn}}}
%_{\text{nuisance learning}}
%+
%\underbrace{R_{N,d}^{\mathrm{pop}}}
%_{\text{finite population}},
%\end{equation*}
%where
%$R_{n,d}^{\mathrm{learn}}=o_p(n_d^{-1/2})$
%and
%$R_{N,d}^{\mathrm{pop}}=O_p(N_d^{-1/2})$.
The leading term has variance
$\sigma_d^2/n_d$, while the remaining contributions are asymptotically negligible
under the stated assumptions. Thus, as in the Prasad--Rao decomposition, area-specific
variation determines the leading error, whereas estimation using pooled information
contributes through additional terms. The two decompositions are analogous in
structure, although they concern different
asymptotic regimes.

Borrowing strength in SAL therefore acts primarily through
$\sigma_d^2=\operatorname{Var}_{P_d}\{Y-g_d^*(\mathbf Z)\}$ rather than by
changing the first-order rate. Pooling can improve the prediction function and
thereby reduce this variance, whose minimum is attained, in particular, when
$g_d^*=m_d$. The learner convergence and complexity conditions control the
remaining estimation error. Cross-fitting accommodates more flexible learners
without the entropy restriction, while retaining area-wise consistency and, for the unit-level population-covariate formulation, the population-averaging condition in \eqref{eq:cf_population_averaging}. Borrowing strength could also be used for variance estimation. In particular,
one could model the conditional second moment
$q(\mathbf z,d)=E_{P_d}[\{Y-g_d^*(\mathbf Z)\}^2\mid\mathbf Z=\mathbf z]$
across areas and use
$\sigma_d^2=P_d^{\mathbf Z}q(\cdot,d)-\beta_d^2$.
Such pooling may reduce the variability of area-specific variance estimates,
but its theoretical properties depend on the across-area structure and are
left for future work.
%The borrowing-strength interpretation following
%Proposition~\ref{prop:normality_family} also applies here:the area sample determines the $n_d^{-1/2}$ rate, while poolingcan reduce the asymptotic variance $\sigma_d^2$, minimized when$g_d^*=m_d$. Cross-fitting broadens the range of learnerscapable of attaining a limit $g^*$ close to $m$, potentially improving this variance. This is particularly valuable when $D_n\to\infty$, as representing increasingly many area-specific regression functions within the class of Definition~\ref{def:low_complexity} makes Assumption~\ref{assump:entropy_growth} more demanding. Cross-fitting avoids this complexity constraint, although area-specific consistency and \eqref{eq:cf_population_averaging}remain necessary and do not follow from $C_{D_n}=O(D_n)$ alone.

\subsection{On the choice of the learner}\label{sec:learner-selection}
A central practical question is how to choose a learner for $g$ and the
final estimator. Classical model-based methods guide estimation through
a specified working model, but selecting among competing procedures
still requires a practical assessment criterion. In applications, where
the true area targets are unknown, we propose evaluating candidate
procedures by aggregating their small-area estimates to coarser
subpopulations and comparing these aggregates with the more stable direct
estimates available at that level. This approach exploits the precision
available at coarser scales to support selection when area-level direct
estimates are too noisy. A key benefit is that learners are assessed
through the target estimates they produce, using an observable survey
benchmark. The application in Section~\ref{sec:data_analysis} illustrates
the promise of this approach; its mathematical formulation is given in
Appendix~\ref{sec:choice_learner}. Since the comparison uses shared survey
observations, the criterion is interpreted as internal validation of
agreement at the chosen aggregation level. It thus provides a practical,
target-oriented basis for ranking candidate learners and estimators.

\subsection{On the hierarchical correction weights}
\label{sec:hierarchical_w}
Consider the bias--variance approximation from Section~\ref{sec:sal_cf}, i.e., 
$\operatorname{AMSE}_d(a\mid\beta_d)
=(1-a)^2\beta_d^2+a^2\sigma_d^2/n_d$, minimised at the oracle weight
$a_d^\star=\beta_d^2/(\beta_d^2+\sigma_d^2/n_d)$. The oracle weight $a_d^\star$ is conditional on the realized approximation
error $\beta_d$. A two-level model, in which a distribution is placed on the area-level
approximation errors $\beta_d$, gives a different member of the family in Definition~\ref{def:SAL_family}. Suppose that, conditionally on $g^*$ in
Assumption~\ref{assump:learner} and on the area covariate laws
$P_d^{\mathbf Z}=\mathcal L(\mathbf Z\mid G=d)$ of \eqref{eq:area_law}, the
area-level approximation errors $\beta_d$ of \eqref{eq:sigma_d} are mutually
independent across areas, with $E(\beta_d)=0$ and
$\operatorname{Var}(\beta_d)=A_\beta$, and independent of the sampling error
in $R_d:=P_{n_d}(Y-g^*_d)=\beta_d+\bar e_d$, where
$\operatorname{Var}(\bar e_d\mid\beta_d)=\sigma_d^2/n_d$. Averaging over the distribution of $\beta_d$ gives the mean squared error of the correction $aR_d$
as a predictor of $\beta_d$,
$(1-a)^2A_\beta+a^2\sigma_d^2/n_d$,
minimised at $\Gamma_d
=A_\beta/\left(A_\beta+\sigma_d^2/n_d\right)$. The resulting predictor $\Gamma_dR_d$ is also the best linear predictor (BLP) of $\beta_d$ from $R_d$
and, under Gaussianity, the posterior mean
(the Bayes predictor, BP), by the conjugate calculation
\citep[Chapter~6]{rao2015}.
weight therefore selects the member of the family in Definition~\ref{def:SAL_family}
with $a=\Gamma_d$. Moreover, $\Gamma_d$ has exactly the form of the shrinkage factor of the models in Sections~\ref{sec:nerm} and~\ref{sec:fh}.

The three weights, i.e., $a=1$, $a=a_d^\star$ and $a=\Gamma_d$, answer three different questions.
First, $a=1$ defines the fully corrected member of the family in Definition~\ref{def:SAL_family}
and is consistent under the stated assumptions irrespective of the value of $\beta_d$. Second, $a=a_d^\star$ minimises the approximate MSE criterion conditional on the realized $\beta_d$. Third, $a=\Gamma_d$ minimizes that criterion on average under the second-level model; this optimality relies on the specified second-moment and independence assumptions.

\subsection{On the low-complexity class}
\label{sec:low_complex}

The class $\mathcal G_{c,n}$ of
Definition~\ref{def:low_complexity} accommodates
uniformly bounded linear, penalised (such as lasso, ridge or elastic net over a dictionary of uniformly bounded basis functions), series,
spline and additive learners, as well as
fixed-complexity trees and ensembles.
For a $K_n$-dimensional dictionary, the entropy
index is typically $v_n=O(K_n)$, so
Assumption~\ref{assump:entropy_growth} requires
$K_n\log n_d=o(\sqrt{n_d})$. Unrestricted
area indicators contribute to $K_n$ and
therefore restrict the growth of $D_n$.
Similarly, for tree ensembles with $M_n$
trees of depth at most $k_n$, classes satisfying
$v_n\lesssim M_n2^{k_n}$ meet the entropy
condition when
$M_n2^{k_n}\log n_d=o(\sqrt{n_d})$.
Any additional complexity from splitting
on the area indicator must also be accounted for. These complexity bounds must be combined
with the area-wise convergence required by
Assumption~\ref{assump:learner}. When the entropy conditions are difficult to verify,
cross-fitting removes their necessity but
still requires area-wise learner consistency and, for the unit-level population-covariate formulation, the population-averaging condition
\eqref{eq:cf_population_averaging}.

\subsection{On sampling design and the unweighted residual correction}
\label{sec:weighting}

The unweighted residual correction in
Definition~\ref{def:SAL_family} relies on
Assumption~\ref{assump:design}. Noninformativeness
given the covariates alone is insufficient:
if inclusion probabilities vary with $\mathbf Z$
within an area, the sampled covariate distribution
is tilted, and the unweighted residual mean
generally does not estimate
$P_d(Y-g_d^*)$. An exception occurs when
$g_d^*=m_d$, since the residual is then
conditionally centred. Self-weighting within
areas removes this distinction and preserves
robustness to learner misspecification, while
allowing sampling fractions to differ across areas.

Under unequal-probability sampling, the natural alternative is the H\'ajek-normalised version of the weighted residual correction used in model-assisted estimation with flexible learners \citep{dagdoug2026machine,an2026agnostic}. The distinction in SAE is that surveys are generally not designed to ensure adequate sample sizes within every small area. Hence, area-specific weighted residual corrections may be unstable or unavailable, even when the pooled learner can borrow strength across areas. Extending our results to this setting would require conditions on within-area inclusion probabilities and a central limit theorem for the weighted residual mean.

\subsection{On uniform statements}
\label{sec:uniform}

The results in 
Sections~\ref{sec:sample_no_split}-~\ref{sec:sal_cf} and \ref{sec:agg_no_cf}-\ref{sec:agg_cf} are area-wise:
they concern a fixed area $d$ along the sequence
of populations. Uniform results over $\mathcal D_n$,
such as bounds on
$\max_d|\hat\mu_d-\mu_d|$, simultaneous confidence
intervals, or multiple testing procedures, would
require uniform versions of the learner assumptions,
suitable moment and nondegeneracy conditions,
and rates accounting for the growth of $D_n$.
In particular, they require sufficient growth
of the smallest area sample size and uniform
control of learner errors and empirical-process
or cross-fitted remainders. Uniform bias control
is additionally needed for partially corrected
estimators, while simultaneous inference may
require accounting for dependence across areas.
We leave these extensions for future work.

\section{Simulation study}
\label{sec:simulations}
To test how estimators introduced in Sections~\ref{sec:sample_no_split}-~\ref{sec:sal_cf} and \ref{sec:agg_no_cf}-\ref{sec:agg_cf} perform in practice, we carried out extensive simulation studies covering various frameworks defined in accordance with Sections~\ref{sec:super_pop}--\ref{sec:foundations}. In Section~\ref{sec:sim_design} we summarize the design, and in Section~\ref{sec:sim_results} we present the partial results. For the sake of compactness, further details on the generation of the simulation scenarios and complete results are deferred to Appendix~\ref{app:generation}-\ref{app:design}. 

%and the Online Supplementary Material. 

\subsection{Design}
\label{sec:sim_design}
Populations are generated to reflect the structural framework of
Sections~\ref{sec:super_pop}--\ref{sec:foundations}. Following the superpopulation principle in Section~\ref{sec:super_pop}, in each Monte Carlo replication we generate a finite population from the area laws $P_d$ and draw a sample from it. All estimators are scored against the realized finite-population means $\bar y_d$. We consider $p=10$ unit-level and $q=2$ contextual covariates, cf. Section~\ref{sec:super_pop}, whose laws are Gaussian within setups S1-S2 and non-Gaussian in the remaining scenarios (see Appendix~\ref{app:gen_covariates} for details). Outcomes follow
$Y_{dj}=m_d(\mathbf Z_{dj})+e_{dj}$ with $m_d=m_0+\delta_d$ as in
Definition~\ref{def:constraint_generic}. The deviations $\delta_d$ are generated to satisfy, see Appendix~\ref{app:gen_deviations}. Within each area, $n_d$ of $N_d=1000$ units are selected by simple
random sampling, a design that is noninformative given the covariates and
self-weighting within areas, cf. Appendix~\ref{app:gen_sampling} 

Table~\ref{tab:app_scenarios} specifies the five simulation scenarios. Here, we provide a succinct summary. S1 follows the nested-error model of Section~\ref{sec:nerm}, with linear
regression, independent Gaussian covariates, and a pure location shift.
S2 introduces a smooth nonlinear regression function, nonlinear covariate
dependence, and deviations combining location and functional components.
S3 combines strongly nonlinear regression with interactions between
unit-level and contextual covariates, pronounced covariate shift requiring
area-wise extrapolation, deviations with no location component and
magnitude comparable to the noise, and non-Gaussian errors. Its deviations
cannot be fully represented by a random intercept.
S4 and S5 retain the structure of S3, with heterogeneity and noise both
low in S4 and both high in S5. All components of all five scenarios are smooth, containing no jumps or axis-aligned thresholds, so that no scenario is constructed to favour
a particular class of learner. Each scenario is run on a $2 \times 2$ factorial in $D \in \{25,50\}$ and
$n_d \in \{10,20\}$, giving twenty setups, with $500$ replications
per setup. The grid examines the two directions corresponding to increasing $n_d$ at fixed $D$ and increasing $D$ at fixed $n_d$; it should not itself be interpreted as an asymptotic experiment.

Three fitting schemes are compared: the estimators of
Definitions~\ref{def:SAL_family} and \ref{def:SAL_agg},
which use no sample splitting; and the cross-fitted estimators of Definitions~\ref{def:SAL_cf_family} and \ref{def:SAL_agg_cf} with $L=5$ folds, partitioning the sampled set either within each area (stratified folds) or without regard to area (pooled folds), denoted by ``cf\_s'' and ``cf\_p'', respectively,  in tables in Sections~\ref{sec:sim_results}-\ref{sec:data_analysis} and Appendix~\ref{sec:app_results}. The stratified scheme satisfies the fold-balance condition used in the cross-fitting theory. The pooled scheme is included as an empirical comparison; the propositions as stated do not directly cover its random within-area fold sizes.
The nuisance function is learnt on the full covariate vector $\mathbf x=(\mathbf z^\top,d)^\top$, so that the area indicator is among the learner's features. The library comprises a large variety of parametric and nonparametric models, as well as
machine-learning methods described in Appendix~\ref{app:design_estimators}. Alongside the estimators of
Sections~\ref{sec:salearners} and \ref{sec:sal_agg} we report the direct
estimator \eqref{eq:Hajek_HT}.

% ============================================================
% Simulation results -- revised
% Section references in words are marked  % REF: .
% ============================================================

\subsection{Results}
\label{sec:sim_results}

Tables~\ref{tab:top3_S2} and~\ref{tab:top3_S5} report selected results for
Scenarios~S2 and~S5. For each cell of the design, defined by $D$ and $n_d$,
they show the three combinations of learner, fitting scheme and estimator with
the smallest RMSE, the best fully corrected estimator when it is not among
them, and the direct estimator. Results for the remaining scenarios are reported
in Appendix~\ref{sec:app}.

\begin{table}[htbp]
\centering\small
\caption{Scenario S2: the 3 combinations of learner, fitting scheme and estimator with the smallest RMSE for $\bar y_d$ in each design cell, followed by the best fully corrected member where none ranks among them, and by the direct estimator. Absolute bias and RMSE are multiplied by 100; the gap is the percentage excess in RMSE over the best row of the cell. Coverage is of nominal $95\%$ intervals with the $t_{n_d-1}$ quantile; for \texttt{sae\_naive} it refers to the interval centred at the uncorrected predictor $\hat\mu_d(f_d)$ with half-width $t_{n_d-1,\,0.975}f_d\hat\sigma_d n_d^{-1/2}$, which is not covered by Corollary~\ref{cor:variance}.}
\label{tab:top3_S2}
\begin{tabular}{llllrrrr}
\toprule
cell & scheme & learner & estimator & abs.\ bias & RMSE & gap (\%) & coverage \\
\midrule
S2\_D25\_n10 & cf\_p & bart\_flexible & sae\_naive & 2.24 & 35.34 & 0.0 & 0.020 \\
S2\_D25\_n10 & cf\_s & bart\_flexible & sae\_naive & 2.23 & 35.39 & 0.2 & 0.022 \\
S2\_D25\_n10 & cf\_p & bart & sae\_naive & 2.69 & 37.90 & 7.2 & 0.023 \\ \cmidrule(lr){2-8}
S2\_D25\_n10 & sal & spline & sal & 0.99 & 37.97 & 7.5 & 0.920 \\
S2\_D25\_n10 & --- & --- & direct & 2.32 & 62.22 & 76.1 & 0.941 \\ \midrule
S2\_D25\_n20 & cf\_p & bart\_flexible & sae\_naive & 0.83 & 24.92 & 0.0 & 0.038 \\
S2\_D25\_n20 & cf\_s & bart\_flexible & sae\_naive & 0.82 & 24.94 & 0.1 & 0.038 \\
S2\_D25\_n20 & cf\_s & spline & sae\_naive & 1.06 & 25.83 & 3.6 & 0.035 \\ \cmidrule(lr){2-8}
S2\_D25\_n20 & sal & spline & sal & 1.06 & 25.85 & 3.7 & 0.939 \\
S2\_D25\_n20 & --- & --- & direct & 1.95 & 44.25 & 77.6 & 0.947 \\ \midrule
S2\_D50\_n10 & cf\_s & bart\_flexible & sae\_naive & 1.71 & 32.55 & 0.0 & 0.023 \\
S2\_D50\_n10 & cf\_p & bart\_flexible & sae\_naive & 1.69 & 32.57 & 0.1 & 0.023 \\
S2\_D50\_n10 & cf\_s & spline & sae\_naive & 1.27 & 36.48 & 12.1 & 0.019 \\ \cmidrule(lr){2-8}
S2\_D50\_n10 & sal & spline & sal & 1.27 & 36.51 & 12.2 & 0.940 \\
S2\_D50\_n10 & --- & --- & direct & 2.52 & 63.62 & 95.5 & 0.941 \\ \midrule
S2\_D50\_n20 & cf\_s & bart\_flexible & sae\_naive & 0.98 & 23.74 & 0.0 & 0.035 \\
S2\_D50\_n20 & cf\_p & bart\_flexible & sae\_naive & 0.99 & 23.74 & 0.0 & 0.038 \\
S2\_D50\_n20 & cf\_s & spline & sae\_naive & 0.90 & 25.56 & 7.6 & 0.036 \\ \cmidrule(lr){2-8}
S2\_D50\_n20 & sal & spline & sal & 0.89 & 25.57 & 7.7 & 0.945 \\
S2\_D50\_n20 & --- & --- & direct & 1.64 & 44.17 & 86.1 & 0.944 \\
\bottomrule
\end{tabular}
\end{table}

\begin{table}[htbp]
\centering\small
\caption{Scenario S5: the 3 combinations of learner, fitting scheme and estimator with the smallest RMSE for $\bar y_d$ in each design cell, followed by the best fully corrected member where none ranks among them, and by the direct estimator. Absolute bias and RMSE are multiplied by 100; the gap is the percentage excess in RMSE over the best row of the cell. Coverage is of nominal $95\%$ intervals with the $t_{n_d-1}$ quantile; for \texttt{sae\_naive} it refers to the interval centred at the uncorrected predictor $\hat\mu_d(f_d)$ with half-width $t_{n_d-1,\,0.975}f_d\hat\sigma_d n_d^{-1/2}$, which is not covered by Corollary~\ref{cor:variance}.}
\label{tab:top3_S5}
\begin{tabular}{llllrrrr}
\toprule
cell & scheme & learner & estimator & abs.\ bias & RMSE & gap (\%) & coverage \\
\midrule
S5\_D25\_n10 & cf\_p & bart\_flexible & sae\_naive & 2.70 & 62.16 & 0.0 & 0.020 \\
S5\_D25\_n10 & cf\_s & bart\_flexible & sae\_naive & 2.69 & 62.23 & 0.1 & 0.023 \\
S5\_D25\_n10 & cf\_s & bart & sae\_naive & 3.37 & 63.86 & 2.7 & 0.022 \\ \cmidrule(lr){2-8}
S5\_D25\_n10 & sal & spline & sal & 2.75 & 66.75 & 7.4 & 0.915 \\
S5\_D25\_n10 & --- & --- & direct & 2.97 & 93.71 & 50.8 & 0.949 \\ \midrule
S5\_D25\_n20 & cf\_p & bart\_flexible & sae\_naive & 2.56 & 43.08 & 0.0 & 0.039 \\
S5\_D25\_n20 & cf\_s & bart\_flexible & sae\_naive & 2.61 & 43.09 & 0.0 & 0.042 \\
S5\_D25\_n20 & sal & spline & sae\_naive & 1.78 & 45.00 & 4.4 & 0.034 \\ \cmidrule(lr){2-8}
S5\_D25\_n20 & sal & spline & sal & 1.78 & 45.00 & 4.4 & 0.932 \\
S5\_D25\_n20 & --- & --- & direct & 2.75 & 67.79 & 57.4 & 0.947 \\ \midrule
S5\_D50\_n10 & cf\_s & bart\_flexible & sae\_naive & 2.98 & 59.11 & 0.0 & 0.023 \\
S5\_D50\_n10 & cf\_p & bart\_flexible & sae\_naive & 2.97 & 59.11 & 0.0 & 0.023 \\
S5\_D50\_n10 & cf\_s & gradient\_boosting & sae\_naive & 3.44 & 61.75 & 4.5 & 0.024 \\ \cmidrule(lr){2-8}
S5\_D50\_n10 & sal & spline & sal & 2.54 & 64.65 & 9.4 & 0.935 \\
S5\_D50\_n10 & --- & --- & direct & 3.07 & 93.42 & 58.1 & 0.947 \\ \midrule
S5\_D50\_n20 & cf\_s & bart\_flexible & sae\_naive & 2.10 & 41.74 & 0.0 & 0.037 \\
S5\_D50\_n20 & cf\_p & bart\_flexible & sae\_naive & 2.06 & 41.76 & 0.0 & 0.036 \\
S5\_D50\_n20 & cf\_p & gradient\_boosting & sae\_naive & 2.35 & 43.46 & 4.1 & 0.040 \\ \cmidrule(lr){2-8}
S5\_D50\_n20 & sal & spline & sal & 1.41 & 44.42 & 6.4 & 0.943 \\
S5\_D50\_n20 & --- & --- & direct & 2.24 & 65.56 & 57.1 & 0.950 \\
\bottomrule
\end{tabular}
\end{table}

\emph{Point estimation.} All members of SAL family have significantly smaller RMSE in contrast to the direct estimation. Under scenario S2 in Table~\ref{tab:top3_S2} the RMSE is between $43.20\%$ and $48.84\%$ smaller whereas under scenario S5 in Table~\ref{tab:top3_S5} the decrease ranges between $33.67\%$ and $36.63\%$.  In addition, in every cell of Scenarios~S2 and~S5,
the smallest RMSE is attained by the uncorrected model-based predictor $\hat\mu_d(f_d)$ in Example~\ref{ex:SAE}
combined with the cross-fitted flexible BART learner; pooled and stratified folds are practically indistinguishable, with gaps of at most $0.2\%$. Relative to the direct estimator, the RMSE is reduced
by $43$--$49\%$ in Scenario~S2 and by $34$--$37\%$ in Scenario~S5. The best
fully corrected estimator $\hat\mu_d(1)$ is obtained in every cell with the
low-complexity spline learner fitted without sample splitting. It reduces the
RMSE of the direct estimator by $39$--$43\%$ in S2 and $29$--$34\%$ in S5, and
lies only $4$--$12\%$ above the best row of its cell. Borrowing strength thus pays off even after residual correction. The size of the gain is governed by
the residual variance $\sigma_d^2$ in the limiting distribution of
Section~\ref{sec:variance}, that is, by how much of the within-area variability the covariates explain: the gains are smaller in Scenario~S5, despite its having the same unit-level error variance as S2; the scenarios differ in their regression structure and calibrated area heterogeneity.

\emph{Why the uncorrected predictor ranks first?} The ranking reflects the bias--variance trade-off within the family
$\hat\mu^{(\cdot)}_d(a)$ explained in Section~\ref{sec:sal_cf}. For a given learner and fitting scheme,
the two endpoints differ by $(1-f_d)\bar r_{n_d}$: reducing the residual correction reduces its
variance contribution, while leaving the limiting area-level
bias $(f_d-1)\beta_d$.
The uncorrected predictor is therefore preferable within this approximation
when $\beta_d^2$ is small relative to $\sigma_d^2/n_d$, as may occur when
a flexible learner, given the area information, captures most of
the area-level departures.
This is the shrinkage effect discussed in Section~\ref{sec:sal_cf}:
the optimal weight
$a_d^\star
=\beta_d^2/(\beta_d^2+\sigma_d^2/n_d)$
is then close to zero. The gap of $4$--$12\%$ between the best uncorrected
and fully corrected configurations also reflects their different
learners and therefore does not isolate the cost of residual
correction.

\emph{Effect of $n_d$ and $D$}.
Doubling $n_d$ from $10$ to $20$ reduces the RMSE of the best uncorrected predictor, the best fully
corrected estimator and the direct estimator in Scenarios~S2
and~S5 by factors between approximately $0.67$ and $0.73$, close to $1/\sqrt{2}$. For the fully corrected estimators, this pattern is consistent with the $\sqrt{n_d}$ limits of Propositions~\ref{prop:normality_family} and \ref{prop:normality_cf_family}. The relative gain over the direct estimator is broadly stable across $n_d$.
Increasing $D$ from $25$ to $50$ at fixed $n_d$
leaves the direct estimator approximately
unchanged, as it uses only the sample of area~$d$,
and improves the learner-based estimators only modestly:
by $1$--$4\%$ for the fully corrected estimator and
by $3$--$8\%$ for the best uncorrected predictor. This pattern is compatible with the theoretical
decomposition: learner-estimation error is asymptotically
negligible at the root-$n_d$ scale for the fully corrected
estimator, whereas it may contribute at leading order for
the uncorrected predictor. The theory does not, however,
guarantee monotone RMSE gains as the number of areas increases.

\emph{Confidence intervals}
The intervals of Corollary~\ref{cor:variance} for the fully corrected
estimator attain coverage between $0.915$ and $0.945$, approaching
the nominal level as $n_d$ and $D$ increase, while the direct
estimator covers between $0.941$ and $0.950$. The undercoverage at $n_d=10$ may partly reflect
downward distortion of the residual variance from in-sample fitting; out-of-fold residuals remove this deflation. 
The naive interval for the uncorrected predictor, centred at $\hat\mu_d(f_d)$
with half-width $t_{n_d-1,\,0.975}f_d\hat\sigma_d n_d^{-1/2}$, covers in only $2$--$4\%$ of the
replications. Its half-width is $f_d\approx1$--$2\%$ of that of the fully
corrected interval. It ignores both the approximation bias $(1-f_d)\beta_d$ and the learner-estimation variability,
which may contribute at leading order when $a_n=f_d\to0$; see part~(iv) of
Propositions~\ref{prop:normality_family} and \ref{prop:normality_cf_family}. Bias-correcting the predictor recovers
$\hat\mu_d(1)$ exactly, so the asymptotically valid interval for it is the one of
Corollary~\ref{cor:variance}. The failure lies in the variance, not in the
point estimate. For the spline learner without sample splitting in cell
$D=25$, $n_d=20$ of Scenario~S5, the two estimators have identical reported bias and RMSE, yet the naive
interval covers in $3.4\%$ of the replications and the corrected one in
$93.2\%$.

\emph{Low-complexity learners versus cross-fitting.}
The results do not indicate a universal advantage of either route. Cross-fitting
makes flexible learners available, and flexible BART and gradient boosting
exploit this freedom to produce the most accurate uncorrected predictors.
For the fully corrected estimator, which supports inference,
the low-complexity spline learner without sample splitting is the best choice in every cell
of Scenarios~S2 and~S5, combining near-nominal coverage with an RMSE within
$4$--$12\%$ of the best. The choice between pooled and stratified folds is negligible. This
supports the two complementary routes developed in
Section~\ref{sec:sample_no_split} 
and Section~\ref{sec:sal_cf}:  low-complexity learners deliver estimation and inference directly, and
cross-fitting extends the framework to flexible learners.

% TODO (pre-final review): shrinkage in model-based intervals.  The EBLUP
% interval is shorter than the SAL interval by about sqrt(gamma_d), with
% coverage only on average over areas under the model.  Not yet assessed in
% the simulations; record area-level coverage of the EBLUP interval.

\begin{table}[htbp]
\centering\small
\caption{Best learner within each combination of estimator and fitting scheme, by the mean absolute distance \eqref{eq:learner_selection} to the regional Horvitz--Thompson estimator over the 6 regions, with the maximum distance. Interval: whether the procedure supplies the interval of Corollary~\ref{cor:variance}. Rel.: mean distance as a percentage of the mean regional estimate. vs direct: percentage reduction in mean distance relative to the direct estimator. \texttt{sae\_cor} is the bias-corrected version of the model-based predictor \texttt{sae}, which coincides with the fully corrected member of the same learner and fitting scheme. Below the rule: \texttt{lmm}, \texttt{sae} is the EBLUP under the nested-error model; $^{*}$its interval uses the parametric bootstrap MSE under that model.}
\label{tab:app_best_learner}
\begin{tabular}{lllrrcrr}
\toprule
learner & estimator & scheme & mean & max & interval & rel.\ (\%) & vs d. (\%) \\
\midrule
random\_forest & sae & cf\_s & 363.805 & 578.517 & no & 1.95 & 52.9 \\
random\_forest & sae\_cor & cf\_s & 422.121 & 694.576 & yes & 2.26 & 45.3 \\
random\_forest & sae & cf\_p & 375.677 & 599.741 & no & 2.01 & 51.3 \\
random\_forest & sae\_cor & cf\_p & 400.450 & 694.210 & yes & 2.15 & 48.1 \\
gradient\_boosting & cf\_a\_bc & cf\_p & 377.511 & 921.081 & yes & 2.02 & 51.1 \\
gradient\_boosting & cf & cf\_p & 377.511 & 921.081 & yes & 2.02 & 51.1 \\
merf & sal\_a\_bc & sal & 412.718 & 892.850 & yes & 2.21 & 46.5 \\
merf & sal & sal & 412.718 & 892.850 & yes & 2.21 & 46.5 \\
random\_forest & cf\_a\_bc & cf\_s & 422.121 & 694.577 & yes & 2.26 & 45.3 \\
random\_forest & cf & cf\_s & 422.121 & 694.577 & yes & 2.26 & 45.3 \\
merf & sae & sal & 431.790 & 750.658 & no & 2.32 & 44.1 \\
merf & sae\_cor & sal & 412.718 & 892.850 & yes & 2.21 & 46.5 \\
bart & cf\_a & cf\_s & 434.600 & 1273.690 & yes & 2.33 & 43.7 \\
mquantile & sal\_a & sal & 481.992 & 802.532 & yes & 2.58 & 37.5 \\
mquantile & cf\_a & cf\_p & 493.363 & 820.341 & yes & 2.65 & 36.1 \\
\midrule
lmm & sae & sal & 562.288 & 1031.664 & yes$^{*}$ & 3.01 & 27.1 \\
--- & direct & --- & 771.780 & 1267.491 & yes & 4.14 & 0.0 \\
\bottomrule
\end{tabular}
\end{table}

\begin{table}[htbp]
\centering\small
\caption{The 3 best combinations of estimator and fitting scheme, each with its best learner, by the mean absolute distance \eqref{eq:learner_selection} to the regional Horvitz--Thompson estimator over the 6 regions, with the maximum distance. Interval: whether the procedure supplies the interval of Corollary~\ref{cor:variance}. Rel.: mean distance as a percentage of the mean regional estimate. vs direct: percentage reduction in mean distance relative to the direct estimator. \texttt{sae\_cor} is the bias-corrected version of the model-based predictor \texttt{sae}, which coincides with the fully corrected member of the same learner and fitting scheme. Below the rule: \texttt{lmm}, \texttt{sae} is the EBLUP under the nested-error model; $^{*}$its interval uses the parametric bootstrap MSE under that model.}
\label{tab:app_best_learner_short}
\begin{tabular}{lllrrcrr}
\toprule
learner & estimator & scheme & mean & max & interval & rel.\ (\%) & vs d. (\%) \\
\midrule
random\_forest & sae & cf\_s & 363.805 & 578.516 & no & 1.95 & 52.9 \\
random\_forest & sae\_cor & cf\_s & 422.121 & 694.576 & yes & 2.26 & 45.3 \\
random\_forest & sae & cf\_p & 375.677 & 599.741 & no & 2.01 & 51.3 \\
gradient\_boosting & cf\_a\_bc & cf\_p & 377.511 & 921.081 & yes & 2.02 & 51.1 \\
\midrule
lmm & sae & sal & 562.288 & 1031.664 & yes$^{*}$ & 3.01 & 27.1 \\
--- & direct & --- & 771.780 & 1267.491 & yes & 4.14 & 0.0 \\
\bottomrule
\end{tabular}
\end{table}

\begin{table}[htbp]
\centering\small
\caption{The 5 combinations of learner, estimator and fitting scheme closest to the regional Horvitz--Thompson estimator, ranked by the mean absolute distance \eqref{eq:learner_selection} over the 6 regions, with the mean of the 41 provincial estimates and the mean length of the $95\%$ intervals. The uncorrected model-based predictor \texttt{sae} has no interval of Corollary~\ref{cor:variance}; \texttt{sae\_cor} is its bias-corrected version, which coincides with the fully corrected member of the same learner and fitting scheme. Below the rule, for reference: \texttt{lmm}, \texttt{sae}, the EBLUP under the nested-error model ($^{*}$interval from the parametric bootstrap MSE under that model), and the direct estimator.}
\label{tab:app_ranking}
\begin{tabular}{rlllrr}
\toprule
rank & learner & estimator & scheme & mean estimate & mean CI length \\
\midrule
1 & random\_forest & sae & cf\_s & 18540.459 & --- \\
--- & random\_forest & sae\_cor & cf\_s & 18265.395 & 4895.149 \\
2 & gradient\_boosting & sae & cf\_s & 18247.116 & --- \\
3 & random\_forest & sae & cf\_p & 18538.196 & --- \\
4 & gradient\_boosting & cf & cf\_p & 18060.852 & 5158.087 \\
5 & gradient\_boosting & cf\_a\_bc & cf\_p & 18060.852 & 5158.087 \\
\midrule
--- & lmm & sae & sal & 18054.788 & 3000.884$^{*}$ \\
--- & --- & direct & --- & 18518.758 & 5694.748 \\
\bottomrule
\end{tabular}
\end{table}

% ============================================================
% Data analysis -- revised
% Section references in words are marked  % REF: ...
% Points to verify against your output are marked  % CHECK: ...
% Tables: regenerate with drop_estimators = c("sal_a_bc", "cf_a_bc")
% (see the message); the text below assumes that.
% ============================================================

\section{Data analysis -- SAL in action}
\label{sec:data_analysis}

\subsection{Description of the data}
The case study combines administrative data from the 2011 Italian Census with
survey data from the 2012 EU-SILC, taking 2011 as the reference year. The same
data were analysed in a causal small area setting by
\cite{reluga2022causalSAE}, where both sources are described in detail; here we
use them to illustrate SAL in practice and give only a summary, referring the
reader to that paper for the full description and definitions. The areas are the $D=41$ Italian provinces belonging to six regions: Lombardy
(11 provinces), Tuscany (10), Sicily (9), Campania (5), Marche (4) and
Umbria (2). These span northern, central, southern and insular Italy, and so
cover the country's pronounced north--south divide. The regions serve as the
coarse subpopulations used in the selection of learners as explained in Appendix~\ref{sec:choice_learner}. The
population comprises $N=10{,}323{,}770$ households with a head of working age
(15--64) resident in these provinces, of which $n=6{,}528$ are sampled, a
sampling fraction of $f\approx0.0006$. Provincial sample sizes $n_d$ range from
$12$ to $676$, with mean $159$ and median $96$; the sampling fraction is
negligible in every province, which is the regime small area estimation
addresses. The survey weights are calibrated to the census population counts
of the regions.

The outcome $Y_{dj}$ is the equivalised household disposable income, on its
original scale. \cite{reluga2022causalSAE} model its logarithm, as is common
for income data under the nested-error model. Here the parameter of interest
is the provincial mean income $\bar y_d$ itself, and SAL estimates it directly:
the residual correction requires no distributional model, so the skewness of
income need not be removed by a transformation, and no back-transformation,
with its bias correction, is needed. The covariates are the sex, nationality,
age, marital status and education level of the household head, the household
size, a regional indicator and the average fiscal income of the province, the
last two being contextual covariates in the sense of
Section~\ref{sec:super_pop}.

\subsection{Estimation and choice of learner}
\label{sec:app_estimation}

As in Section~\ref{sec:simulations}, we estimate the provincial means
$\bar y_d$ using the estimators of Definitions~\ref{def:SAL_family}
and~\ref{def:SAL_agg}, without sample splitting, and the cross-fitted
estimators of Definitions~\ref{def:SAL_cf_family}
and~\ref{def:SAL_agg_cf}, with $L=5$ folds under both stratified and
pooled fold construction. The direct estimator \eqref{eq:Hajek_HT}
is included for comparison.
The nuisance function $g$ in \eqref{eq:expected_decompose} is learned
using the full covariate vector $\mathbf x=(\mathbf z^\top,d)^\top$,
pooling observations across provinces. For cross-fitting, each learner
is trained on the corresponding training folds.
We use the same learners as in Section~\ref{sec:sim_results}, excluding
fully flexible BART because of its prohibitive computational cost.
Learner selection follows the aggregation approach described in
Section~\ref{sec:learner-selection} and detailed in
Appendix~\ref{sec:choice_learner}. The six regions serve as the coarser
subpopulations, with regional H\'ajek estimates from
\eqref{eq:Hajek_HT} as validation benchmarks. Selection is based on
absolute deviations between these benchmarks and the aggregated
provincial estimates.

\subsection{Results}
\label{sec:app_results}
Table~\ref{tab:app_best_learner} reports, for each combination of estimator
and fitting scheme, the best learner with its mean and maximum distance to the
regional target; Table~\ref{tab:app_best_learner_short} retains the three best
combinations.

\emph{Choice of learner}.  Every procedure in Table~\ref{tab:app_best_learner}-~\ref{tab:app_best_learner_short} improves on the direct estimator, reducing the mean distance by $36\%$--$53\%$.
The closest procedure is the uncorrected predictor
$\hat\mu_d^{cf}(f_d)$ in Example~\ref{ex:SAE}, using a cross-fitted
random forest, with a mean distance of $364$, approximately $2\%$
of the mean regional income. The best fully corrected estimator,
$\hat\mu_d^{cf}(1)$ in Example~\ref{ex:SSE}, uses cross-fitted gradient
boosting with pooled folds and follows at $378$, only $4\%$ higher.
This small gap between the best procedures at the two endpoints of
the family is consistent with the simulation results in
Section~\ref{sec:sim_results}. The leading distances are comparable
to the reported sampling errors of the regional benchmarks, ranging
from approximately $361$ in Tuscany to $792$ in Campania. Together
with the small gaps between the leading procedures, this supports
interpreting the selection as identifying a group of competitive
candidates.

The comparison between $\hat\mu_d^{cf}(f_d)$ in Example~\ref{ex:SSE} and $\hat\mu_d^{cf}(1)$ in Example~\ref{ex:SSE} is consistent with the
shrinkage interpretation of $\hat\mu_d(a)$ discussed below
Proposition~\ref{prop:normality_cf_family}.
For the estimators using cross-fitted random forest (rows 1-2 in Table~\ref{tab:app_best_learner}), using $\hat\mu_d^{cf}(f_d)$ in Example~\ref{ex:SSE} without the residual correction (line 1, ``sae'', ``cf'' in Table~\ref{tab:app_best_learner}) brings the estimates closer to the target ($364$ against $422$ for $\hat\mu_d^{cf}(1)$ in Example~\ref{ex:SSE}, line 2 ``sae\_cor'', ``cf'' in Table~\ref{tab:app_best_learner}), indicating that the learner captures most of the provincial departures, so that $\beta_d^2$ measuring the area-level approximation bias is small relative to variance $\sigma_d^2/n_d$. For mixed-effects random forests fitted without sample splitting
in rows 11--12 of Table~\ref{tab:app_best_learner}, the ordering
reverses: residual correction reduces the mean distance from
approximately $432$ for $\hat\mu_d(f_d)$ in
Example~\ref{ex:SSE} (row 11, ``sae'', ``sal'') to $413$ for
$\hat\mu_d(1)$ in Example~\ref{ex:SSE}
(row 12, ``sae\_cor'', ``sal''), suggesting that the residual correction removes a bias that outweighs its
variance.

These results support selecting the learner and estimator jointly and
data-adaptively: no single combination performs best across the settings
considered. The first row of Table~\ref{tab:app_best_learner} identifies
the uncorrected estimator using a cross-fitted random forest as the
procedure with the smallest mean distance in this application.
The best procedure without sample splitting, which uses a corrected
mixed-effects random forest, has a mean distance of approximately $413$,
about $13\%$ higher than the best cross-fitted procedure ($364$) and
$9\%$ higher than the best fully corrected cross-fitted estimator ($378$).
This ranking differs from the results in Section~\ref{sec:simulations},
where estimators using a low-complexity spline learner without sample
splitting led the rankings
(Tables~\ref{tab:top3_S2}--\ref{tab:top3_S5}). Finally, the EBLUP under the nested-error model
(the penultimate row of Table~\ref{tab:app_best_learner},
``lmm'', ``sae'', ``sal'') reduces the direct estimator's mean absolute
distance by $27\%$ (last column). Its mean absolute distance nevertheless
remains approximately $49\%$ higher than that of the best fully corrected
estimator ($562$ versus $378$), nowhere near the best performing methods.

\emph{Provincial estimates and intervals}.
Figure~\ref{fig:best_hajek} shows the best performing procedures in terms of the closest mean distance to the regional-level Horvitz-Thompson estimator, together with the direct
estimates. The learner-based estimates depart most from the direct ones in
the provinces with the smallest samples, on the left of the figure, where the
direct estimates are least stable. 

Table~\ref{tab:app_ranking} reports the mean length of the $95\%$ intervals
over the 41 provinces. The intervals of Corollary~\ref{cor:variance} are only
$9$--$14\%$ shorter on average than the design-based intervals of the direct
estimator ($4{,}895$ and $5{,}158$ against $5{,}695$), and in
Figure~\ref{fig:best_hajek} the two are hard to tell apart. This is what the
theory predicts. The length of the SAL interval is governed by the residual
variance $\sigma_d^2$ of Section~\ref{sec:sal_cf}, % REF: CLT / first-order variance
and is shorter than the direct interval by a factor of about
$\{1-R^2_{d}\}^{1/2}$, where $R^2_{d}$ is the share of the within-province
variance of income explained by the covariates. The observed ratios
correspond to $R^2_{d}$ of about $0.2$--$0.25$, which is typical of household
income; in the simulation study the covariates explained considerably more,
and the gains were correspondingly larger. SAL borrows strength across
provinces for the level of $\bar y_d$, through the fitted $\hat g$, and this
is visible in the gains in Table~\ref{tab:app_best_learner}; the width of its
interval, by contrast, reflects only how much the covariates explain within
each province.

The EBLUP intervals, based on the parametric bootstrap MSE, are much
shorter: $3{,}001$ on average, $47\%$ shorter than the direct intervals and
$38\%$ shorter than the SAL intervals obtained with the same linear mixed
model ($4{,}824$). This is the shrinkage effect. Following the discussion in see Section~\ref{sec:nerm} and using the same notation therein, the EBLUP shrinks the provincial residual
mean by the factor
$\Gamma_d^{ner}
=
\sigma_\gamma^2/
(\sigma_\gamma^2+\sigma_\varepsilon^2/n_d)$, the model-based counterpart of the weight $a_d^\ast$ of
Section~\ref{sec:hierarchical_w}, and its MSE treats the remaining bias
$(1-\Gamma_d^{ner})\gamma_d$ as averaging to zero over the random effects $\gamma_d$. Its length
relative to the SAL interval is therefore about $\sqrt{\Gamma_d^{ner}}$; the observed
ratio of $0.62$ corresponds to $\Gamma_d^{ner}$ of about $0.4$. Nevertheless, lower MSE and shorter confidence intervals come at the price. In particular, the shorter EBLUP interval has \emph{nominal coverage only on average over provinces}, and \emph{only if the normal nested-error model holds}. It undercovers provinces with large random effects, and the model is doubtful for income on its original scale, which is skewed and heteroscedastic. The SAL interval is valid province by province under the conditions of Corollary~\ref{cor:variance}, without a model for $Y$ given $\mathbf z$. The difference in length is the price of that guarantee.

\begin{figure}[ht]
  \centering
  \includegraphics[width=0.8\textwidth]{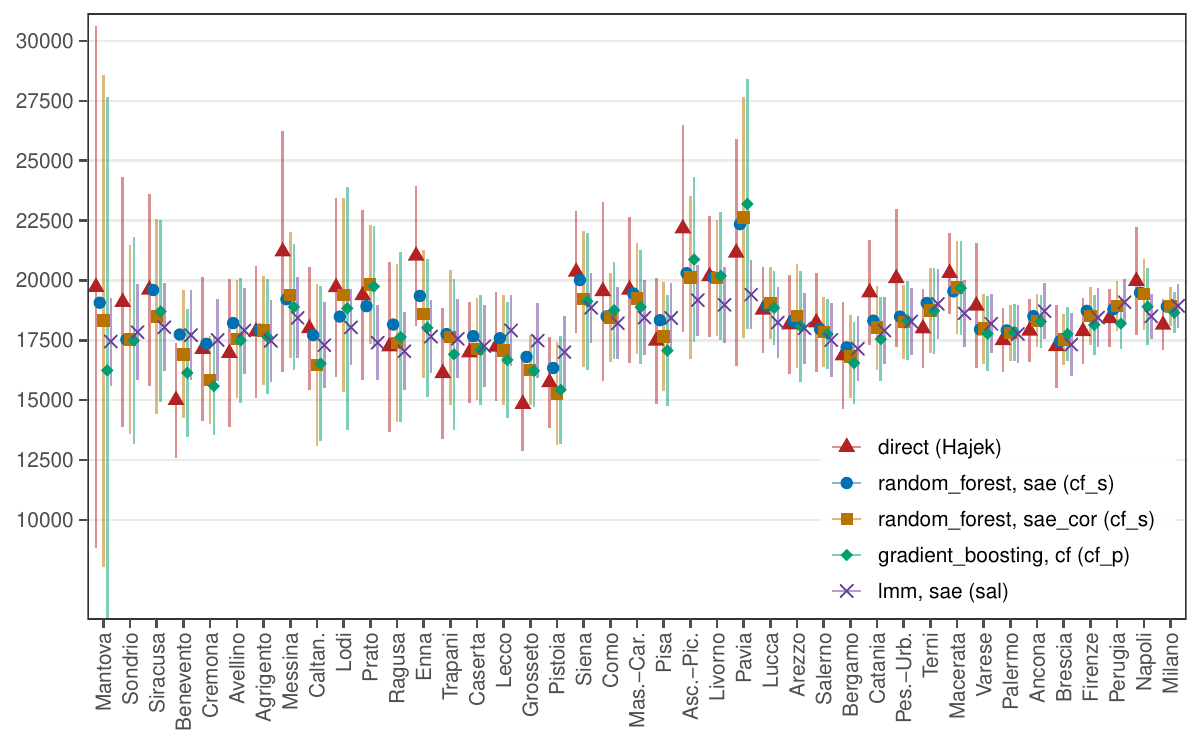}
  \caption{Estimated mean equivalised household disposable income by
  province, ordered by sample size. Triangles: direct estimates
  \eqref{eq:Hajek_HT} with design-based $95\%$ intervals. Circles: the
  procedure closest to the regional target, the uncorrected predictor
  $\hat\mu_d(f_d)$ with a cross-fitted random forest (no interval). Squares:
  its bias-corrected version, the fully corrected cross-fitted random forest,
  with the intervals of Corollary~\ref{cor:variance}. Diamonds: the fully
  corrected cross-fitted gradient boosting learner with pooled folds, with
  the same intervals. Crosses: the EBLUP under the nested-error model, with
  intervals from its parametric bootstrap MSE.}
  % CHECK: series and symbols against plot_set(sel) and the legend
  \label{fig:best_hajek}
\end{figure}

\section{Conclusions and outlook}

This paper introduced \emph{small area learning} (SAL), a learner-agnostic framework that formulates small area estimation
as a multi-target semi-supervised learning problem rather than tying inference to a particular working model. Borrowing strength is represented through shared and area-specific structure, while the
outcome mechanism is treated as a nuisance component that may be estimated by
parametric, semiparametric, or flexible learning methods. We developed
standard and cross-fitted SAL estimators, established area-specific
asymptotic theory and inference, and extended the framework to settings in
which only aggregated population covariates are available. Classical
synthetic, model-based, model-assisted, and mixed-model constructions emerge
as particular cases or close analogues within this formulation.

The results also make explicit that borrowing strength involves a
bias--variance trade-off. Estimators using only partial residual correction
can have smaller MSE than the fully corrected SAL estimator when the
approximation error is small, and this behavior is visible in our simulations.
%The fully corrected estimator instead removes the corresponding limiting bias
%and provides learner-robust first-order inference under substantially weaker
%requirements on the nuisance learner. 
The fully corrected estimator instead removes the residual-weight bias and provides learner-robust first-order inference under the stated conditions, without the additional learner mean-error condition required for partially corrected
estimators. Under aggregation, the aggregation bias remains and must be addressed for inference on the original area mean. Thus SAL should not be interpreted as
an estimator that necessarily minimizes MSE in every setting, but as a general
construction that separates prediction from inference and remains applicable
across a broad class of learners. In the simulations, the best fully corrected configurations have moderately higher RMSE than the best uncorrected predictors, while retaining substantial gains over direct estimation and coverage approaching the nominal level in the reported S2 and S5 comparisons.

The framework opens several directions for further research. One is the
development of learning algorithms designed specifically for the multi-area
structure of SAL, rather than importing generic prediction methods. In particular, deriving the learner convergence
conditions from explicit assumptions on the shared component and
the constrained deviations remains a direction for future research:
the present inferential theory assumes these conditions directly. A second is to extend the theory beyond means to nonlinear area-level functionals and to develop corresponding learner-selection criteria and inferential
procedures. Further empirical work across different surveys, outcomes, and
forms of auxiliary information is also needed to understand which learners
and correction schemes perform well in different SAE environments. Finally,
alternative asymptotic regimes deserve separate study. These include
many-area empirical-Bayes formulations related to
\cite{ignatiadis2019covariate,ignatiadis2025empirical}, as well as classical
unit-level mixed models with random area effects, where a joint asymptotic
analysis of prediction, shrinkage, variance-component estimation, and
second-order MSE would provide a complementary perspective to the
area-specific first-order theory developed here.

\begin{appendix}
\section{Proofs of theoretical results}\label{appn} 
%%%%%%%%%%%%%%%%%%%%%%%%%%%%%%%%%%%%%%%%%%%%%%%%%%%%%%%%%%%%%%%%%%%%%%%%%%%%%%%
% Proofs for Section 4.1
%%%%%%%%%%%%%%%%%%%%%%%%%%%%%%%%%%%%%%%%%%%%%%%%%%%%%%%%%%%%%%%%%%%%%%%%%%%%%%% Intro
%%%%%%%%%%%%%%%%%%%%%%%%%
\subsection{Proofs for Section~\ref{sec:sample_no_split}}
Proposition~\ref{prop:normality_family} is the main theoretical result of Section~\ref{sec:sample_no_split}. Its
proof rests on the exact decomposition \eqref{eq:family_decomposition},
which splits $\widehat\mu_d(a_n)-\mu_d$ into the deterministic bias
$(a_n-1)\beta_d$, a centered average over the sample of the residuals
$Y-g^*_d(\mathbf Z)$ scaled by $a_n$, and four remainders. To establish asymptotic properties of each term in the decomposition, we need to provide proofs of three additional lemmas and one corollary. Lemma~\ref{lemma:boundness_g_star} establishes the boundedness of $g^*_d$, Corollary~\ref{cor:sigma_finite} the finiteness of $\sigma_d^2
=\operatorname{Var}_{P_d}(\varepsilon_d^*)$ in \eqref{eq:sigma_d}, Lemma~\ref{lemma:maximal_inequality} establishes the
maximal inequality \eqref{eq:maximal_inequality} for the class of differences
between the members of $\mathcal G_{c,n}$ and $g^*_d$ and Lemma~\ref{lemma:empirical_process} the negligibility of the empirical process terms. Throughout the proofs, we write
\begin{equation*}\label{eq:add_terms}
\Delta_{n,d}:=\hat g_{n,d}-g^*_d,
\qquad
\mathbb G_m:=\sqrt m\,(P_m-P_d),
\qquad
\mathcal E_n:=\{\hat g_n\in\mathcal G_{c,n}\},
\end{equation*}
where $\Delta_{n,d}$ is the deviation of the learner from its population limit on area $d$, $\mathbb G_m$ is the empirical process based on $m$ observations from $P_d$, evaluated below at the sample $m=n_d$ and the population $m=N_d$ of area $d$, and $\mathcal E_n$ is the event that the pooled learner belongs to the low-complexity class. By \eqref{eq:area_law}, $P_d$
acts as $P^{\mathbf Z}_d$ on functions of $\mathbf Z$ alone. By
Definition~\ref{def:low_complexity} the class $\mathcal G_{c,n}$ is uniformly
bounded by $B$ and satisfies the uniform entropy bound with constant $A$ and
exponent $v_n$, and by Assumption~\ref{assump:learner} both
$\Pr(\mathcal E_n)\to1$ and $r_{n,d}=o_p(1)$.

%%%%%%%%%%%%%%%%%%%%%%%%%%%%%
%%%%%%%%% Auxiliary lemma 1
%%%%%%%%%%%%%%%%%%%%%%%%%%%%%%%

\begin{lemma}[Boundedness of the population limit]
\label{lemma:boundness_g_star}
Under Definition~\ref{def:low_complexity} and Assumption~\ref{assump:learner},
$|g^*_d|\le B$ holds $P^{\mathbf Z}_d$-almost surely.
\end{lemma}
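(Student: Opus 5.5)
The idea is to pass the uniform bound $B$ from the learner to its $L_2(P_d^{\mathbf Z})$ limit through an almost surely convergent subsequence. By Assumption~\ref{assump:learner}, $\Pr(\mathcal E_n)\to1$ and $r_{n,d}=\|\hat g_{n,d}-g^*_d\|_{P_d^{\mathbf Z},2}=o_p(1)$. By Definition~\ref{def:low_complexity}, every member of $\mathcal G_{c,n}$ satisfies $\sup_{\mathbf x}|g(\mathbf x)|\le B$. Hence on $\mathcal E_n$ the restriction $\hat g_{n,d}=\hat g_n(\cdot,d)$ satisfies $|\hat g_{n,d}|\le B$ everywhere on $\mathcal Z$.

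To avoid handling random functions pointwise, I would work with a deterministic functional. Let $h_B(\mathbf z):=(|g^*_d(\mathbf z)|-B)_+$, which is a deterministic element of $L_2(P_d^{\mathbf Z})$ because $g^*_d\in L_2(P_d^{\mathbf Z})$. On $\mathcal E_n$, for every $\mathbf z$,
\[
h_B(\mathbf z)\le |g^*_d(\mathbf z)-\hat g_{n,d}(\mathbf z)|,
\]
since $|\hat g_{n,d}(\mathbf z)|\le B$. This follows from the fact that the distance from the real number $g^*_d(\mathbf z)$ to the interval $[-B,B]$ is at most its distance to any point of that interval. Taking $L_2(P_d^{\mathbf Z})$ norms on $\mathcal E_n$ gives the deterministic quantity bound $\|h_B\|_{P_d^{\mathbf Z},2}\le r_{n,d}$.

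It remains to pass to the limit. For any $\epsilon>0$,
\[
\mathbf 1\{\|h_B\|_{P_d^{\mathbf Z},2}>\epsilon\}\le \Pr(\mathcal E_n^c)+\Pr(r_{n,d}>\epsilon).
\]
The left side is nonrandom and the right side tends to zero, so $\|h_B\|_{P_d^{\mathbf Z},2}\le\epsilon$ for every $\epsilon>0$. Therefore $h_B=0$ $P_d^{\mathbf Z}$-almost surely, which is exactly $|g^*_d|\le B$ $P_d^{\mathbf Z}$-a.s.

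The only delicate point is measurability. The norm $r_{n,d}$ of the random function must be a random variable, or the probabilities must be read as outer probabilities. The pointwise measurability of $\mathcal G_{c,n}$ in Definition~\ref{def:low_complexity} covers this, and outer probability suffices for the argument above. Because the limit function is deterministic, no subsequence argument is needed.
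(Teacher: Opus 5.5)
Your proof is correct and follows essentially the paper's route: on $\mathcal E_n$ you dominate a deterministic quantity pointwise by $|\hat g_{n,d}-g^*_d|$, integrate to obtain a deterministic lower bound on $r_{n,d}$, and contradict $r_{n,d}=o_p(1)$. The only difference is that you use the single function $(|g^*_d|-B)_+$, which removes the paper's level sets $A_\epsilon=\{|g^*_d|>B+\epsilon\}$ and its countable union over $\epsilon=1/k$; your opening remark about an almost surely convergent subsequence is never used and can be deleted.
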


\begin{proof}
Fix $\epsilon>0$ and let
$A_\epsilon:=\{\mathbf z:|g^*_d(\mathbf z)|>B+\epsilon\}$, a deterministic set
since $g^*_d$ is deterministic. On $\mathcal E_n$ we have
$|\hat g_{n,d}|\le B$, hence $|\Delta_{n,d}|>\epsilon$ on $A_\epsilon$ and
therefore
\[
r^2_{n,d}
\;\ge\;
\int_{A_\epsilon}\Delta^2_{n,d}\,\mathrm dP^{\mathbf Z}_d
\;\ge\;
\epsilon^2\,P^{\mathbf Z}_d(A_\epsilon).
\]
If $P^{\mathbf Z}_d(A_\epsilon)>0$, then
$\Pr\{r_{n,d}\ge\epsilon\,P^{\mathbf Z}_d(A_\epsilon)^{1/2}\}
\ge\Pr(\mathcal E_n)\to1$, contradicting $r_{n,d}=o_p(1)$; hence
$P^{\mathbf Z}_d(A_\epsilon)=0$ for every $\epsilon>0$, and
$\{|g^*_d|>B\}=\bigcup_{k\in\mathbb N}A_{1/k}$ is a countable union of null
sets, so that $P^{\mathbf Z}_d(|g^*_d|>B)=0$, which proves $|g^*_d|\le B$.
\end{proof}

Lemma~\ref{lemma:boundness_g_star} helps to establish the finiteness of
$\sigma^2_d$, necessary in the proof of Proposition~\ref{prop:normality_family}. The strict positivity of $\sigma^2_d$ is a hypothesis of that proposition.

%%%%%%%%%%%%%%%%%%%%%%%%%%%%%
%%%%%%%%% Auxiliary corollary
%%%%%%%%%%%%%%%%%%%%%%%%%%%%%%%

\begin{corollary}
\label{cor:sigma_finite}
Let $\varepsilon_d^*:=Y-g^*_d(\mathbf Z)$ and let $\sigma_d^2
=\operatorname{Var}_{P_d}(\varepsilon_d^*)$ be as in \eqref{eq:sigma_d}. Under the conditions of
Lemma~\ref{lemma:boundness_g_star}, if $E_{P_d}(Y^2)<\infty$, then
$\sigma^2_d\le E_{P_d}\{(\varepsilon^*_d)^2\}<\infty$.
\end{corollary}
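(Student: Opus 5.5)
The argument is short: it combines the elementary bound of a variance by the second moment with the boundedness of $g^*_d$ from Lemma~\ref{lemma:boundness_g_star} and the assumed square-integrability of $Y$.

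First, I would record the identity
\[
\sigma_d^2=E_{P_d}\{(\varepsilon_d^*)^2\}-\beta_d^2 ,
\]
where $\beta_d=E_{P_d}(\varepsilon_d^*)$ as in \eqref{eq:sigma_d}. This identity requires $\varepsilon_d^*$ to be integrable, so that $\beta_d$ is finite. Given the identity, the inequality $\sigma_d^2\le E_{P_d}\{(\varepsilon_d^*)^2\}$ is immediate because $\beta_d^2\ge0$. If $E_{P_d}\{(\varepsilon_d^*)^2\}<\infty$, integrability of $\varepsilon_d^*$ follows from the Cauchy--Schwarz inequality, so it suffices to prove finiteness of the second moment.

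Second, I would bound the second moment. By Lemma~\ref{lemma:boundness_g_star}, $|g^*_d(\mathbf Z)|\le B$ holds $P_d^{\mathbf Z}$-almost surely. Because $P_d$ acts as $P_d^{\mathbf Z}$ on functions of $\mathbf Z$ alone, the same bound holds $P_d$-almost surely. The pointwise inequality $(x-y)^2\le 2x^2+2y^2$ then gives
\[
E_{P_d}\{(\varepsilon_d^*)^2\}\le 2E_{P_d}(Y^2)+2B^2 ,
\]
and the right-hand side is finite by the hypothesis $E_{P_d}(Y^2)<\infty$.

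No step is a real obstacle. The only point needing care is the transfer of the almost-sure bound from the marginal $P_d^{\mathbf Z}$ to the joint law $P_d$, which holds because $\{|g^*_d(\mathbf Z)|>B\}$ is an event depending on $\mathbf Z$ only. As an alternative route, one could use $g^*_d\in L_2(P_d^{\mathbf Z})$ from Assumption~\ref{assump:learner} directly, bypassing the lemma; I would nonetheless invoke Lemma~\ref{lemma:boundness_g_star}, as the statement indicates.
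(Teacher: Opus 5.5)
Your proposal is correct and follows the paper's proof: the paper also uses Lemma~\ref{lemma:boundness_g_star} to get the pointwise bound $(\varepsilon_d^*)^2\le 2Y^2+2B^2$, deduces integrability from $E_{P_d}(Y^2)<\infty$, and bounds the variance by the second moment. Your remarks on the integrability of $\varepsilon_d^*$ and on transferring the almost-sure bound from $P_d^{\mathbf Z}$ to $P_d$ spell out two points the paper leaves implicit.
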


\begin{proof}
By Lemma~\ref{lemma:boundness_g_star}, $(\varepsilon^*_d)^2\le2Y^2+2B^2$
almost surely, and the right-hand side is $P_d$ integrable because
$E_{P_d}(Y^2)<\infty$. The variance $\sigma^2_d$ of $\varepsilon^*_d$ does not
exceed $E_{P_d}\{(\varepsilon^*_d)^2\}$ and is therefore finite.
\end{proof}

%%%%%%%%%%%%%%%%%%%%%%%%%%%%%
%%%%%%%%% Auxiliary lemma 2
%%%%%%%%%%%%%%%%%%%%%%%%%%%%%%%
\begin{lemma}[Maximal inequality for the class of differences]
\label{lemma:maximal_inequality}
Let Definition~\ref{def:low_complexity} and Assumption~\ref{assump:learner}
hold and, for $0<\eta\le2B$, let
\begin{equation}
\label{eq:difference_class}
\mathcal H_n(\eta)
:=
\bigl\{
g-g^*_d:\ g\in\mathcal G_{c,n},\
\|g-g^*_d\|_{P^{\mathbf Z}_d,2}\le\eta
\bigr\}.
\end{equation}
Then $\mathcal H_n(\eta)$ admits the constant envelope $2B$ and satisfies the
uniform entropy bound of Definition~\ref{def:low_complexity} with the same
constant $A$ and exponent $v_n$ relative to that envelope, and for every
$m\in\mathbb N$,
\begin{equation}
\label{eq:maximal_inequality}
E\bigl\|\mathbb G_m\bigr\|_{\mathcal H_n(\eta)}
\lesssim
\eta\sqrt{v_n\log(AB/\eta)}
+
\frac{v_nB}{\sqrt m}\,\log(AB/\eta),
\end{equation}
with a constant depending on neither $n$, $m$ nor $\eta$.
\end{lemma}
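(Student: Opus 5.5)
The plan has three steps. First, establish the envelope and entropy properties of $\mathcal H_n(\eta)$. Second, read off \eqref{eq:maximal_inequality} from a known maximal inequality for VC-type classes. Third, take care of measurability.

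\textbf{Envelope.} By Lemma~\ref{lemma:boundness_g_star}, $|g^*_d|\le B$ holds $P^{\mathbf Z}_d$-a.s. Modifying $g^*_d$ on a null set, which does not affect any $L_2(P_d)$ quantity or the law of $\mathbb G_m$, we may assume the bound holds everywhere. Then every $h=g-g^*_d\in\mathcal H_n(\eta)$ satisfies $|h|\le|g|+|g^*_d|\le 2B$, so $F\equiv 2B$ is a constant envelope.

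\textbf{Entropy.} The class $\mathcal G_{c,n}-g^*_d=\{g-g^*_d: g\in\mathcal G_{c,n}\}$ is a translate of $\mathcal G_{c,n}$ by a fixed function. Translation is an isometry in every $L_2(\tilde P)$, so
\[
N\bigl(\varepsilon B,\mathcal G_{c,n}-g^*_d,L_2(\tilde P)\bigr)=N\bigl(\varepsilon B,\mathcal G_{c,n},L_2(\tilde P)\bigr).
\]
Since $\|F\|_{\tilde P,2}=2B\ge B$, any $\varepsilon B$-net is also an $\varepsilon\|F\|_{\tilde P,2}$-net. Hence the covering numbers relative to the envelope $2B$ are bounded by $v_n\log(A/\varepsilon)$.

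$\mathcal H_n(\eta)$ is a subset of this translate. Covering numbers of a subset at radius $\varepsilon$ are bounded by those of the superset at radius $\varepsilon/2$, because net centres may be re-chosen inside the subset. The cost is replacing $A$ by $2A$, which only changes absolute constants. I will remark that this is absorbed into $\lesssim$; alternatively, one can use external covering numbers, where no loss occurs. This proves the first claim.

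\textbf{Maximal inequality.} Apply Corollary~5.1 of \cite{chernozhukov2014gaussian}. For a class $\mathcal F$ of functions with envelope $F$, $\sup_{f\in\mathcal F}Pf^2\le\sigma^2\le\|F\|^2_{P,2}$, and uniform entropy $\le v\log(A\|F\|_{P,2}/\varepsilon)$, it gives
\[
E\|\mathbb G_m\|_{\mathcal F}\lesssim\sqrt{v\sigma^2\log(A\|F\|_{P,2}/\sigma)}+\frac{v\|M\|_{2}}{\sqrt m}\log(A\|F\|_{P,2}/\sigma),
\]
where $M=\max_{i\le m}F(X_i)$. I will take $P=P_d$, $F=2B$, $\sigma=\eta\le 2B$ (valid by the definition of $\mathcal H_n(\eta)$), $M=2B$ and $v=v_n$. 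This yields \eqref{eq:maximal_inequality} with $\log(2AB/\eta)\lesssim\log(AB/\eta)$, using $A\ge e$ and $\eta\le 2B$ so that the logarithm is bounded below by a positive constant. The implied constant is universal, hence independent of $n$, $m$ and $\eta$.

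\textbf{Measurability.} The corollary requires pointwise measurability. This is inherited from $\mathcal G_{c,n}$: subtract $g^*_d$ from the countable approximating subclass, and intersect with the $L_2$-ball via the pointwise limits, which preserve the $L_2$ bound by dominated convergence.

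The main delicate step is the entropy transfer to the localized subclass, where the radius must be handled with the envelope normalization $2B$ versus $B$. I expect this to cost only constant factors in $A$, absorbed into $\lesssim$.
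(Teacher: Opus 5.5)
Your proposal follows the paper's proof step for step: the envelope $2B$ from Lemma~\ref{lemma:boundness_g_star}, translation invariance and subclass monotonicity of covering numbers with the $B$ versus $2B$ normalization, and then Corollary~5.1 of \cite{chernozhukov2014gaussian} with $\sigma=\eta$, $F=M=2B$ and $\log(2AB/\eta)\lesssim\log(AB/\eta)$. Your worry about $A\mapsto 2A$ is unfounded: the paper uses the external covering numbers of \cite{vandervaart1998}, and even with internal ones $N(2B\epsilon,\mathcal H_n(\eta),L_2(Q))\le N(B\epsilon,\mathcal G_{c,n}-g^*_d,L_2(Q))$, because the envelope doubling absorbs the factor $2$. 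Separately, your extra measurability paragraph, which the paper omits, argues in the wrong direction: dominated convergence keeps pointwise limits inside the $\eta$-ball, but you need countably many approximants that lie inside it. This is easily patched, e.g.\ by dominating $\|\mathbb G_m\|_{\mathcal H_n(\eta)}$ by the supremum over the countable class $\{g-g^*_d:\ g\in\mathcal G_0,\ \|g-g^*_d\|_{P^{\mathbf Z}_d,2}<\eta'\}$, where $\mathcal G_0\subseteq\mathcal G_{c,n}$ is the pointwise-dense countable subclass, and then letting $\eta'\downarrow\eta$.
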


\begin{proof}
Lemma~\ref{lemma:boundness_g_star} together with $|g|\le B$ for
$g\in\mathcal G_{c,n}$ by Definition~\ref{def:low_complexity} gives
$|g-g^*_d|\le2B$ by the triangle inequality, so that $\mathcal H_n(\eta)$ is a
class with constant envelope $2B$. The class $\mathcal H_n(\eta)$ is obtained from $\mathcal G_{c,n}$ by translating each of its members by $-g^*_d$ and retaining those whose $L_2(P^{\mathbf Z}_d)$ norm does not exceed $\eta$, so
that
$\mathcal H_n(\eta)
\;\subseteq\;
\mathcal G_{c,n}-g^*_d
:=
\{g-g^*_d:\ g\in\mathcal G_{c,n}\}$.
Observe that covering numbers are invariant under translation of a class by a
fixed function, since translating the centre of each ball by $-g^*_d$ turns a cover of $\mathcal G_{c,n}$ into a cover of $\mathcal G_{c,n}-g^*_d$ of the same radius and cardinality. In addition, covering numbers do not increase
when the class is replaced by a subclass, nor when the radius is enlarged.
Both facts are immediate from the definition of a covering number \cite[see Section~19.2][for modre details]{vandervaart1998}. Hence, for every probability
measure $Q$ and every $\epsilon\in(0,1)$, writing $\epsilon':=\min(2\epsilon,1)$
and using $2B\epsilon\ge B\epsilon'$,
\begin{align}
N\bigl(2B\epsilon,\mathcal H_n(\eta),L_2(Q)\bigr)
&\le
N\bigl(B\epsilon',\mathcal H_n(\eta),L_2(Q)\bigr)
\le
N\bigl(B\epsilon',\mathcal G_{c,n}-g^*_d,L_2(Q)\bigr) \nonumber\\
&=
N\bigl(B\epsilon',\mathcal G_{c,n},L_2(Q)\bigr)
\le
(A/\epsilon')^{v_n}
\le
(A/\epsilon)^{v_n},\nonumber
\end{align}
where the last step follows by $\epsilon'\ge\epsilon$. Since $\|2B\|_{Q,2}=2B$, this is the
uniform entropy bound relative to the envelope of $\mathcal H_n(\eta)$, with
the same constant $A$ and exponent $v_n$.

It remains to verify the last hypothesis of Corollary 5.1
of \cite{chernozhukov2014gaussian}, which bounds $E\|\mathbb G_m\|_{\mathcal F}$ for a class $\mathcal F$ in terms of its uniform entropy bound, its envelope and the $L_2(P_d)$ radius of its members, the first two of which have just been established for $\mathcal H_n(\eta)$ in place of $\mathcal F$. The constraint in \eqref{eq:difference_class} gives
$\sup_{h\in\mathcal H_n(\eta)}P_dh^2\le\eta^2$, and $\eta\le2B=\|2B\|_{P_d,2}$,
so that Corollary 5.1 in \cite{chernozhukov2014gaussian} applies to
$\mathcal H_n(\eta)$ with $L_2$ radius $\eta$ and envelope $2B$, and yields
\[
E\bigl\|\mathbb G_m\bigr\|_{\mathcal H_n(\eta)}
\;\lesssim\;
\eta\sqrt{v_n\log(2AB/\eta)}
\;+\;
\frac{v_nB}{\sqrt m}\,\log(2AB/\eta).
\]
Since $\eta\le2B$ and $A\ge e$ by Definition~\ref{def:low_complexity}, we have
$AB/\eta\ge A/2\ge e/2$ and therefore $\log(2AB/\eta)\le4\log(AB/\eta)$, so
that \eqref{eq:maximal_inequality} follows with a constant depending on
neither $n$, $m$ nor $\eta$.
\end{proof}
%%%%%%%%%%%%%%%%%%%%%%%%%%%%
%%% Auxiliary lemma
%%%%%%%%%%%%%%%%%%%%%%%%%%%%

\begin{lemma}[Negligibility of the empirical process terms]
\label{lemma:empirical_process}
Under Assumptions~\ref{assump:design}, \ref{assump:iid},
\ref{assump:entropy_growth}, \ref{assump:regime} and \ref{assump:learner},
\[
(P_{n_d}-P_d)\Delta_{n,d}=o_p\bigl(n_d^{-1/2}\bigr),
\qquad
(P_{N_d}-P_d)\Delta_{n,d}=o_p\bigl(N_d^{-1/2}\bigr).
\]
\end{lemma}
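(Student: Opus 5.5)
The plan is to bound both terms by a supremum of an empirical process over a shrinking localized class, and then apply the maximal inequality of Lemma~\ref{lemma:maximal_inequality} together with Markov's inequality. Because $\hat g_n$ is trained on the pooled sample, which includes the $n_d$ sampled units of area $d$, $\Delta_{n,d}$ is not independent of the observations entering $P_{n_d}$ or $P_{N_d}$. We therefore never evaluate $\mathbb G_m$ at $\Delta_{n,d}$ directly. Instead we dominate $|\mathbb G_m\Delta_{n,d}|$ by $\|\mathbb G_m\|_{\mathcal H_n(\eta_n)}$ on a high-probability event, where $\mathcal H_n(\eta)$ is the class in \eqref{eq:difference_class}. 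This supremum involves only the deterministic class and the i.i.d.\ draws from $P_d$.

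\textbf{Step 1: choice of the localization radius.} Since $r_{n,d}=o_p(1)$, there is a deterministic sequence $\eta_n\downarrow0$ with $\Pr(r_{n,d}\le\eta_n)\to1$. If $\sup_nv_n=\infty$, Assumption~\ref{assump:learner} also gives $r_{n,d}\sqrt{v_n\log n_d}=o_p(1)$. In that case we choose $\eta_n$ so that additionally $\eta_n\sqrt{v_n\log n_d}\to0$. We also floor it, $\eta_n\ge n_d^{-1}$, which keeps the event probability tending to one and ensures $\log(AB/\eta_n)\lesssim\log n_d$. Finally we take $\eta_n\le 2B$. Set $\Omega_n:=\mathcal E_n\cap\{r_{n,d}\le\eta_n\}$. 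Then $\Pr(\Omega_n)\to1$ by Assumption~\ref{assump:learner}, and on $\Omega_n$ we have $\Delta_{n,d}\in\mathcal H_n(\eta_n)$.

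\textbf{Step 2: maximal inequality at $m=n_d$ and $m=N_d$.} By Assumption~\ref{assump:design}(iii), conditionally on $n_d$ (and on $\mathcal F_{\mathcal P}$), the sampled pairs in area $d$ are i.i.d.\ from $P_d$. By Assumption~\ref{assump:iid}, the $N_d$ population pairs are i.i.d.\ from $P_d$. Pointwise measurability in Definition~\ref{def:low_complexity} makes the suprema measurable. Lemma~\ref{lemma:maximal_inequality} then gives, for $m\in\{n_d,N_d\}$,
\[
E\|\mathbb G_m\|_{\mathcal H_n(\eta_n)}\lesssim \eta_n\sqrt{v_n\log(AB/\eta_n)}+\frac{v_nB}{\sqrt m}\log(AB/\eta_n).
\]
With the bound $\log(AB/\eta_n)\lesssim\log n_d$ from Step 1, the first term tends to $0$. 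When $v_n\equiv v$ this holds because $\eta_n\to0$ and $\eta_n\sqrt{\log(1/\eta_n)}\to0$; otherwise it holds by the choice of $\eta_n$. The second term is at most a constant times $v_n\log n_d/\sqrt{n_d}$, since $N_d\ge n_d$, and this tends to $0$ by Assumption~\ref{assump:entropy_growth}.

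\textbf{Step 3: conclusion.} For any $\epsilon>0$ and each $m\in\{n_d,N_d\}$,
\[
\Pr\{|\mathbb G_m\Delta_{n,d}|>\epsilon\}\le\Pr(\Omega_n^c)+\epsilon^{-1}E\|\mathbb G_m\|_{\mathcal H_n(\eta_n)}\to0.
\]
For $m=n_d$ we first condition on $n_d$ and then use dominated convergence, since $n_d\to\infty$ by Assumption~\ref{assump:regime}(i). Dividing by $\sqrt m$ yields $(P_{n_d}-P_d)\Delta_{n,d}=o_p(n_d^{-1/2})$ and $(P_{N_d}-P_d)\Delta_{n,d}=o_p(N_d^{-1/2})$.

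The main obstacle is Step 1 in the growing-complexity case. The radius must shrink fast enough that $\eta_n\sqrt{v_n\log n_d}\to0$, while remaining large enough that $\Pr(r_{n,d}\le\eta_n)\to1$ and that $\log(1/\eta_n)$ stays of order $\log n_d$. I would construct $\eta_n$ by the standard diagonal argument from $r_{n,d}\sqrt{v_n\log n_d}=o_p(1)$. If this proves delicate, an alternative is to use a peeling argument over dyadic shells of $r_{n,d}$.
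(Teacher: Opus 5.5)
Your proposal is correct and follows essentially the same route as the paper. The paper also builds a deterministic localization radius floored at $n_d^{-1}$ from $r_{n,d}$ or from $r_{n,d}\sqrt{v_n\log n_d}$, according to whether $v_n$ is bounded, dominates $\sqrt{m}\,|(P_m-P_d)\Delta_{n,d}|$ by $\|\mathbb G_m\|_{\mathcal H_n(\eta_n)}$ on $\mathcal E_n\cap\{r_{n,d}\le\eta_n\}$, and concludes via Lemma~\ref{lemma:maximal_inequality}, $N_d\ge n_d$, and Markov's inequality for $m\in\{n_d,N_d\}$; your extra remarks on conditioning on the realized $n_d$ and on measurability are harmless refinements that the paper leaves implicit.
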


\begin{proof}
We first construct a deterministic radius
$\eta_n^0\downarrow0$ such that
$\Pr(r_{n,d}>\eta_n^0)\to0$.
This is needed because
Lemma~\ref{lemma:maximal_inequality} applies to the
deterministic localized class
$\mathcal H_n(\eta_n^0)$, whereas
$\Delta_{n,d}$ is itself random. We use the standard
fact that if $X_n=o_p(1)$, then there exists a
deterministic sequence $\epsilon_n\downarrow0$ such that
$\Pr(X_n>\epsilon_n)\to0$. We now apply this fact in the
two cases distinguished by
Assumption~\ref{assump:learner}.

If $\sup_n v_n<\infty$, we apply this fact to
$X_n=r_{n,d}$ and let
$\eta_n^0:=\max(\epsilon_n,n_d^{-1})$.
If $\sup_n v_n=\infty$, apply it to
$X_n=r_{n,d}\sqrt{v_n\log n_d}$, obtaining
$c_n\downarrow0$, and set
$\eta_n^0
:=\max\{c_n/\sqrt{v_n\log n_d},\,n_d^{-1}\}$. 
In either case, $\eta_n^0$ is deterministic and satisfies
\begin{equation}
\label{eq:radius_properties}
\eta_n^0\to0,
\qquad
\eta_n^0\ge n_d^{-1},
\qquad
\Pr(r_{n,d}>\eta_n^0)\to0.
\end{equation}

The first $\eta^0_n\rightarrow0$ in \eqref{eq:radius_properties} relation implies $\eta_n^0\le2B$ for all
sufficiently large $n$. The second gives
$1/\eta_n^0\le n_d$ and hence
$AB/\eta_n^0\le ABn_d$, so that
\begin{equation}\label{eq:log_bound}
\log(AB/\eta_n^0)
\le \log n_d+\log(AB)
\le 2\log n_d
\end{equation}
for all sufficiently large $n$, since $n_d\to\infty$.
Thus the conditions required to apply
Lemma~\ref{lemma:maximal_inequality} at
$\eta=\eta_n^0$ are satisfied.

We next verify that the bound in
\eqref{eq:maximal_inequality} vanishes at
$\eta=\eta_n^0$. Consider the second term of \eqref{eq:maximal_inequality} on the right hand side.
Using the logarithmic bound n \eqref{eq:log_bound},
$v_nBm^{-1/2}\log(AB/\eta_n^0)$ is bounded, up to a
constant, by
$v_nBm^{-1/2}\log n_d$, which tends to zero for
$m=n_d$ by Assumption~\ref{assump:entropy_growth}
and hence also for $m=N_d\ge n_d$.

For the first term \eqref{eq:maximal_inequality} on the right hand side, suppose first that
$\sup_n v_n=:v<\infty$. Then
\[
\eta_n^0
\sqrt{v_n\log(AB/\eta_n^0)}
\le
\sqrt v\,\eta_n^0
\sqrt{\log(AB/\eta_n^0)}
\to0,
\]
because
$\eta\mapsto\eta\sqrt{\log(AB/\eta)}$
tends to zero as $\eta\downarrow0$ and
$\eta_n^0\to0$.
If instead $\sup_n v_n=\infty$, then
$\log(AB/\eta_n^0)\le2\log n_d$ gives
\[
\eta_n^0\sqrt{v_n\log(AB/\eta_n^0)}
\lesssim
\eta_n^0\sqrt{v_n\log n_d}
\le
\max\bigl\{
c_n,\,
n_d^{-1}\sqrt{v_n\log n_d}
\bigr\}
=o(1),
\]
because $c_n\to0$ and
Assumption~\ref{assump:entropy_growth} implies
$v_n\log n_d=o(n_d^{1/2})$.
Hence both terms in the bound of
Lemma~\ref{lemma:maximal_inequality} vanish, and therefore
\begin{equation}
\label{eq:process_negligible}
E\bigl\|\mathbb G_{n_d}\bigr\|_{\mathcal H_n(\eta_n^0)}
\to0,
\qquad
E\bigl\|\mathbb G_{N_d}\bigr\|_{\mathcal H_n(\eta_n^0)}
\to0.
\end{equation}
It remains to transfer these bounds from the deterministic
class $\mathcal H_n(\eta_n^0)$ to the random function
$\Delta_{n,d}$. On the event
$\mathcal E_n\cap\{r_{n,d}\leq\eta_n^0\}$,
one has
$\Delta_{n,d}\in\mathcal H_n(\eta_n^0)$ and hence
\[
\sqrt{n_d}\,
\bigl|(P_{n_d}-P_d)\Delta_{n,d}\bigr|
\leq
\|\mathbb G_{n_d}\|_{\mathcal H_n(\eta_n^0)}.
\]
Therefore, for every $\epsilon>0$,
\[
\Pr\Bigl\{
\sqrt{n_d}\,
\bigl|(P_{n_d}-P_d)\Delta_{n,d}\bigr|>\epsilon
\Bigr\}
\leq
\Pr(\mathcal E_n^c)
+
\Pr(r_{n,d}>\eta_n^0)
+
\Pr\Bigl\{
\|\mathbb G_{n_d}\|_{\mathcal H_n(\eta_n^0)}
>\epsilon
\Bigr\}.
\]
All three terms tend to zero: the first two by
Assumption~\ref{assump:learner} and
\eqref{eq:radius_properties}, and the third by
\eqref{eq:process_negligible} and Markov's inequality.
Hence
$(P_{n_d}-P_d)\Delta_{n,d}
=o_p(n_d^{-1/2})$.

The same argument, with $\mathbb G_{N_d}$ and
$\sqrt{N_d}$ in place of $\mathbb G_{n_d}$ and
$\sqrt{n_d}$, gives
$(P_{N_d}-P_d)\Delta_{n,d}
=o_p(N_d^{-1/2})$.
Importantly, this domination uses a deterministic
localized class and therefore does not require
$\Delta_{n,d}$ to be independent of the population
covariates, even though
$\mathcal Q_d^s\subset\mathcal Q_d$.
\end{proof}

%%%%%%%%%%%%%%%%%%%%%%%
% Proposition 4.1
%%%%%%%%%%%%%%%%%%%%%%%%%%%%

\begin{proof}[Proof of Proposition~\ref{prop:normality_family}]
We are finally ready to prove
Proposition~\ref{prop:normality_family} by bounding
each term in the following decomposition. 

\emph{The decomposition of $\widehat\mu_d(a_n)-\mu_d$.} First, adding and subtracting $P_{N_d}g^*_d$ and
$P_{n_d}g^*_d$ in \eqref{eq:sal_family} and using $\mu_d=P_dg^*_d+\beta_d$
with $\beta_d=P_d(Y-g^*_d)$ yields the exact identity
\begin{align}
\widehat\mu_d(a_n)-\mu_d
&=\underbrace{(a_n-1)\beta_d}_{T_0}
+\underbrace{(P_{N_d}-P_d)g^*_d}_{T_1}
+\underbrace{a_n\,(P_{n_d}-P_d)(Y-g^*_d)}_{T_2}
\nonumber\\
&+\underbrace{(P_{N_d}-P_d)\Delta_{n,d}}_{T_{3,N}}
-\underbrace{a_n\,(P_{n_d}-P_d)\Delta_{n,d}}_{T_{3,n}}
+\underbrace{(1-a_n)\,P_d\Delta_{n,d}}_{T_4}.
\label{eq:family_decomposition}
\end{align}
Conditional on $\mathcal F_{\mathcal P}$, the term
$T_0$ is fixed; the remaining five terms are treated
in subsequent paragraphs.

\emph{The finite-population term $T_1$.}
Second, conditional on $\mathcal F_{\mathcal P}$,
the function $g_d^*$ is fixed and bounded by
Lemma~\ref{lemma:boundness_g_star}. In addition,
$(\mathbf Z_{dj})_{j\in\mathcal Q_d}$ are i.i.d.\
$P_d^{\mathbf Z}$ by Assumption~\ref{assump:iid}. Hence $T_1$ is a centred average of $N_d$ i.i.d.\
variables whose variance $\operatorname{Var}_{P^{\mathbf Z}_d}\{g^*_d(\mathbf Z)\}$
is finite, being at most $B^2$, so that $T_1=O_p(N_d^{-1/2})$ by Chebyshev's inequality and $\sqrt{n_d}\,T_1=(n_d/N_d)^{1/2}O_p(1)=o_p(1)$ by
Assumption~\ref{assump:regime}(ii).

\emph{The empirical process terms $T_{3,n}$ and $T_{3,N}$.}
Third, by Lemma~\ref{lemma:empirical_process},
$(P_{n_d}-P_d)\Delta_{n,d}=o_p(n_d^{-1/2})$.
Since $a_n$ is deterministic and $|a_n|\leq1$,
$T_{3,n}
=
a_n\,o_p(n_d^{-1/2})$,
and hence $\sqrt{n_d}T_{3,n}=o_p(1)$.
The same lemma gives
$T_{3,N}=o_p(N_d^{-1/2})$, and since
$n_d\leq N_d$, it follows that
$\sqrt{n_d}T_{3,N}=o_p(1)$.

\emph{The learner mean term $T_4$.} Fourth, by the Cauchy--Schwarz inequality,
$|P_d\Delta_{n,d}|\le r_{n,d}=o_p(1)$, so that $T_4=o_p(1)$ for every sequence
$a_n$ in $[0,1]$, since $|1-a_n|\le1$. Three refinements correspond to the cases in
parts (ii)--(iv) of the proposition. If $a_n\equiv1$, then $T_4=0$
identically. Under Assumption~\ref{assump:learner_mean},
$\sqrt{n_d}\,T_4=(1-a_n)\sqrt{n_d}\,P_d\Delta_{n,d}=o_p(1)$, since
$|1-a_n|\le1$. If instead $P_d\Delta_{n,d}=O_p(n^{-\gamma})$, then
$T_4=O_p(n^{-\gamma})$, again because $|1-a_n|\le1$.

\emph{The leading term $T_2$.} Fifth, the sampled pairs
$(Y_{dj},\mathbf Z_{dj})_{j\in\mathcal Q^s_d}$ are i.i.d.\ $P_d$ by
Assumption~\ref{assump:design}(iii), so that the quantity
$(P_{n_d}-P_d)(Y-g^*_d)$, which equals
\[
\frac1{n_d}\sum_{j\in\mathcal Q^s_d}
\bigl\{\varepsilon^*_{dj}-E_{P_d}(\varepsilon^*_d)\bigr\},
\qquad
\varepsilon^*_{dj}=Y_{dj}-g^*_d(\mathbf Z_{dj}),
\]
is a centred average of $n_d$ i.i.d.\ variables with variance $\sigma^2_d$,
finite by Corollary~\ref{cor:sigma_finite} and strictly positive by the
hypothesis of Proposition~\ref{prop:normality_family}; the centring is by
$E_{P_d}(\varepsilon^*_d)=\beta_d$, which is not assumed to vanish. By the Lindeberg--L\'evy central limit theorem,
$\sqrt{n_d}(P_{n_d}-P_d)(Y-g_d^*)
\xrightarrow{d}N(0,\sigma_d^2)$, so that
\begin{equation}
\label{eq:T2_order}
T_2=a_nO_p(n_d^{-1/2}),
\qquad
\sqrt{n_d}\,T_2
\xrightarrow{d}N(0,a^2\sigma_d^2),
\end{equation}
the second assertion by Slutsky's lemma, since $a_n\to a$.

\emph{Conclusion.} By the second to fifth steps, the five stochastic
terms in \eqref{eq:family_decomposition} are
$o_p(1)$, using $n_d\to\infty$ for $T_2$. Hence
$\widehat\mu_d(a_n)-\mu_d=(a_n-1)\beta_d+o_p(1)$, which converges in probability to $(a-1)\beta_d$ which proves Proposition~\ref{prop:normality_family}(i).

To obtain Proposition~\ref{prop:normality_family}(ii)--(iii), the second and third steps give
$\sqrt{n_d}\,(T_1+T_{3,N}-T_{3,n})=o_p(1)$. If $a_n\equiv1$, then
$T_0=T_4=0$ identically by the first and fourth steps, so that
\eqref{eq:family_decomposition}, \eqref{eq:T2_order} with $a=1$ and Slutsky's
lemma yield the limit in (ii). If $a\in(0,1)$, then
Assumption~\ref{assump:learner_mean} and the fourth step give
$\sqrt{n_d}\,T_4=o_p(1)$, and \eqref{eq:family_decomposition} reads
$\sqrt{n_d}\,\bigl\{\widehat\mu_d(a_n)-\mu_d-(a_n-1)\beta_d\bigr\}
=
\sqrt{n_d}\,T_2+o_p(1)$, so that \eqref{eq:T2_order} and Slutsky's lemma give the limit in (iii), whose
variance $a^2\sigma^2_d$ is strictly positive.

For Proposition~\ref{prop:normality_family}(iv),
we collect the orders obtained above:
$T_1=O_p(N_d^{-1/2})$ by the second step,
$T_2=a_nO_p(n_d^{-1/2})$ by the fifth step,
$T_{3,N}=o_p(N_d^{-1/2})$ and
$T_{3,n}=a_no_p(n_d^{-1/2})$ by the third step,
and $T_4=O_p(n^{-\gamma})$ by the fourth step
under the rate condition imposed there.
Substituting these five orders in
\eqref{eq:family_decomposition}, after removing
$T_0=(a_n-1)\beta_d$, gives the stated rate.
 No
single normalisation is asserted in this case: the limit in
\eqref{eq:T2_order} is degenerate at zero when $a=0$, so $\sqrt{n_d}$ is no
longer the relevant scaling, and which of $a_nn_d^{-1/2}$, $N_d^{-1/2}$ and
$n^{-\gamma}$ dominates is not determined by Assumptions~\ref{assump:regime}
and \ref{assump:learner}.
\end{proof}

%%%%%%%%%%%%%%%%%%%%%%%%%%%%%%%%%%%%%%%%%%%%%%%%%%%%%%%%%%%%%%%%%%%%%%%%%%%%%%%
% Proofs for Section 4.2
%%%%%%%%%%%%%%%%%%%%%%%%%%%%%%%%%%%%%%%%%%%%%%%%%%%%%%%%%%%%%%%%%%%%%%%%%%%%%%%
\subsection{Proofs for Section~\ref{sec:sal_cf}}
Proposition~\ref{prop:normality_cf_family} is the main theoretical results of Section~\ref{sec:sal_cf}. 
Its proof depends on a decomposition of $\widehat\mu^{cf}_d(a_n)-\mu_d$ in \eqref{eq:cf_family_decomposition} with similar terms as for Proposition~\ref{prop:normality_family}. We first establish the proof of Lemma~\ref{lemma:cf_population_averaging}, and afterwards, after introducing some additional notation, we move to discussing each term of the decomposition in \eqref{eq:cf_family_decomposition}.

%%%%%%%%%%%%%%%%%%%%%%%%%%%%%
%%%%%%%% Proof of population averaging lemma
%%%%%%%%%%%%%%%%%%%%%%%%%%%%%%%

\begin{proof}[Proof of Lemma~\ref{lemma:cf_population_averaging}]
Let $h:=\bar g_{n,d}-g_d^*$. Splitting $P_{N_d}$ into its
sampled and nonsampled parts gives the exact decomposition
\[
(P_{N_d}-P_d)h
=
\frac{n_d}{N_d}(P_{n_d}-P_d)h
+
\frac{N_d-n_d}{N_d}(P_{N_d-n_d}-P_d)h,
\]
where $P_{N_d-n_d}$ denotes the empirical measure over the
nonsampled units of area $d$.

Consider first the sampled part. Assumption~\ref{assump:cf_learner}
gives $|\hat g^{(-\ell)}_{n,d}|\le B$ for
$\ell=1,\dots,L$ and $|g_d^*|\le B$, so that
$|h|\le2B$ by the triangle inequality. Consequently,
\[
|(P_{n_d}-P_d)h|
\le P_{n_d}|h|+P_d|h|
\le4B,
\]
and hence
\[
\frac{n_d}{N_d}(P_{n_d}-P_d)h
=
O_p\left(\frac{n_d}{N_d}\right).
\]
The crude bound is sufficient here. In contrast to the
out-of-fold evaluations considered in the cross-fitted
sample term, $\bar g_{n,d}$ is not independent of the full
area sample: for a unit in fold $\ell$, only
$\hat g^{(-\ell)}_{n,d}$ excludes that unit, whereas the
remaining fold-specific learners may have been trained on it.
Thus the foldwise conditional-independence argument does not
apply directly to $(P_{n_d}-P_d)h$.

Consider next the nonsampled part. Let $\mathcal F_s$ denote
the $\sigma$-field generated by the sampled data used to
construct $\bar g_{n,d}$. Conditional on $\mathcal F_s$, the
function $h$ is fixed, while the nonsampled covariates in area
$d$ are i.i.d.\ $P_d^{\mathbf Z}$ and independent of
$\mathcal F_s$ by Assumptions~\ref{assump:iid} and
\ref{assump:design}. Therefore,
\[
E\left[
\left\{(P_{N_d-n_d}-P_d)h\right\}^2
\mid\mathcal F_s
\right]
\le
\frac{\|h\|_{P_d^{\mathbf Z},2}^2}{N_d-n_d}.
\]
Moreover, since
\[
\|h\|_{P_d^{\mathbf Z},2}
=
\left\|
\frac1L\sum_{\ell=1}^L\Delta_\ell
\right\|_{P_d^{\mathbf Z},2}
\le
\frac1L\sum_{\ell=1}^L
\|\Delta_\ell\|_{P_d^{\mathbf Z},2},
\]
Assumption~\ref{assump:cf_learner} and the fact that $L$ is
fixed imply
$\|h\|_{P_d^{\mathbf Z},2}=o_p(1)$.
The conditional Chebyshev inequality therefore gives
\[
(P_{N_d-n_d}-P_d)h
=
o_p\bigl((N_d-n_d)^{-1/2}\bigr).
\]
Since $n_d^{3/2}/N_d\to0$ in
Assumption~\ref{assump:cf_learner}, we also have
$n_d/N_d\to0$, and hence $N_d-n_d\asymp N_d$. It follows that
\[
\frac{N_d-n_d}{N_d}
(P_{N_d-n_d}-P_d)h
=
o_p(N_d^{-1/2}).
\]
Combining the sampled and nonsampled bounds proves the rate in
\eqref{eq:cf_population_averaging}.

Finally, multiplying the two contributions by $\sqrt{n_d}$
gives
\[
\sqrt{n_d}\,\frac{n_d}{N_d}
O_p(1)
=
\frac{n_d^{3/2}}{N_d}O_p(1)
=
o_p(1)
\]
for the sampled part, while the nonsampled part satisfies
\[
\sqrt{n_d}\,o_p(N_d^{-1/2})
=
\left(\frac{n_d}{N_d}\right)^{1/2}o_p(1)
=
o_p(1).
\]
Adding the two terms yields
\eqref{eq:cf_population_averaging}.
\end{proof}

%%%%%%%%%%%%%
%%%%%%%% Proof normality cross fitted estimator
%%%%%%%%%%%%%%%%%%%%%%%%%%%%%%%

\begin{proof}[Proof of Proposition~\ref{prop:normality_cf_family}]
For $\ell=1,\dots,L$ let $n_{d\ell}:=|\mathcal Q_d^s\cap\mathcal I_\ell|$ and
define the fold empirical measures
\begin{equation*}
P_{n_{d\ell}}h
:=
\frac{1}{n_{d\ell}}
\sum_{j\in\mathcal Q_d^s\cap\mathcal I_\ell}
h(Y_{dj},\mathbf Z_{dj}),\;\text{ so that }\; P_{n_d}
=
\sum_{\ell=1}^L\frac{n_{d\ell}}{n_d}P_{n_{d\ell}}
\quad\text{ and }\quad
\frac{n_{d\ell}}{n_d}\rightarrow\frac{1}{L},
\end{equation*}
where the convergence in the last term follows from the stratification $|n_{d\ell}-n_d/L|\le1$ in
Assumption~\ref{assump:cf_learner} and $n_d\to\infty$. Throughout the proof, we will use following notation
$$
\Delta_\ell:=\hat g^{(-\ell)}_{n,d}-g^*_d,
\qquad
\bar\Delta_{n,d}:=\bar g_{n,d}-g^*_d=\frac{1}{L}\sum_{\ell=1}^L\Delta_\ell .
$$

\emph{The decomposition.} First, by adding and subtracting $P_{N_d}g^*_d$ and
$\sum_\ell(n_{d\ell}/n_d)P_{n_{d\ell}}g^*_d=P_{n_d}g^*_d$ in
\eqref{eq:cf_family}, and using $\mu_d=P_dg^*_d+\beta_d$ with
$\beta_d=P_d(Y-g^*_d)$, we obtain the exact identity
\begin{align}
\widehat\mu^{cf}_d(a_n)-\mu_d
={}&\underbrace{(a_n-1)\beta_d}_{T_0}
+\underbrace{(P_{N_d}-P_d)g^*_d}_{T_1}
+\underbrace{a_n\,(P_{n_d}-P_d)(Y-g^*_d)}_{T_2}
+\underbrace{(P_{N_d}-P_d)\bar\Delta_{n,d}}_{T_{3,N}}
\nonumber\\
&-\underbrace{a_n\sum_{\ell=1}^L\frac{n_{d\ell}}{n_d}
(P_{n_{d\ell}}-P_d)\Delta_\ell}_{T_{3,n}}
+\underbrace{\sum_{\ell=1}^L
\Bigl(\frac{1}{L}-a_n\frac{n_{d\ell}}{n_d}\Bigr)P_d\Delta_\ell}_{T_4}.
\label{eq:cf_family_decomposition}
\end{align}
The term $T_0$ is deterministic; the remaining five are treated in subsequent paragraphs.

\emph{The finite-population term $T_1$.} Second, the function $g^*_d$ is
deterministic and bounded by $B$ by Assumption~\ref{assump:cf_learner}, and
$(\mathbf Z_{dj})_{j\in\mathcal Q_d}$ are i.i.d.\ $P_d^{\mathbf Z}$ by
Assumption~\ref{assump:iid}. Hence $T_1$ is a centred average of $N_d$
i.i.d.\ variables whose variance is at most $B^2$, so that
$T_1=O_p(N_d^{-1/2})$ by Chebyshev's inequality and
$\sqrt{n_d}\,T_1=(n_d/N_d)^{1/2}O_p(1)=o_p(1)$ by
Assumption~\ref{assump:regime}(ii).

\emph{The fold terms $T_{3,n}$.} Third, fix $\ell$ and let
$\mathcal F_{-\ell}$ denote the $\sigma$-field generated by the sampled data
outside $\mathcal I_\ell$. Conditionally on $\mathcal F_{-\ell}$, the
function $\Delta_\ell$ is fixed, while the pairs indexed by
$\mathcal Q_d^s\cap\mathcal I_\ell$ are i.i.d.\ $P_d$ by
Assumption~\ref{assump:design}(iii) and independent of
$\mathcal F_{-\ell}$. Hence
$$
E\Bigl[
\bigl\{(P_{n_{d\ell}}-P_d)\Delta_\ell\bigr\}^2
\,\Bigm|\,\mathcal F_{-\ell}
\Bigr]
\le
\frac{\|\Delta_\ell\|^2_{P_d^{\mathbf Z},2}}{n_{d\ell}} .
$$
Since $\|\Delta_\ell\|_{P_d^{\mathbf Z},2}$ is $\mathcal F_{-\ell}$
measurable, for every $\epsilon,\eta>0$,
\[
\Pr\Bigl\{
\sqrt{n_{d\ell}}\,
\bigl|(P_{n_{d\ell}}-P_d)\Delta_\ell\bigr|>\epsilon
\Bigr\}
\le
\Pr\bigl\{\|\Delta_\ell\|^2_{P_d^{\mathbf Z},2}>\eta\bigr\}
+
\frac{\eta}{\epsilon^2}
\]
by the conditional Chebyshev inequality. For every fixed
$\eta>0$, the first term tends to zero by
Assumption~\ref{assump:cf_learner}. Taking the limit superior
and then letting $\eta\downarrow0$ therefore gives
\[
\sqrt{n_{d\ell}}\,
(P_{n_{d\ell}}-P_d)\Delta_\ell=o_p(1).
\]
The stratification in Assumption~\ref{assump:cf_learner} gives
$n_{d\ell}\ge n_d/L-1\to\infty$, so the preceding argument
applies to every fold. Since
$(n_{d\ell}/n_d)(n_d/n_{d\ell})^{1/2}
=(n_{d\ell}/n_d)^{1/2}\le1$,
we obtain
\[
\bigl|\sqrt{n_d}\,T_{3,n}\bigr|
\le
a_n\sum_{\ell=1}^L
\left(\frac{n_{d\ell}}{n_d}\right)^{1/2}
\bigl|
\sqrt{n_{d\ell}}\,
(P_{n_{d\ell}}-P_d)\Delta_\ell
\bigr|
\le
a_n\sum_{\ell=1}^L
\bigl|
\sqrt{n_{d\ell}}\,
(P_{n_{d\ell}}-P_d)\Delta_\ell
\bigr|.
\]
Since $L$ is fixed and each term in the final sum is
$o_p(1)$, it follows that
\[
T_{3,n}
=
a_n\,o_p(n_d^{-1/2}),
\]
and hence, in particular,
$\sqrt{n_d}\,T_{3,n}=o_p(1)$ because $a_n\le1$.

\emph{The learner mean term $T_4$.} Fifth, writing
\[
\frac1L-a_n\frac{n_{d\ell}}{n_d}
=
\frac{1-a_n}{L}
+
a_n\left(\frac1L-\frac{n_{d\ell}}{n_d}\right),
\]
we obtain
\begin{equation}
\label{eq:cf_T4_split}
T_4
=
(1-a_n)P_d\bar\Delta_{n,d}
+
a_n\sum_{\ell=1}^L
\left(\frac1L-\frac{n_{d\ell}}{n_d}\right)
P_d\Delta_\ell .
\end{equation}
By the Cauchy--Schwarz inequality and
Assumption~\ref{assump:cf_learner},
$|P_d\Delta_\ell|
\le\|\Delta_\ell\|_{P_d^{\mathbf Z},2}=o_p(1)$
for every $\ell$. Since $L$ is fixed,
$P_d\bar\Delta_{n,d}=o_p(1)$. Moreover, the stratification
in Assumption~\ref{assump:cf_learner} gives
$|1/L-n_{d\ell}/n_d|\le n_d^{-1}$, and hence
\[
\left|
a_n\sum_{\ell=1}^L
\left(\frac1L-\frac{n_{d\ell}}{n_d}\right)
P_d\Delta_\ell
\right|
\le
a_nLn_d^{-1}
\max_{1\le\ell\le L}|P_d\Delta_\ell|
=
a_n\,o_p(n_d^{-1}).
\]
It follows that $T_4=o_p(1)$. If $a_n\equiv1$, the first term in
\eqref{eq:cf_T4_split} vanishes identically, while the
second is $o_p(n_d^{-1})$. Thus
$\sqrt{n_d}\,T_4=o_p(1)$ without
Assumption~\ref{assump:cf_learner_mean}. Under Assumption~\ref{assump:cf_learner_mean},
$\sqrt{n_d}\,P_d\Delta_\ell=o_p(1)$ for every
$\ell$, and hence
$\sqrt{n_d}\,P_d\bar\Delta_{n,d}=o_p(1)$ because
$L$ is fixed. Together with the bound on the second term
above, this gives $\sqrt{n_d}\,T_4=o_p(1)$. Finally, under the rate condition in part~(iv),
$P_d\bar\Delta_{n,d}=O_p(n^{-\gamma})$, and therefore
\[
T_4
=
O_p(n^{-\gamma})
+
a_n\,o_p(n_d^{-1}).
\]

\emph{The leading term $T_2$.} Sixth, the sampled pairs
$(Y_{dj},\mathbf Z_{dj})_{j\in\mathcal Q^s_d}$ are i.i.d.\ $P_d$ by
Assumption~\ref{assump:design}(iii), so that
$(P_{n_d}-P_d)(Y-g_d^*)$ equals
\[
\frac{1}{n_d}\sum_{j\in\mathcal Q_d^s}
\{\varepsilon^*_{dj}-E_{P_d}(\varepsilon^*_d)\},
\qquad
\varepsilon^*_{dj}=Y_{dj}-g_d^*(\mathbf Z_{dj}),
\]
and is therefore a centred average of $n_d$ i.i.d.\ variables with finite
and strictly positive variance $\sigma_d^2$, by
Corollary~\ref{cor:sigma_finite} and the hypothesis of
Proposition~\ref{prop:normality_cf_family}. The centring is by
$E_{P_d}(\varepsilon^*_d)=\beta_d$, which is not assumed to vanish. By the
Lindeberg--L\'evy central limit theorem,
\[
\sqrt{n_d}\,(P_{n_d}-P_d)(Y-g_d^*)
\xrightarrow{d}N(0,\sigma_d^2).
\]
Consequently,
$T_2=a_nO_p(n_d^{-1/2})$, and since $a_n\to a$, Slutsky's lemma gives
$\sqrt{n_d}\,T_2\xrightarrow{d}N(0,a^2\sigma_d^2)$.

\emph{Conclusion.} By the second to sixth steps,
$T_1$, $T_{3,N}$, $T_{3,n}$, $T_2$ and $T_4$ are all $o_p(1)$.
Hence \eqref{eq:cf_family_decomposition} gives
\[
\widehat\mu^{cf}_d(a_n)-\mu_d
=
(a_n-1)\beta_d+o_p(1),
\]
which converges in probability to $(a-1)\beta_d$ and proves
Proposition~\ref{prop:normality_cf_family}(i).

For part~(ii), suppose that $a_n\equiv1$. Then $T_0=0$, while the fourth and
fifth steps give
$\sqrt{n_d}(T_{3,N}+T_4)=o_p(1)$ without
Assumption~\ref{assump:cf_learner_mean}. Together with the second and third
steps,
\[
\sqrt{n_d}\,(T_1+T_{3,N}-T_{3,n}+T_4)=o_p(1).
\]
Since $a=1$, the sixth step gives
$\sqrt{n_d}\,T_2\xrightarrow{d}N(0,\sigma_d^2)$. Therefore,
\eqref{eq:cf_family_decomposition} and Slutsky's lemma yield
\[
\sqrt{n_d}\,
(\widehat\mu^{cf}_d-\mu_d)
\xrightarrow{d}N(0,\sigma_d^2),
\]
which proves part~(ii).

For part~(iii), Assumption~\ref{assump:cf_learner_mean} and the fifth step
give $\sqrt{n_d}\,T_4=o_p(1)$. The second to fourth steps therefore imply
\[
\sqrt{n_d}\,(T_1+T_{3,N}-T_{3,n}+T_4)=o_p(1).
\]
Subtracting $T_0=(a_n-1)\beta_d$ from
\eqref{eq:cf_family_decomposition} gives
\[
\sqrt{n_d}\,
\bigl\{
\widehat\mu^{cf}_d(a_n)-\mu_d-(a_n-1)\beta_d
\bigr\}
=
\sqrt{n_d}\,T_2+o_p(1).
\]
The sixth step and Slutsky's lemma then give
\[
\sqrt{n_d}\,
\bigl\{
\widehat\mu^{cf}_d(a_n)-\mu_d-(a_n-1)\beta_d
\bigr\}
\xrightarrow{d}
N(0,a^2\sigma_d^2),
\]
which proves part~(iii).

Finally, for part~(iv), the bounds obtained above give
$T_1=O_p(N_d^{-1/2})$,
$T_2=a_nO_p(n_d^{-1/2})$,
$T_{3,n}=a_no_p(n_d^{-1/2})$, and
\[
T_4
=
O_p(n^{-\gamma})+a_no_p(n_d^{-1}).
\]
Moreover, Lemma~\ref{lemma:cf_population_averaging} gives
\[
T_{3,N}
=
O_p\left(
\frac{n_d}{N_d}+N_d^{-1/2}
\right).
\]
Since $a_no_p(n_d^{-1})$ is absorbed by
$a_nO_p(n_d^{-1/2})$, substituting these bounds in
\eqref{eq:cf_family_decomposition} and subtracting
$T_0=(a_n-1)\beta_d$ yields
\[
\widehat\mu^{cf}_d(a_n)-\mu_d-(a_n-1)\beta_d
=
O_p\left(
a_n n_d^{-1/2}
+N_d^{-1/2}
+\frac{n_d}{N_d}
+n^{-\gamma}
\right),
\]
which proves part~(iv).
\end{proof}

%%%%%%%%%%%%%%%%%%%%%%%%%%%%%%%%%%%%%%%%%%%%
%%% Corollary 4.2 Finite population
%%%%%%%%%%%%%%%%%%%%%%%%%%%%%%%%%%%%%%%%%%%%
\begin{proof}[Proof of Corollary~\ref{cor:finite_pop}]
For either estimator,
$\hat\mu_d^{(\cdot)}(a_n)-\bar Y_d
=\{\hat\mu_d^{(\cdot)}(a_n)-\mu_d\}
-(\bar Y_d-\mu_d)$.
Since $\bar Y_d-\mu_d=(P_{N_d}-P_d)Y$ is
a centred average of $N_d$ i.i.d.\ outcomes,
the finite-second-moment assumption and Chebyshev's
inequality yield
$\bar Y_d-\mu_d=O_p(N_d^{-1/2})$.
By Assumption~\ref{assump:regime}(ii),
$\sqrt{n_d}(\bar Y_d-\mu_d)=o_p(1)$.
Consequently, consistency and the
$\sqrt{n_d}$-normalized limit laws of
Propositions~\ref{prop:normality_family} and
\ref{prop:normality_cf_family} carry over to
$\bar Y_d$ by Slutsky's lemma, giving
\eqref{eq:clt_finite} when $a_n\equiv1$.
Finally, the rate in part (iv) of
Proposition~\ref{prop:normality_family}
is preserved by the triangle inequality,
since the additional term is
$O_p(N_d^{-1/2})$, an order already
included in that rate.
\end{proof}

%%%%%%%%%%%%%%%%%%%%%%%%%%%%%%%%%%%%%%%%%%%%
%%% Corollary 4.3 Variance
%%%%%%%%%%%%%%%%%%%%%%%%%%%%%%%%%%%%%%%%%%%%

\begin{proof}[Proof of Corollary~\ref{cor:variance}]
Let $\varepsilon_{dj}^*
=Y_{dj}-g_d^*(\mathbf Z_{dj})$ and write
$\hat\varepsilon_{dj}
=\varepsilon_{dj}^*-\Delta_{dj}$, where
$\Delta_{dj}$ is the error of the fitted learner,
evaluated out of fold under cross-fitting.
It suffices to show that
$P_{n_d}\Delta^2=o_p(1)$.

Without sample splitting, the uniform boundedness
in Definition~\ref{def:low_complexity} implies
that the squared estimation-error class has
the same entropy order as the original class.
The empirical-process bound underlying
Lemma~\ref{lemma:empirical_process}, together with
Assumptions~\ref{assump:entropy_growth} and
\ref{assump:learner}, therefore gives
$P_{n_d}\Delta^2
=P_d\Delta^2+o_p(1)=o_p(1)$.
Under cross-fitting, conditional on each training
sample, the evaluation observations are independent
of the fitted learner. Markov's inequality and
Assumption~\ref{assump:cf_learner} then give
$P_{n_{d\ell}}\Delta^2=o_p(1)$ on each fold.
Since $L$ is fixed, the result also holds
for the full area sample.

By Cauchy--Schwarz and the law of large numbers,
the first two empirical moments of the fitted
residuals converge to those of $\varepsilon_d^*$.
Thus $\hat\sigma_d^2\xrightarrow{p}\sigma_d^2$.
Finally, by the definition of \eqref{eq:resid_g_hat} we have
$\tilde\mu_d^{(\cdot)}(a_n)
=\hat\mu_d^{(\cdot)}(1)$ exactly.
The confidence interval results follow from
Propositions~\ref{prop:normality_family} and
\ref{prop:normality_cf_family},
Corollary~\ref{cor:finite_pop}, and
Slutsky's lemma.
\end{proof}

%%%%%%%%%%%%%%%%%%%%%%%%%%%%%%%%%%%%%%%%%%%%%%%%%%%%%%%%%%%%%%%%%%%%%%%%%%%%%%%
% Proofs Section 5: aggregated learners 
%%%%%%%%%%%%%%%%%%%%%%%%%%%%%%%%%%%%%%%%%%%%%%%%%%%%%%%%%%%%%%%%%%%%%%%%%%%%%%%

\subsection{Proofs for Section~\ref{sec:sal_agg}}

%%%%%%%%%%%%%%%%%%%%%%%%%
%%%% Normality under aggregation, no cross-fitting
%%%%%%%%%%%%%%%

\begin{proof}[Proof of Proposition~\ref{prop:normality_agg}]
Write $\Delta_{n,d}:=\hat g_{n,d}-g_d^*$
and $\varepsilon_d^*:=Y-g_d^*(\mathbf Z)$.
To facilitate comparison with the proofs of
Propositions~\ref{prop:normality_family} and
\ref{prop:normality_cf_family}, we retain the
same notation for corresponding remainder terms.
In particular, there is no $T_1$ term here,
because the aggregated estimator does not contain
the population empirical-average term
$(P_{N_d}-P_d)(\hat g_{n,d}-g_d^*)$.

Subtracting $\mu_d^{\mathrm{agg}}$
from \eqref{eq:general_predictor_a} gives
\begin{align}
&\hat\mu_d^{\mathrm{agg}}(a_n)
-\mu_d^{\mathrm{agg}}
=
\underbrace{(a_n-1)\beta_d}_{T_0}
+
a_n\underbrace{
(P_{n_d}-P_d)\varepsilon_d^*}_{T_2}
-
a_n\underbrace{
(P_{n_d}-P_d)\Delta_{n,d}}_{T_{3,n}}
\nonumber\\
&\qquad+
\underbrace{
\hat g_{n,d}(\bar{\mathbf Z}_d)
-\hat g_{n,d}(\bar{\mathbf z}_d)}_{T_5}
+
\underbrace{
\Delta_{n,d}(\bar{\mathbf z}_d)
-P_d\Delta_{n,d}}_{T_6}
+
\underbrace{
(1-a_n)P_d\Delta_{n,d}}_{T_4}.
\label{eq:T_decomposition_agg}
\end{align}

\emph{The residual term $T_2$.}
The central limit theorem gives
$\sqrt{n_d}T_2\xrightarrow{d}
N(0,\sigma_d^2)$.

\emph{The empirical-process term $T_{3,n}$.}
The empirical-process argument used in
Proposition~\ref{prop:normality_family}
gives
$\sqrt{n_d}T_{3,n}=o_p(1)$.

\emph{The learner-mean term $T_4$.}
Assumption~\ref{assump:learner}
implies
$P_d\Delta_{n,d}=o_p(1)$.
Since $a_n\to a$, substitution into
\eqref{eq:T_decomposition_agg}
establishes part (i).

\emph{The mean-approximation term $T_5$.}
By Assumption~\ref{assump:pointwise}(iii),
$\bar{\mathbf Z}_d-\bar{\mathbf z}_d
=O_p(N_d^{-1/2})$.
The local Lipschitz condition in
Assumption~\ref{assump:pointwise}(ii)
therefore gives
$T_5=O_p(N_d^{-1/2})
=o_p(n_d^{-1/2})$,
using Assumption~\ref{assump:regime}(ii).

\emph{The aggregation-contrast term $T_6$.}
Assumption~\ref{assump:pointwise}(i)
gives
$\sqrt{n_d}T_6=o_p(1)$.

\emph{Conclusion.}
If $a_n\equiv1$, both $T_0$ and $T_4$
vanish exactly. Together with the preceding
bounds and Slutsky's lemma, this proves part (ii).

If $a\in(0,1)$,
Assumption~\ref{assump:learner_mean}
gives
$\sqrt{n_d}(1-a_n)P_d\Delta_{n,d}=o_p(1)$.
After subtracting $(a_n-1)\beta_d$,
all terms except $a_nT_2$ are therefore
negligible at the $\sqrt{n_d}$ scale.
Slutsky's lemma proves part (iii).

Finally, suppose that $a_n\to0$ and
$P_d\Delta_{n,d}=O_p(n^{-\gamma})$
for some $\gamma\in(0,1/2]$.
Then
$a_nT_2=O_p(a_n n_d^{-1/2})$,
$a_nT_{3,n}=o_p(n_d^{-1/2})$,
$T_5=O_p(N_d^{-1/2})$, and
$T_6=o_p(n_d^{-1/2})$.
Moreover,
$(1-a_n)P_d\Delta_{n,d}
=O_p(n^{-\gamma})$.
After subtracting $(a_n-1)\beta_d$,
substitution into
\eqref{eq:T_decomposition_agg}
therefore proves part (iv).

The conclusion for $\mu_d$ follows from
$\mu_d^{\mathrm{agg}}-\mu_d=b_d$.
\end{proof}

%%%%% Bias Lemma
\begin{proof}[Proof of Lemma~\ref{lemma:aggregation_bias}]
Under Lipschitz continuity,
$|b_d|\leq
\kappa_d^*E_{P_d^{\mathbf Z}}
\|\mathbf Z-\bar{\mathbf z}_d\|$.
Cauchy--Schwarz yields the first bound. Under twice continuous differentiability,
a second-order Taylor expansion around
$\bar{\mathbf z}_d$ has a remainder bounded
in absolute value by
$(H_d/2)\|\mathbf Z-\bar{\mathbf z}_d\|^2$.
The expectation of the linear term vanishes,
giving the second bound.
The sign statements follow directly from
Jensen's inequality.
\end{proof}

%%%% Cross-fitted with aggregation normality
\begin{proof}[Proof of Proposition~\ref{prop:normality_cf_agg}]
For $\ell=1,\ldots,L$, write
$w_{d\ell}:=n_{d\ell}/n_d$,
$\Delta_{n,d}^{(-\ell)}
:=\hat g_{n,d}^{(-\ell)}-g_d^*$,
and
$\bar\Delta_{n,d}
:=L^{-1}\sum_{\ell=1}^L
\Delta_{n,d}^{(-\ell)}$. To facilitate comparison with the proof of Proposition~\ref{prop:normality_cf_family},
we retain the same notation for corresponding
terms. The finite-population terms $T_1$ and
$T_{3,N}$ from that proof are absent here because
the aggregated imputation component involves a
single evaluation at $\bar{\mathbf Z}_d$ rather
than a population average. The aggregation
instead introduces the additional terms $T_5$
and $T_6$ below.

Subtracting $\mu_d^{\mathrm{agg}}$ from
\eqref{eq:cf_predictor_a} gives
\begin{align}
&\hat\mu_d^{cf,\mathrm{agg}}(a_n)
-\mu_d^{\mathrm{agg}}
=
\underbrace{(a_n-1)\beta_d}_{T_0}
+
\underbrace{
a_n(P_{n_d}-P_d)\varepsilon_d^*}_{T_2}
-
\underbrace{
a_n\sum_{\ell=1}^L w_{d\ell}
(P_{n_{d\ell}}-P_d)
\Delta_{n,d}^{(-\ell)}}_{T_{3,n}}
\nonumber\\
&\qquad+
\underbrace{
(1-a_n)P_d\bar\Delta_{n,d}
+
a_n\sum_{\ell=1}^L
(1/L-w_{d\ell})
P_d\Delta_{n,d}^{(-\ell)}}_{T_4}
\nonumber\\
&\qquad+
\underbrace{
\bar g_{n,d}(\bar{\mathbf Z}_d)
-\bar g_{n,d}(\bar{\mathbf z}_d)}_{T_5}
+
\underbrace{
\bar\Delta_{n,d}(\bar{\mathbf z}_d)
-P_d\bar\Delta_{n,d}}_{T_6}.
\label{eq:T_decomposition_cf_agg}
\end{align}

\emph{The fold term $T_{3,n}$.}
As in the proof of
Proposition~\ref{prop:normality_cf_family},
conditional on each fold's training data the
evaluation observations are independent of the
fitted learner. Conditional variance control and
fold-specific $L^2$-consistency therefore give
$\sqrt{n_d}T_{3,n}=o_p(1)$, since $L$ is fixed.

\emph{The learner-mean term $T_4$.}
Since
$P_d\bar\Delta_{n,d}
=L^{-1}\sum_{\ell=1}^L
P_d\Delta_{n,d}^{(-\ell)}$,
the term $T_4$ can equivalently be written as $T_4
=
\sum_{\ell=1}^L
(1/L-a_nw_{d\ell})
P_d\Delta_{n,d}^{(-\ell)}$, which is the same learner-mean term as $T_4$
in the proof of
Proposition~\ref{prop:normality_cf_family},
written here in a form adapted to the aggregated
estimator.

Fold-specific $L^2$-consistency implies
$P_d\bar\Delta_{n,d}=o_p(1)$.
Moreover, fold stratification gives
$|1/L-w_{d\ell}|\leq n_d^{-1}$, and hence
$\sum_{\ell=1}^L
(1/L-w_{d\ell})
P_d\Delta_{n,d}^{(-\ell)}
=o_p(n_d^{-1})$ by Cauchy--Schwarz and fixed $L$.
Consequently, $T_4=o_p(1)$.

\emph{The mean-approximation term $T_5$.}
By Assumption~\ref{assump:cf_pointwise}(iii),
$\bar{\mathbf Z}_d-\bar{\mathbf z}_d
=O_p(N_d^{-1/2})$.
The local Lipschitz condition in
Assumption~\ref{assump:cf_pointwise}(ii)
therefore gives
$T_5=O_p(N_d^{-1/2})
=o_p(n_d^{-1/2})$
under Assumption~\ref{assump:regime}(ii).

\emph{The aggregation-contrast term $T_6$.}
Assumption~\ref{assump:cf_pointwise}(i)
gives
$\sqrt{n_d}T_6=o_p(1)$.

\emph{The leading residual term $T_2$.}
The sampled pairs
$(Y_{dj},\mathbf Z_{dj})_{j\in\mathcal Q_d^s}$
are i.i.d.\ $P_d$, so the central limit theorem
gives $\sqrt{n_d}(P_{n_d}-P_d)\varepsilon_d^*
\xrightarrow{d}N(0,\sigma_d^2)$. Since $a_n\to a$, $\sqrt{n_d}T_2
\xrightarrow{d}N(0,a^2\sigma_d^2)$, with the degenerate limit at zero when $a=0$.

\emph{Conclusion.}
The preceding bounds imply
$T_{3,n}=o_p(1)$,
$T_4=o_p(1)$,
$T_5=o_p(1)$, and
$T_6=o_p(1)$, while $T_2=o_p(1)$.
Hence
\eqref{eq:T_decomposition_cf_agg}
establishes part (i).

If $a_n\equiv1$, the first component of $T_4$
vanishes identically, while its fold-imbalance
component is $o_p(n_d^{-1})$.
Thus all terms except $T_2$ are
$o_p(n_d^{-1/2})$, and Slutsky's lemma proves
part (ii) without
Assumption~\ref{assump:cf_learner_mean}.

If $a\in(0,1)$,
Assumption~\ref{assump:cf_learner_mean}
implies
$\sqrt{n_d}P_d\bar\Delta_{n,d}=o_p(1)$.
Together with the fold-imbalance bound, this
gives $\sqrt{n_d}T_4=o_p(1)$.
After subtracting $(a_n-1)\beta_d$,
all terms except $T_2$ are therefore negligible
at the $\sqrt{n_d}$ scale, and Slutsky's lemma
proves part (iii).

Finally, suppose that $a_n\to0$ and
$P_d(\bar g_{n,d}-g_d^*)
=O_p(n^{-\gamma})$ for some
$\gamma\in(0,1/2]$.
Then
$T_2=O_p(a_n n_d^{-1/2})$,
$T_{3,n}=o_p(n_d^{-1/2})$,
$T_5=O_p(N_d^{-1/2})$, and
$T_6=o_p(n_d^{-1/2})$.
The first component of $T_4$ is
$O_p(n^{-\gamma})$, while its fold-imbalance
component is $o_p(n_d^{-1})$.
Substitution into
\eqref{eq:T_decomposition_cf_agg}
therefore proves part (iv).

The conclusion for $\mu_d$ follows from
\eqref{eq:mu_agg}, and the aggregation-bias
bounds follow from
Lemma~\ref{lemma:aggregation_bias}.
\end{proof}

%%%% Proof Corollary 

\begin{proof}[Proof of Corollary~\ref{cor:finite_pop_agg}]
By \eqref{eq:mu_agg},
\begin{align*}
&\hat\mu_d^{(\cdot),\mathrm{agg}}(a_n)
-\bar Y_d-b_d-(a_n-1)\beta_d\\
&\quad=
\{\hat\mu_d^{(\cdot),\mathrm{agg}}(a_n)
-\mu_d^{\mathrm{agg}}
-(a_n-1)\beta_d\}
-(\bar Y_d-\mu_d).
\end{align*}
The first term satisfies the corresponding
aggregation proposition. As in the proof of
Corollary~\ref{cor:finite_pop},
$\bar Y_d-\mu_d=O_p(N_d^{-1/2})$,
and Assumption~\ref{assump:regime}(ii) gives
$\sqrt{n_d}(\bar Y_d-\mu_d)=o_p(1)$.
Hence the convergence in probability and the
limit laws in parts (ii) and (iii) follow from
Slutsky's lemma. For part (iv), the additional
term $\bar Y_d-\mu_d$ is already of order
$O_p(N_d^{-1/2})$ and is therefore absorbed by
the stated rate bound.
\end{proof}

%%% Proof Coroallrt CI aggregation 

\begin{proof}[Proof of Corollary~\ref{cor:ci_agg}]
Aggregation does not alter the sampled residuals.
The variance-consistency argument of
Corollary~\ref{cor:variance} therefore applies
unchanged. Under cross-fitting, it uses only
fold-specific $L^2$-consistency and independence
of training and evaluation observations,
not the stronger sampling-fraction condition.

For either family, direct substitution gives
$\tilde\mu_d^{(\cdot),\mathrm{agg}}(a_n)
=\hat\mu_d^{(\cdot),\mathrm{agg}}(1)$.
Part (ii) of the corresponding aggregation
proposition and Slutsky's lemma therefore give
\eqref{eq:ci_agg}. If $b_d$ is known, subtracting
$b_d$ from the interval centre gives the stated
coverage for $\mu_d$ by \eqref{eq:mu_agg}, and
for $\bar Y_d$ by
Corollary~\ref{cor:finite_pop_agg}.
\end{proof}

\section{Details on the choice of learner}
\label{sec:choice_learner}

Suppose that the population $\mathcal Q$ is also partitioned into $C$
mutually disjoint coarse subpopulations, $C<D$, indexed by
$\mathcal C=\{1,\dots,C\}$, and let $\varrho\colon\mathcal D\to\mathcal C$ be
the map assigning each area to the coarse subpopulation containing it, so
that
\[
\mathcal D_{(c)}:=\{d\in\mathcal D:\varrho(d)=c\}
\]
partitions $\mathcal D$. The coarse subpopulations, their sizes, their
sampled parts and the corresponding sample sizes are
\[
\mathcal Q_{(c)}
=
\bigcup_{d\in\mathcal D_{(c)}}\mathcal Q_d,\;
N_{(c)}
=
\sum_{d\in\mathcal D_{(c)}}N_d,\;
\mathcal Q^s_{(c)}
=
\bigcup_{d\in\mathcal D_{(c)}}\mathcal Q^s_d,\;
n_{(c)}
=
\sum_{d\in\mathcal D_{(c)}}n_d,
\]
with the nonsampled parts defined analogously.

At the coarse level, the target and its direct estimators are
\begin{equation}
\label{eq:target_parameter_region}
\bar y_{(c)}
=
\frac{1}{N_{(c)}}
\sum_{d\in\mathcal D_{(c)}}
\sum_{j\in\mathcal Q_d}y_{dj},
\;
\widehat{\bar y}^{\,ht}_{(c)}
=
\frac{1}{N_{(c)}}
\sum_{d\in\mathcal D_{(c)}}
\sum_{j\in\mathcal Q_d^s}w_{dj}y_{dj},
\;
\widehat{\bar y}^{\,h}_{(c)}
=
\frac{1}{\widehat N_{(c)}}
\sum_{d\in\mathcal D_{(c)}}
\sum_{j\in\mathcal Q_d^s}w_{dj}y_{dj},
\end{equation}
with
\[
\widehat N_{(c)}
=
\sum_{d\in\mathcal D_{(c)}}\widehat N_d,
\qquad
\widehat N_d
=
\sum_{j\in\mathcal Q_d^s}w_{dj}.
\]
The variance of the direct estimators in
\eqref{eq:target_parameter_region} is of order $n_{(c)}^{-1}$ rather than
$n_d^{-1}$, so that they are more stable than their area-level counterparts
when the coarse subpopulations are chosen large enough, and the choice of
learner may be based on the closeness of the aggregated small area estimates
to them. Let
\begin{equation*}
\hat\mu^{(\cdot)}_{+(c)}
=
\sum_{d\in\mathcal D_{(c)}}
\tilde W_{cd}\,
\hat\mu^{(\cdot)}_d,
\qquad
\tilde W_{cd}
=
\frac{N_d}{N_{(c)}}
\quad\text{or}\quad
\tilde W_{cd}
=
\frac{\widehat N_d}{\widehat N_{(c)}},
\end{equation*}
denote the aggregated estimator, the weights summing to one over
$\mathcal D_{(c)}$.

Consider a library of $M$ candidate learners
$\hat g^1_{n,d},\dots,\hat g^M_{n,d}$, providing alternative working
prediction functions in \eqref{eq:expected_decompose}, each obtained by a
different method, such as a linear mixed model, a random forest, or gradient
boosting, and write $\hat\mu^{(\cdot)}_{+(c)}(\hat g^l_{n,d})$ for the
corresponding aggregated estimator, where $(\cdot)$ denotes any of the
estimators of Sections~\ref{sec:salearners} and \ref{sec:sal_agg}. The
selected learner is
\begin{equation}
\label{eq:learner_selection}
\hat g^o_{n}
=
\arg\min_{l\in\{1,\ldots,M\}}
\frac{1}{C}
\sum_{c=1}^C
\mathcal L
\left\{
\widehat{\bar y}^{\,(\cdot)}_{(c)},
\hat\mu^{(\cdot)}_{+(c)}(\hat g^l_{n,d})
\right\},
\end{equation}
where $\mathcal L$ is a distance of choice and
$\widehat{\bar y}^{\,(\cdot)}_{(c)}$ is either direct estimator in
\eqref{eq:target_parameter_region}. In the application, we use the absolute
distance and weight the coarse subpopulations equally; weighting them by
$N_{(c)}/N$ instead would match the population composition.

Direct estimators thus play two distinct roles in the paper. In
Section~\ref{sec:direct_estimation}, they serve as design-based benchmarks
against which model- and learning-based estimates are compared; in
\eqref{eq:learner_selection}, they serve as validation targets for selecting
among learners. The second role requires care. A direct estimator used for
validation is itself subject to sampling error, so that
\eqref{eq:learner_selection} compares an estimate with an estimate rather
than with a known quantity. It is also computed from the same sampled units
as the small area learners it is used to assess, so that the two are
statistically dependent and the comparison is not an out-of-sample
assessment. Aggregation to coarse subpopulations mitigates the first
difficulty, since $n_{(c)}\gg n_d$, but not the second.

The pointwise inferential results established above continue to apply after
learner selection provided that the selected procedure itself satisfies
Assumption~\ref{assump:learner}, or Assumption~\ref{assump:cf_learner} under
cross-fitting. This holds, for example, if the selection probability
converges to one for a candidate satisfying the relevant conditions;
otherwise, the additional randomness induced by learner selection is not
covered by the preceding theory.

\section{Data-generating mechanism}
\label{app:generation}

This appendix describes how the populations and samples of Section~\ref{sec:simulations} are generated. The mechanism is designed to reproduce the main structural features of Sections~\ref{sec:super_pop}--\ref{sec:foundations}
and to govern each source of between-area heterogeneity through a single interpretable parameter.

\subsection{Monte Carlo protocol}
\label{app:gen_protocol}

The study is conducted within the superpopulation framework of
Section~\ref{sec:super_pop}. Each Monte Carlo replication draws a fresh
finite population from the area laws $P_d$ and then a fresh sample from that
population, so that the finite-population targets $\bar y_d$ vary across
replications together with the realized area components. Estimators are
scored against the realized $\bar y_d$ of the replication in which they are
computed. The assumptions used for the within-area analysis
hold conditionally on the realized area-specific components, as discussed
after Assumption~\ref{assump:iid}. The Monte Carlo summaries reported in
Section~\ref{sec:simulations}, however, average over their repeated
generation unless explicitly stated otherwise.

\subsection{Covariates}
\label{app:gen_covariates}

Every area $d\in\mathcal D$ receives a vector of $q$ contextual covariates
$\mathbf u_d$, constant across the units of the area, so that the
$\mathbf U$-marginal of $P_d^{\mathbf Z}$ is degenerate as required in
Section~\ref{sec:super_pop}. With $q=2$ and
$\varepsilon_{d1},\varepsilon_{d2}\stackrel{\text{iid}}{\sim}N(0,1)$,
\begin{equation}
\label{eq:app_ctx}
u_{d1}=\psi\,\varepsilon_{d1},
\qquad
u_{d2}=\psi\left(0.5\,u_{d1}+\varepsilon_{d2}\right),
\end{equation}
where $\psi\ge 0$ governs the strength of covariate shift.

The unit-level covariates are generated with an area-specific location and an
area-specific dependence parameter, the two channels through which
$P_d^{\mathbf Z}$ differs across areas. For $k=1,\dots,p$,
\begin{equation}
\label{eq:app_mu}
\mu_{dk}=\psi\left(\nu_{dk}+0.3\,u_{d1}+0.2\,u_{d2}\right),
\qquad
\nu_{dk}\stackrel{\text{iid}}{\sim}N(0,0.4^2),
\qquad
\gamma_d=\tau\,\mathrm{Unif}(0.8,1.2),
\end{equation}
with $\tau\ge 0$ controlling the strength of the nonlinear dependence among
covariates. Given $(\boldsymbol\mu_d,\gamma_d)$, latent vectors
$\mathbf V_{dj}\sim N(\boldsymbol\mu_d,\mathbf I_p)$ are drawn independently
across $j$ and transformed recursively,
\begin{equation}
\label{eq:app_W}
\begin{aligned}
W_{dj1}&=V_{dj1}, &
W_{dj2}&=V_{dj2}+0.25\,\gamma_d W_{dj1}^{2}, \\
W_{dj3}&=V_{dj3}+0.40\,\gamma_d \sin(W_{dj1}), &
W_{dj4}&=V_{dj4}+0.20\,\gamma_d W_{dj1}W_{dj2}, \\
W_{dj5}&=V_{dj5}+0.30\,\gamma_d \cos(W_{dj2}), &
W_{djk}&=V_{djk}+0.15\,\gamma_d \sin(W_{dj1}W_{dj,k-1}),\quad k=6,\dots,p .
\end{aligned}
\end{equation}
The recursion makes $P_d^{\mathbf Z}$ non-Gaussian and dependent, with a
dependence structure that itself varies across areas; $\tau=0$ returns
independent Gaussian covariates. Conditionally on
$(\boldsymbol\mu_d,\gamma_d,\mathbf u_d)$, the population pairs of an area are
independent and identically distributed, as required by
Assumption~\ref{assump:iid}. Throughout we set $p=10$ and $q=2$; only
$W_1,\dots,W_5$ and $u_1,u_2$ enter the regression function below, so that
$W_6,\dots,W_{10}$ act as noise covariates which the learners must screen out.

\subsection{Shared regression function}
\label{app:gen_m0}

The shared component of Definition~\ref{def:constraint_generic} is one of
three functions of increasing complexity: ``simple'' in \eqref{eq:app_m0_simple}, ``moderate'' in \eqref{eq:app_m0_moderate} and ``high'' in \eqref{eq:app_m0_high}:
\begin{align}
\tilde m_0(\mathbf z)&=0.8w_1-0.5w_2+0.3u_1,
\label{eq:app_m0_simple}\\[2pt]
\tilde m_0(\mathbf z)&=\sin(w_1)+0.5w_2^{2}-0.4w_3+0.3w_1w_2+0.3u_1-0.2u_2,
\label{eq:app_m0_moderate}\\[2pt]
\tilde m_0(\mathbf z)&=1.2\sin(w_1w_2)+0.7\cos(w_3)
+\frac{0.4\,w_4^{2}}{1+|w_1|}+0.3\tanh(w_5u_1)+0.2\,u_2w_2 .
\label{eq:app_m0_high}
\end{align}
All three are smooth; none contains jumps or axis-aligned thresholds, so no
specification is constructed to favour a particular class of learner.

\subsection{Area deviations}
\label{app:gen_deviations}

Deviations act additively as in \eqref{eq:generic_m}, and are expanded in a
basis $\mathbf g=(g_1,\dots,g_K)^{\top}$ common to all areas,
\begin{equation}
\label{eq:app_delta}
\delta_d(\mathbf z)=\sum_{k=1}^{K}b_{dk}\,g_k(\mathbf z),
\end{equation}
which is what makes Assumption~\ref{assump:constraint} exactly enforceable
rather than approximate. Two forms are used. Under \emph{location}
deviations, $K=1$ and $g_1\equiv 1$, so that $\delta_d$ is constant in
$\mathbf z$ and the specification reduces to the random intercept of the
nested-error model of Section~\ref{sec:nerm}. Under \emph{functional}
deviations, $K=3$ with
\begin{equation}
\label{eq:app_basis}
\mathbf g(\mathbf z)=
\begin{cases}
\left(1,\;w_1,\;u_1\right)^{\top}, & \tilde m_0 \text{ as in }\eqref{eq:app_m0_simple},\\
\left(1,\;\sin(w_1),\;w_2u_1\right)^{\top}, & \tilde m_0 \text{ as in }\eqref{eq:app_m0_moderate},\\
\left(1,\;\sin(w_1w_2),\;\cos(w_3+u_1)\right)^{\top}, & \tilde m_0 \text{ as in }\eqref{eq:app_m0_high},
\end{cases}
\end{equation}
so that $\delta_d$ varies with $\mathbf z$ and cannot be represented by an
area intercept.

Raw coefficients are drawn independently, $\tilde b_{dk}\sim N(0,s_k^{2})$,
and then subjected to the two requirements of
Assumption~\ref{assump:constraint}. The identifiability normalisation
$\sum_{d}\omega_d\delta_d=0$ holds if and only if each coefficient is centred
with the area weights, the basis functions being linearly independent, so we
set
\begin{equation}
\label{eq:app_centering}
\check b_{dk}=\tilde b_{dk}-\sum_{e\in\mathcal D}\omega_e\tilde b_{ek},
\qquad
\omega_d=N_d/N .
\end{equation}
The complexity functional is the $L_2$ norm under the area's own covariate
law, $\rho^2(\delta_d)=\int\delta_d^2\,dP_d^{\mathbf Z}$, evaluated on the
realized population of the area; for \eqref{eq:app_delta} it is the quadratic
form $\check{\mathbf b}_d^{\top}\widehat{\mathbf G}_d\check{\mathbf b}_d$
with $\widehat{\mathbf G}_d=N_d^{-1}\sum_{j\in\mathcal Q_d}
\mathbf g(\mathbf z_{dj})\mathbf g(\mathbf z_{dj})^{\top}$. The budget of
\eqref{eq:constraint} is then imposed exactly by the single rescaling
\begin{equation}
\label{eq:app_budget}
b_{dk}=\kappa\,\check b_{dk},\;
\kappa=
\Big\{
(C_D/\varsigma^2)
\Big/
\textstyle\sum_{e\in\mathcal D}
\check{\mathbf b}_e^\top
\widehat{\mathbf G}_e\check{\mathbf b}_e
\Big\}^{1/2},
\;\text{so that}\;
\sum_{d\in\mathcal D}
\widehat\rho_d^2(\varsigma\delta_d)=C_D.
\end{equation}
Because $\kappa$ is common to all areas, the rescaling preserves the centring
\eqref{eq:app_centering}. Only the ratios $s_1:s_2:s_3$ affect the generated
population, the overall scale being fixed by \eqref{eq:app_budget}: the
choice $s=(0,1,1)$ removes the location component of the deviation entirely,
so that $\delta_d$ varies within an area, whereas $s=(1,0,0)$ reproduces a
pure location shift.

\begin{remark}[Calibration of the budget]
\label{rem:app_budget}
Let $C_D$ denote the total squared complexity of the
deviations after application of the outcome-scale factor $\varsigma$. Since
the effective error standard deviation is $\varsigma\sigma_e$, we define $r
=(C_D/D)/(C_D/D+\varsigma^2\sigma_e^2)$ Solving for $C_D$ gives $C_D
=(D\varsigma^2\sigma_e^2r)/(1-r)$
Thus, $C_D/D$ is the average mean-square magnitude of the
deviations on the scale of the outcome. The parameter $r$ measures relative
deviation-to-noise heterogeneity and has the usual variance-ratio
form of an intraclass correlation for location deviations. Because
both the deviations and the errors are multiplied by $\varsigma$, the value
of $r$ is unaffected by the outcome-scale transformation.
\end{remark}

\subsection{Outcomes}
\label{app:gen_outcomes}

Outcomes are generated as
\begin{equation}
\label{eq:app_outcome}
Y_{dj}=m_d(\mathbf Z_{dj})+e_{dj},
\quad
m_d(\mathbf z)=\beta_0+\varsigma\left\{\tilde m_0(\mathbf z)+\delta_d(\mathbf z)\right\},
\quad
e_{dj}=\varsigma\,\tilde e_{dj},
\end{equation}
where $\beta_0$ sets the level of the outcome and $\varsigma>0$ its scale.
Since $\tilde m_0$, $\delta_d$ and $\tilde e$ are multiplied by the same
factor, every signal-to-noise ratio, including the intraclass correlation
$r
=(C_D/D)/(C_D/D+\varsigma^2\sigma_e^2)$, is left exactly unchanged by $\varsigma$; only the
units of $Y$ change. The two parameters are used to place the area means in a
range appropriate to a positive target such as log income, and are reported
with the scenarios in Table~\ref{tab:app_scenarios}.

The errors $\tilde e_{dj}$ are mean-zero and sub-Gaussian with a parameter
common to all areas, so that the error law belongs to the shared component
$\boldsymbol\eta$ of Assumption~\ref{assump:structured-shift} rather than to
$\delta_d$. Two laws are used, both standardised to standard deviation
$\sigma_e$: a Gaussian law, and a two-component scale mixture in which the
scale is inflated by a factor $2$ with probability $0.1$.

\subsection{Sampling design}
\label{app:gen_sampling}

Within each area, $n_d$ of the $N_d$ population units are selected by simple
random sampling without replacement, independently across areas. The design
is noninformative given the covariates and self-weighting within areas, with
$p_d=n_d/N_d$ and weights $w_{dj}=N_d/n_d$ constant within an area and free
to differ across areas, as required by Assumption~\ref{assump:design}. The
direct estimator \eqref{eq:Hajek_HT} therefore reduces to the unweighted
sample mean of area $d$.

\section{Simulation scenarios and design}
\label{app:design}

\subsection{Scenarios}
\label{app:design_scenarios}

Five scenarios are considered. The first three form a ladder from a correctly
specified nested-error model to a specification in which every channel the
framework admits is stressed simultaneously; the last two repeat the shape of
the third at extreme settings of the heterogeneity and the noise. Their
parameter values, in the notation of Appendix~\ref{app:generation}, are
collected in Table~\ref{tab:app_scenarios}.

\begin{table}[htbp]
\centering\small
\caption{Parameter values defining the five scenarios. Covariate dimensions
are $p=10$ and $q=2$ throughout.}
\label{tab:app_scenarios}
\begin{tabular}{lccccc}
\hline
& S1 & S2 & S3 & S4 & S5 \\
\hline
shared function $\tilde m_0$ & \eqref{eq:app_m0_simple} & \eqref{eq:app_m0_moderate}
    & \eqref{eq:app_m0_high} & \eqref{eq:app_m0_high}
    & \eqref{eq:app_m0_high} \\
deviation type               & location & functional & functional & functional & functional \\
coefficient ratios $s$       & ---      & $(0.4,0.3,0.3)$ & $(0,1,1)$ & $(0,1,1)$ & $(0,1,1)$ \\
covariate shift $\psi$       & $1$      & $1$   & $1.5$ & $1.5$ & $1.5$ \\
dependence $\tau$            & $0$      & $1$   & $1.5$ & $1.5$ & $1.5$ \\
heterogeneity $r$            & $0.20$   & $0.30$& $0.50$& $0.30$& $0.70$ \\
noise $\sigma_e$             & $1$      & $1$   & $1$   & $0.25$& $1$ \\
error law                    & Gaussian & Gaussian & mixture & mixture & mixture \\
level $\beta_0$              & $10$     & $10$  & $10$  & $10$  & $10$ \\
scale $\varsigma$            & $1$      & $1$   & $0.3$ & $1$   & $1$ \\
\hline
deviation size $\sqrt{C_D/D}$ & $0.50$ & $0.65$ & $0.30$ & $0.16$ & $1.53$ \\
\hline
\end{tabular}
\end{table}

\emph{Scenario S1} reproduces the conditional mean and
random-intercept structure of the nested-error model of
Section~\ref{sec:nerm}: with $\tau=0$ the covariates are independent Gaussian
vectors, $\tilde m_0$ is linear, and the deviation is a location shift, so
that
\begin{equation}
\label{eq:app_s1}
Y_{dj}=\beta_0+0.8W_{dj1}-0.5W_{dj2}+0.3u_{d1}+a_d+e_{dj},
\qquad
\mathrm{sd}(a_d)=\sqrt{C_D/D}.
\end{equation}
The scenario is a reference point at which the classical
conditional mean specification is correctly specified and any flexible learner must pay a price for its generality.

\emph{Scenario S2} moves one step along each channel: $\tilde m_0$ is smooth
but nonlinear, the covariates are nonlinearly dependent, and the deviation
retains a location component alongside a functional one, so that the
nested-error model captures part of the area heterogeneity and misses the
remainder.

\emph{Scenario S3} is adversarial for the classical specification. The
regression function is strongly nonlinear, with interactions between
unit-level and contextual covariates; the areas occupy visibly different
regions of the covariate space, so that pooling requires genuine adjustment
and any area-wise fit must extrapolate; the deviation carries no location
component, so that a random intercept cannot represent it; the deviations are
as large as the noise; and the errors are non-Gaussian. Diagnostics on the
generated populations confirm the intended structure: the deviation varies
more within an area, with standard deviation $0.91$, than across area means,
with standard deviation $0.41$, which is precisely the configuration a random
intercept is unable to capture.

\emph{Scenarios S4 and S5} retain the shape of S3 and vary only the size of
the deviation relative to the noise, which is the quantity governing whether
the residual correction of estimators in Definition~\ref{def:SAL_family} pays. S4 sets
$r=0.30$ with $\sigma_e=0.25$, so that both the deviation and the noise are
small in absolute terms; S5 sets $r=0.70$ with $\sigma_e=1$, so that both are
large. The two therefore bracket S3 from either side while holding the
regression function, the covariate shift and the error law fixed, and they
isolate the ratio $\beta_d^2/(\sigma_d^2/n_d)$ from the other channels of the
design.

The level $\beta_0$ and scale $\varsigma$ place the area means in a range
appropriate to a positive target such as log income and, as noted in
Appendix~\ref{app:gen_outcomes}, leave all signal-to-noise ratios unchanged;
$\varsigma=0.3$ in S3 compensates for the larger spread induced by the
strongly nonlinear regression function. Because $\varsigma$ and the deviation
size differ across scenarios by construction, absolute errors are not
comparable between them and all comparisons are made within a scenario.

\subsection{Design cells and asymptotic regimes}
\label{app:design_cells}

Each scenario is run on a $2\times 2$ factorial in the number of areas and
the area sample size, with the population size held fixed at $N_d=1000$: $D\in\{25,50\}$, $n_d\in\{10,20\}$
%\begin{equation}
%\label{eq:app_cells}
%D\in\{25,50\},
%\qquad
%n_d\in\{10,20\},
%\end{equation}
giving twenty setups in total, the four cells being listed in
Table~\ref{tab:app_cells}. $500$ Monte Carlo replications are generated for each setup. The two asymptotic regimes of Section~\ref{sec:salearners} are read
off as margins of this grid: the behaviour as the area sample size grows is
obtained by comparing $n_d=10$ with $n_d=20$ at fixed $D$, and the behaviour
as the number of areas grows by comparing $D=25$ with $D=50$ at fixed $n_d$.
Holding $r$, rather than the total budget $C_D$, fixed across cells, as in Remark~\ref{rem:app_budget}, ensures that a change in $D$ alters the number of areas without changing the average amount of heterogeneity
carried by each area.

\begin{table}[htbp]
\centering
\caption{The four design cells, each run under all five scenarios.
Sampling fractions $n_d/N_d$ are one or two per cent.}
\label{tab:app_cells}
\begin{tabular}{lccccc}
\hline
cell & $D$ & $N_d$ & $n_d$ & $N$ & $n$ \\
\hline
$D25\_n10$ & 25 & 1000 & 10 & 25\,000 & 250 \\
$D25\_n20$ & 25 & 1000 & 20 & 25\,000 & 500 \\
$D50\_n10$ & 50 & 1000 & 10 & 50\,000 & 500 \\
$D50\_n20$ & 50 & 1000 & 20 & 50\,000 & 1000 \\
\hline
\end{tabular}
\end{table}

\subsection{Estimators, learners and fitting schemes}
\label{app:design_estimators}

Three fitting schemes are compared: the estimators of
Definitions~\ref{def:SAL_family} and \ref{def:SAL_agg},
which use no sample splitting; and the cross-fitted estimators of Definitions~\ref{def:SAL_cf_family} and \ref{def:SAL_agg_cf} with $L=5$ folds, partitioning the pooled sampled index set $\mathcal Q^s$ either within each area, so that every training set holds $n_d(L-1)/L$ units from every area, or without regard to area. Both schemes define computable estimators, but the stated cross-fitting propositions assume stratified folds. The pooled scheme is included as an empirical comparison. Under the pooled scheme the number of area-$d$ units in a
training set is random and the intersection
$\mathcal Q_d^s\cap\mathcal I_\ell$ may be empty for some
pairs $(d,\ell)$; this does not prevent computation of the estimator, the folds still partitioning $\mathcal Q_d^s$, so that the double sum in Definition~\ref{def:SAL_cf_family} always contains exactly $n_d$ terms.

The nuisance function is learnt on the full covariate vector
$\mathbf x=(\mathbf z^{\top},d)^{\top}$, so that the area indicator is among
the learner's features and the learner may estimate $\delta_d$ itself. The
library consists of a linear model, a regression spline with one interior knot
per continuous covariate, boosted trees of depth two, a random forest of depth
three, BART with a regularising prior, a linear mixed model with a random area
intercept, M-quantile regression and a mixed-effects random forest. Under
cross-fitting it is extended by unrestricted gradient boosting, an
unrestricted random forest and unrestricted BART. Tuning parameters are fixed
in advance and are not selected on the data, so that the low-complexity
members of the library remain within a class of the kind of
Definition~\ref{def:low_complexity}.

\subsection{Evaluation}
\label{app:design_evaluation}

For each setup, $R$ replications are generated. Writing $\bar y_d^{(r)}$ for the realized finite-population mean of area $d$ in replication $r$ and $\widehat{\bar y}_d^{(r)}$ for an estimate of it, all summaries are computed area by area and then averaged over areas, so that between-area variation in $\bar y_d$ does not enter the bias and variance measures. With
$\mathrm{B}_d=R^{-1}\sum_r(\widehat{\bar y}_d^{(r)}-\bar y_d^{(r)})$ we report
\begin{equation}
\label{eq:app_metrics}
\begin{aligned}
\text{absolute bias} &= \frac{1}{D}\sum_{d}\left|\mathrm{B}_d\right|, &
\text{relative bias} &= \frac{1}{D}\sum_{d}
\left|\frac{\mathrm{B}_d}{R^{-1}\sum_r\bar y_d^{(r)}}\right|, \\[4pt]
\text{MSE} &= \frac{1}{RD}\sum_{r}\sum_{d}
\left(\widehat{\bar y}_d^{(r)}-\bar y_d^{(r)}\right)^{2}, &
\text{MC variance} &= \frac{1}{D}\sum_{d}
\widehat{\operatorname{Var}}_r\big(\widehat{\bar y}_d^{(r)}-\bar y_d^{(r)}\big),
\end{aligned}
\end{equation}
together with the empirical coverage of the intervals of
Corollary~\ref{cor:variance}, that is the proportion of pairs $(r,d)$ for
which $\bar y_d^{(r)}$ falls inside the interval. The Monte Carlo variance is
centred on the error rather than on the estimate, since each replication
regenerates the population and $\bar y_d$ therefore varies across
replications.

Intervals are computed both with the normal quantile of \eqref{eq:ci} and with
the quantile of the $t_{n_d-1}$ distribution. The latter has the same
asymptotic justification, since $t_{n_d-1}\to N(0,1)$, and is reported as a
finite-sample refinement appropriate to $n_d\in\{10,20\}$; it is not exact,
because the residuals entering $\hat\sigma_d^2$ are computed from an estimated
$g^{*}_d$.

The consistency of the variance estimator asserted in Corollary~\ref{cor:variance}
is assessed in two ways. The first compares
$\hat\sigma_d^2$ with $n_d$ times the realized sampling variance of the
estimator, the ratio tending to one. The second compares
it with the variance of the residual $Y-\hat g_{n,d}(\mathbf Z)$ over all
$N_d$ population units of the area, computed with the same fitted learner;
since $N_d\gg n_d$ this is essentially free of sampling error and isolates the
variability of $\hat\sigma_d^2$ from the question of what $g^*_d$ is.

\subsection{Complementary results} \label{sec:app}

\begin{table}[htbp]
\centering\small
\caption{Scenario S1: the 3 combinations of learner, fitting scheme and estimator with the smallest RMSE for $\bar y_d$ in each design cell, followed by the best fully corrected member where none ranks among them, and by the direct estimator. Absolute bias and RMSE are multiplied by 100; the gap is the percentage excess in RMSE over the best row of the cell. Coverage is of nominal $95\%$ intervals with the $t_{n_d-1}$ quantile; for \texttt{sae\_naive} it refers to the interval centred at the uncorrected predictor $\hat\mu_d(f_d)$ with half-width $t_{n_d-1,\,0.975}f_d\hat\sigma_d n_d^{-1/2}$, which is not covered by Corollary~\ref{cor:variance}.} %\textcolor{red}{30.54171 36.17383}
\label{tab:top3_S1}
\begin{tabular}{llllrrrr}
\toprule
cell & scheme & learner & estimator & abs.\ bias & RMSE & gap (\%) & coverage \\
\midrule
S1\_D25\_n10 & sal & lmm & sae\_naive & 1.05 & 28.33 & 0.0 & 0.020 \\
S1\_D25\_n10 & cf\_strat & lmm & sae\_naive & 1.08 & 28.55 & 0.8 & 0.021 \\
S1\_D25\_n10 & cf\_pooled & lmm & sae\_naive & 1.08 & 28.60 & 0.9 & 0.022 \\ \cmidrule(lr){2-8}
S1\_D25\_n10 & sal & boosted\_trees & sal & 1.12 & 31.07 & 9.7 & 0.943 \\
S1\_D25\_n10 & --- & --- & direct & 1.48 & 42.88 & 51.4 & 0.952 \\ \midrule
S1\_D25\_n20 & sal & lmm & sae\_naive & 0.77 & 21.01 & 0.0 & 0.034 \\
S1\_D25\_n20 & cf\_strat & lmm & sae\_naive & 0.79 & 21.17 & 0.8 & 0.036 \\
S1\_D25\_n20 & cf\_pooled & lmm & sae\_naive & 0.78 & 21.18 & 0.8 & 0.036 \\ \cmidrule(lr){2-8}
S1\_D25\_n20 & sal & mquantile & sal & 0.73 & 22.10 & 5.2 & 0.951 \\
S1\_D25\_n20 & --- & --- & direct & 0.68 & 30.24 & 44.0 & 0.951 \\ \midrule
S1\_D50\_n10 & sal & lmm & sae\_naive & 0.91 & 27.50 & 0.0 & 0.022 \\
S1\_D50\_n10 & cf\_strat & lmm & sae\_naive & 0.91 & 27.64 & 0.5 & 0.022 \\
S1\_D50\_n10 & cf\_pooled & lmm & sae\_naive & 0.91 & 27.67 & 0.6 & 0.022 \\ \cmidrule(lr){2-8}
S1\_D50\_n10 & sal & mquantile & sal & 1.14 & 31.52 & 14.6 & 0.949 \\
S1\_D50\_n10 & --- & --- & direct & 1.73 & 43.09 & 56.7 & 0.952 \\ \midrule
S1\_D50\_n20 & sal & lmm & sae\_naive & 0.68 & 20.62 & 0.0 & 0.035 \\
S1\_D50\_n20 & cf\_strat & lmm & sae\_naive & 0.66 & 20.73 & 0.5 & 0.037 \\
S1\_D50\_n20 & cf\_pooled & lmm & sae\_naive & 0.67 & 20.74 & 0.6 & 0.037 \\ \cmidrule(lr){2-8}
S1\_D50\_n20 & sal & mquantile & sal & 0.79 & 22.12 & 7.3 & 0.952 \\
S1\_D50\_n20 & --- & --- & direct & 1.02 & 30.57 & 48.3 & 0.952 \\
\bottomrule
\end{tabular}
\end{table}

\begin{table}[htbp]
\centering\small
\caption{Scenario S3: the 3 combinations of learner, fitting scheme and estimator with the smallest RMSE for $\bar y_d$ in each design cell, followed by the best fully corrected member where none ranks among them, and by the direct estimator. Absolute bias and RMSE are multiplied by 100; the gap is the percentage excess in RMSE over the best row of the cell. Coverage is of nominal $95\%$ intervals with the $t_{n_d-1}$ quantile; for \texttt{sae\_naive} it refers to the interval centred at the uncorrected predictor $\hat\mu_d(f_d)$ with half-width $t_{n_d-1,\,0.975}f_d\hat\sigma_d n_d^{-1/2}$, which is not covered by Corollary~\ref{cor:variance}.}%\textcolor{red}{38.68428 42.69068}.
\label{tab:top3_S3}
\begin{tabular}{llllrrrr}
\toprule
cell & scheme & learner & estimator & abs.\ bias & RMSE & gap (\%) & coverage \\
\midrule
S3\_D25\_n10 & cf\_strat & bart & sae\_naive & 0.99 & 16.30 & 0.0 & 0.023 \\
S3\_D25\_n10 & cf\_pooled & bart & sae\_naive & 1.00 & 16.34 & 0.3 & 0.023 \\
S3\_D25\_n10 & cf\_pooled & bart\_flexible & sae\_naive & 0.83 & 16.40 & 0.6 & 0.023 \\ \cmidrule(lr){2-8}
S3\_D25\_n10 & sal & spline & sal & 0.80 & 17.21 & 5.6 & 0.917 \\
S3\_D25\_n10 & --- & --- & direct & 0.90 & 26.59 & 63.1 & 0.948 \\ \midrule
S3\_D25\_n20 & cf\_strat & bart\_flexible & sae\_naive & 0.70 & 11.25 & 0.0 & 0.040 \\
S3\_D25\_n20 & cf\_pooled & bart\_flexible & sae\_naive & 0.71 & 11.29 & 0.4 & 0.044 \\
S3\_D25\_n20 & sal & spline & sae\_naive & 0.42 & 11.59 & 3.0 & 0.031 \\ \cmidrule(lr){2-8}
S3\_D25\_n20 & sal & spline & sal & 0.42 & 11.59 & 3.0 & 0.934 \\
S3\_D25\_n20 & --- & --- & direct & 0.79 & 19.26 & 71.2 & 0.948 \\ \midrule
S3\_D50\_n10 & cf\_strat & bart\_flexible & sae\_naive & 0.83 & 15.44 & 0.0 & 0.022 \\
S3\_D50\_n10 & cf\_pooled & bart\_flexible & sae\_naive & 0.82 & 15.57 & 0.8 & 0.023 \\
S3\_D50\_n10 & cf\_pooled & bart & sae\_naive & 0.88 & 16.01 & 3.7 & 0.024 \\ \cmidrule(lr){2-8}
S3\_D50\_n10 & sal & spline & sal & 0.60 & 16.60 & 7.5 & 0.935 \\
S3\_D50\_n10 & --- & --- & direct & 0.88 & 26.94 & 74.5 & 0.946 \\ \midrule
S3\_D50\_n20 & cf\_pooled & bart\_flexible & sae\_naive & 0.59 & 10.75 & 0.0 & 0.037 \\
S3\_D50\_n20 & cf\_strat & bart\_flexible & sae\_naive & 0.61 & 10.76 & 0.1 & 0.037 \\
S3\_D50\_n20 & cf\_pooled & gradient\_boosting & sae\_naive & 0.64 & 11.40 & 6.1 & 0.043 \\ \cmidrule(lr){2-8}
S3\_D50\_n20 & sal & spline & sal & 0.39 & 11.43 & 6.3 & 0.943 \\
S3\_D50\_n20 & --- & --- & direct & 0.66 & 18.35 & 70.7 & 0.949 \\
\bottomrule
\end{tabular}
\end{table}

\begin{table}[htbp]
\centering\small
\caption{Scenario S4: the 3 combinations of learner, fitting scheme and estimator with the smallest RMSE for $\bar y_d$ in each design cell, followed by the best fully corrected member where none ranks among them, and by the direct estimator. Absolute bias and RMSE are multiplied by 100; the gap is the percentage excess in RMSE over the best row of the cell. Coverage is of nominal $95\%$ intervals with the $t_{n_d-1}$ quantile; for \texttt{sae\_naive} it refers to the interval centred at the uncorrected predictor $\hat\mu_d(f_d)$ with half-width $t_{n_d-1,\,0.975}f_d\hat\sigma_d n_d^{-1/2}$, which is not covered by Corollary~\ref{cor:variance}.}%\textcolor{red}{53.06306 62.73000}}.
\label{tab:top3_S4}
\begin{tabular}{llllrrrr}
\toprule
cell & scheme & learner & estimator & abs.\ bias & RMSE & gap (\%) & coverage \\
\midrule
S4\_D25\_n10 & cf\_strat & spline & sae\_naive & 2.98 & 36.43 & 0.0 & 0.021 \\
S4\_D25\_n10 & sal & spline & sal & 2.68 & 36.45 & 0.1 & 0.912 \\
S4\_D25\_n10 & sal & spline & sal\_a\_bc & 2.68 & 36.45 & 0.1 & 0.912 \\ \cmidrule(lr){2-8}
S4\_D25\_n10 & --- & --- & direct & 2.67 & 77.62 & 113.1 & 0.943 \\ \midrule
S4\_D25\_n20 & cf\_strat & bart\_flexible & sae\_naive & 1.96 & 22.14 & 0.0 & 0.042 \\
S4\_D25\_n20 & cf\_pooled & bart\_flexible & sae\_naive & 2.00 & 22.24 & 0.4 & 0.043 \\
S4\_D25\_n20 & sal & spline & sae\_naive & 1.11 & 24.66 & 11.4 & 0.032 \\ \cmidrule(lr){2-8}
S4\_D25\_n20 & sal & spline & sal & 1.11 & 24.66 & 11.4 & 0.930 \\
S4\_D25\_n20 & --- & --- & direct & 2.35 & 56.60 & 155.6 & 0.942 \\ \midrule
S4\_D50\_n10 & cf\_strat & bart\_flexible & sae\_naive & 2.33 & 32.54 & 0.0 & 0.023 \\
S4\_D50\_n10 & cf\_pooled & bart\_flexible & sae\_naive & 2.36 & 32.59 & 0.1 & 0.022 \\
S4\_D50\_n10 & sal & spline & sae\_naive & 1.62 & 35.40 & 8.8 & 0.016 \\ \cmidrule(lr){2-8}
S4\_D50\_n10 & sal & spline & sal & 1.62 & 35.40 & 8.8 & 0.933 \\
S4\_D50\_n10 & --- & --- & direct & 2.55 & 77.14 & 137.0 & 0.940 \\ \midrule
S4\_D50\_n20 & cf\_strat & bart\_flexible & sae\_naive & 1.51 & 20.22 & 0.0 & 0.043 \\
S4\_D50\_n20 & cf\_pooled & bart\_flexible & sae\_naive & 1.50 & 20.32 & 0.5 & 0.042 \\
S4\_D50\_n20 & cf\_pooled & spline & sae\_naive & 0.94 & 24.19 & 19.6 & 0.038 \\ \cmidrule(lr){2-8}
S4\_D50\_n20 & sal & spline & sal & 0.93 & 24.24 & 19.9 & 0.943 \\
S4\_D50\_n20 & --- & --- & direct & 1.82 & 54.26 & 168.3 & 0.943 \\
\bottomrule
\end{tabular}
\end{table}

\end{appendix}

%%%%%%%%%%%%%%%%%%%%%%%%%%%%%%%%%%%%%%%%%%%%%%
%% Funding information, if any,             %%
%% should be provided in the                %%
%% funding section.                         %%
%%%%%%%%%%%%%%%%%%%%%%%%%%%%%%%%%%%%%%%%%%%%%%
\begin{funding}
The first author was supported by the Swiss National Science Foundation through the project P2GEP2-195898 and by resources of the Center for Targeted Machine Learning, University of California at Berkeley and the HPC at the Humboldt University of Berlin. The second author was supported by the `MAPPE’ project, Programma `PE GRINS -- Growing Resilinet, Inclusive and Sustainable’ (cod. PE0000018 CUP: J33C22002910001)
\end{funding}

%%%%%%%%%%%%%%%%%%%%%%%%%%%%%%%%%%%%%%%%%%%%%%
%\begin{supplement}
%\stitle{Title of Supplement A}
%\sdescription{Short description of Supplement A.}
%\end{supplement}
%\begin{supplement}
%\stitle{Title of Supplement B}
%\sdescription{Short description of Supplement B.}
%\end{supplement}

%%%%%%%%%%%%%%%%%%%%%%%%%%%%%%%%%%%%%%%%%%%%%%%%%%%%%%%%%%%%%
%%                  The Bibliography                       %%
%%                                                         %%
%%  imsart-???.bst  will be used to                        %%
%%  create a .BBL file for submission.                     %%
%%                                                         %%
%%  Note that the displayed Bibliography will not          %%
%%  necessarily be rendered by Latex exactly as specified  %%
%%  in the online Instructions for Authors.                %%
%%                                                         %%
%%  MR numbers will be added by VTeX.                      %%
%%                                                         %%
%%  Use \cite{...} to cite references in text.             %%
%%                                                         %%
%%%%%%%%%%%%%%%%%%%%%%%%%%%%%%%%%%%%%%%%%%%%%%%%%%%%%%%%%%%%%

%% if your bibliography is in bibtex format, uncomment commands:
\bibliographystyle{imsart-number} % Style BST file (imsart-number.bst or imsart-nameyear.bst)
\bibliography{biblio}       % Bibliography file (usually '*.bib')

\end{document}